\newenvironment{reminder}[1]{\smallskip

\noindent {\bf Reminder of #1 }\em}{\smallskip}
\newtheorem{theorem}{Theorem}[section]
\newtheorem{proposition}{Proposition}
\newtheorem{corollary}{Corollary}[section]
\newtheorem{lemma}{Lemma}[section]
\newtheorem{claim}{Claim}
\let\c@fconjecture\c@conjecture
\let\c@fconj\c@conj
\def \eps {\varepsilon}
\newcommand{\ignore}[1]{}
\def\tO{\tilde{O}}
\newcommand{\UndirectedDiameter}[0]{\textsc{UndirectedDiameter}}
\newcommand{\MaxDiameter}[0]{\textsc{MaxDiameter}}
\newcommand{\MinDiameter}[0]{\textsc{MinDiameter}}
\newcommand{\RoundtripDiameter}[0]{\textsc{RoundtripDiameter}}
\newcommand{\UndirectedRadius}[0]{\textsc{UndirectedRadius}}
\newcommand{\SourceRadius}[0]{\textsc{SourceRadius}}
\newcommand{\MaxRadius}[0]{\textsc{MaxRadius}}
\newcommand{\MinRadius}[0]{\textsc{MinRadius}}
\newcommand{\RoundtripRadius}[0]{\textsc{RoundtripRadius}}
\begin{document}

\title{Approximation and Fixed Parameter Subquadratic Algorithms for Radius and Diameter}
\author{Amir Abboud \\ Stanford University \\ \texttt{\small abboud@cs.stanford.edu} \and  Virginia Vassilevska Williams \\ Stanford University \\ \texttt{\small virgi@cs.stanford.edu} \and Joshua Wang \\ Stanford University \\ \texttt{\small jrwang@cs.stanford.edu}}
\date{}
\maketitle

\begin{abstract}
The radius and diameter are fundamental graph parameters. They are defined as the minimum and maximum of the eccentricities in a graph, respectively, where the eccentricity of a vertex is the largest distance from the vertex to another node. 
In directed graphs, there are several versions of these problems. For instance, one may choose to define the eccentricity of a node in terms of the largest distance into the node, out of the node, the sum of the two directions (i.e. roundtrip) and so on. Each of the versions is well-motivated in a variety of applications. All versions of diameter and radius can be solved via solving all-pairs shortest paths (APSP), followed by a fast postprocessing step. Solving APSP, however, on $n$-node graphs requires $\Omega(n^2)$ time even in sparse graphs, as one needs to output $n^2$ distances. 
In this paper, we address the question: when can diameter and radius in sparse graphs be solved in truly subquadratic time, and when is such an algorithm unlikely?

Motivated by known and new negative results on the impossibility of computing these measures exactly in general graphs in truly subquadratic time, under plausible assumptions, we search for \emph{approximation} and \emph{fixed parameter subquadratic} algorithms, and for reasons why they do not exist.

Our results include: 
\begin{itemize}
\item Truly subquadratic approximation algorithms for most of the versions of Diameter and Radius with \emph{optimal} approximation guarantees (given truly subquadratic time), under plausible assumptions.
In particular, there is a $2$-approximation algorithm for directed Radius with one-way distances that runs in $\tilde{O}(m\sqrt{n})$ time, while a $(2-\delta)$-approximation algorithm in $O(n^{2-\eps})$ time is unlikely.
\item On graphs with treewidth $k$, we can solve the problems in $2^{O(k\log{k})}n^{1+o(1)}$ time. We show that these algorithms are near optimal since even a $(3/2-\delta)$-approximation algorithm that runs in time $2^{o(k)}n^{2-\eps}$ would refute the plausible assumptions.
\end{itemize}

%
%

\end{abstract}
\thispagestyle{empty}
\newpage
\setcounter{page}{1}

\newpage

\section{Introduction}

Two of the most basic graph parameters are radius and diameter. The diameter of an undirected graph is the largest distance, and the radius is the smallest distance from a node to the furthest node from it. Intuitively, the node that achieves the radius, the so-called center of the graph, is close to all other nodes.
In directed graphs, depending on the application, one may choose to pick whether the center is close in the sense that it has short paths {\em to} other nodes (``source''), {\em from} other nodes (``target''), or even to and then back from other nodes (``roundtrip'').
That is, there are several natural definitions of both radius and diameter for directed graphs. All these variants are well-studied \cite{chung, Hakimi, ChepoiD94,eppstein-planar-jv, aingworth, Corneil01, Chepoi02, Dvir04, BenMoshe, BeKa07, WN08, Yuster10, Chan12, FHW12, WeYu13, RV13, diametersoda14, AGV15, BCH+15} (and many others).
In fact, even estimating the diameter and radius of a network efficiently is useful in practical applications (e.g. the analysis of social networks) and serves as a basic primitive.

Although the problems are very well-studied, essentially the fastest exact algorithms for both Diameter and Radius compute all pairs shortest paths (APSP) and then run a fast postprocessing procedure. Unfortunately, any algorithm for APSP necessarily takes $\Omega(n^2)$ time in $n$-node graphs {\em regardless of the sparsity}, since its output is quadratic. However for Radius and Diameter, whose output is a single integer, it is unclear why $\Omega(n^2)$ time in sparse graphs ($O(n)$ edges) is necessary. In this paper we address the following question.

\begin{center}
{\em When can Diameter and Radius in sparse graphs be solved in $O(n^{2-\eps})$ time for $\eps>0$?}
\end{center}

We provide both algorithms and conditional lower bounds. The study of the above question has a clear practical motivation: quadratic time on real-world graphs is infeasible; ideally, we desired a near-linear time algorithm. There is also a strong theoretical motivation: computing these parameters is one of the most basic graph problems and hence understanding its exact time complexity is of major importance.



In the rest of this paper, we say that a bound is \emph{subquadratic} if it can be bounded by $O(n^{2-\eps})$ for some $\eps>0$, while upper bounds of the form $n^{2-o(1)}$ are only \emph{mildly subquadratic}.

\paragraph{Barriers.}
Recent work has revealed convincing evidence that solving Diameter in subquadratic time might not be possible, even in undirected graphs.
Roditty and Vassilevska W. \cite{RV13} showed that an algorithm that can distinguish between diameter $2$ and $3$ in an undirected sparse graph in subquadratic time refutes the following widely believed conjecture.

\paragraph{The Orthogonal Vectors Conjecture: } There is no $\eps>0$ such that for all $c \geq 1$, there is an algorithm that given two lists of $n$ boolean vectors $A,B \subseteq \{0,1\}^d$ where $d=c\log{n}$ can determine if there is an orthogonal pair $a\in A, b \in B$, in $ O(n^{2-\eps})$ time.

\medskip


The problem in the above conjecture is called the Orthogonal Vectors (OV) problem. The best known algorithm for it runs in mildly subquadratic $n^{2-1/O(\log{(d/\log{n})})}$ time~\cite{AWY15}. Williams~\cite{W04} showed that the OV conjecture is implied by the well-known Strong Exponential Time Hypothesis (SETH) of Impagliazzo, Paturi and Zane~\cite{ipz2,ipz1}. Nowadays many papers base the hardness of problems on 
SETH and the OV conjecture. This holds
both for NP-hard problems (e.g.~\cite{cygan}), as well as problems in P~\cite{PW10,AV14,AVW14,Bring14,BI15,ABV15,BK15}.
%
%

For the Radius problem, the only known barriers to solving the problem exactly are based on other conjectures. Recent work~\cite{AGV15} shows that if the radius of a possibly dense graph can be computed in truly subcubic time, $O(n^{3-\eps})$ for $\eps>0$, then APSP also admits a truly subcubic algorithm. Such an algorithm for APSP has long eluded researchers, and it is often conjectured that it does not exist (e.g. \cite{VW10,AV14,RZ04,Saha14}). For dense graphs the latter result essentially settles the question of computing Radius exactly. For sparse graphs, however, only a much weaker result is known: any $T(m)$ time algorithm for the radius of an $m$-edge graph can be used to find a triangle in an $m$-edge graph in $O(T(m))$ time~\cite{AGV15}. The limit of current techniques for triangle finding is $O(m^{4/3})$~\cite{AlYuZw97} (if the matrix multiplication exponent is $2$), and hence this result gives some reason to believe that obtaining a very fast algorithm for Radius in sparse graphs would be hard. Nevertheless, this result says nothing about the existence of an $O(n^{2-\eps})$ time algorithm.

A natural approach to prove Radius limitations in sparse graphs is to base them on the OV conjecture.
However, such a lower bound has remained elusive~\cite{AGV15,BCH14}.
This is due to the following type mismatch. The OV problem asks for the existence of a pair of vectors with a certain property, just as Diameter asks for the existence of a pair of nodes that are far, i.e. both are of type $\exists x\exists y$.
Meanwhile, Radius asks for the existence of a node such that {\em all} nodes are close, i.e. $\exists x\forall y$.
This quantifier disagreement is the difficulty of proving a lower bound based on OV, and suggests the following natural and plausible variant of the OV conjecture.

\paragraph{The Hitting Set Conjecture: } There is no $\eps>0$ such that for all $c \geq 1$, there is an algorithm that given two lists of $n$ subsets of a universe $U$ of size $c\log{n}$, can decide in $ O(n^{2-\eps})$ time if there is a set in the first list that intersects every set in the second list, i.e. a ``hitting set''.

\medskip

We call the problem in this conjecture the Hitting Set Existence (HSE) problem. An equivalent version of the HSE problem is as follows: given two lists $A,B \subseteq \{0,1\}^d$, determine whether there is a vector $a \in A$  that is not orthogonal to any vector $b \in B$.
The HSE problem can also be solved in mildly subquadratic $n^{2-1/O(\log{(d/\log{n})})}$ time~\cite{AWY15}, where  $d=|U|$.
The HS conjecture is an offline version of folklore conjectured lower bounds on the hardness of classic online problems such as set intersection and partial match studied for instance by Patrascu~\cite{patrascuroditty}.
We discuss these conjectures in Appendix~\ref{app:equiv} and also show that the OV conjecture is implied by the HS conjecture.

With the following theorem, we complete the picture (at least conditionally) for the exact computation of Radius and Diameter in undirected sparse graphs. 
\begin{theorem}
\label{thm:radius}
If for some $\eps>0$, there is an algorithm that can determine if a given undirected, unweighted graph with $n$ nodes and $O(n)$ edges has radius $2$ or $3$ in $O(n^{2-\eps})$ time, then the HS Conjecture is false.
\end{theorem}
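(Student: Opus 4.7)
The plan is to reduce the Hitting Set Existence (HSE) problem to distinguishing radius $2$ from radius $3$ in an undirected, unweighted sparse graph. Given an HSE instance consisting of two families $A,B$ of $n$ subsets of a universe $U$ of size $d=c\log n$, I will construct a graph $G$ on $O(n)$ vertices and $\tilde O(n)$ edges whose radius is $2$ if some $a^*\in A$ hits every $b\in B$, and is $3$ otherwise. Any $O(|V(G)|^{2-\eps})$ algorithm for the radius $2$-vs-$3$ problem then solves HSE in $O(n^{2-\eps'})$ time, contradicting the HS Conjecture.

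The construction mirrors the $\exists\forall$ structure of HSE with three layers. For each $a\in A$ introduce a \emph{center candidate} $c_a$; for each coordinate $i\in U$ a vertex $d_i$; and for each $b\in B$ a \emph{test} vertex $t_b$. Put an edge $c_ad_i$ iff $a_i=1$, and an edge $t_bd_i$ iff $b_i=1$, so that $d(c_a,t_b)\le 2$ exactly when $a$ hits $b$. Finally add three auxiliary vertices: $z$ adjacent to every $c_a$; $w$ adjacent to every $c_a$ and every $d_i$; and a pendant $z'$ whose only neighbor is $z$. The total number of edges is $O(nd+n)=\tilde O(n)$.

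The correctness has two directions. If $a^*$ is a hitting set, the eccentricity of $c_{a^*}$ is $2$: the distances to $z$ and $w$ are $1$, to $z'$ is $2$ (via $z$), to every $d_i$ is at most $2$ (directly or via $w$), to every other $c_{a'}$ is $2$ (via $z$), and to every $t_b$ is $2$ through the coordinate via which $a^*$ hits $b$; hence $\operatorname{rad}(G)\le 2$. If no hitting set exists, every vertex has eccentricity at least $3$: for each $c_a$ one picks a $b$ not hit by $a$ and checks by a brief case analysis on the neighbors of $c_a$ (namely $\{z,w\}\cup\{d_i:a_i=1\}$) that no length-$2$ path reaches $t_b$; for each of $d_i$, $t_b$, $z$, $w$, $z'$ one exhibits an explicit witness at distance $\ge 3$ (for example $d(d_i,z')=d(w,z')=3$, $d(t_b,z)=d(z,t_b)=3$, and $d(z',d_i)=3$). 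Since $\operatorname{ecc}(w)=3$ in every instance, we also always have $\operatorname{rad}(G)\le 3$, so the radius is exactly $2$ or exactly $3$ according to the HSE answer.

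The main obstacle is designing the gadget $(z,w,z')$ so that \emph{only} $c_a$-vertices can ever achieve eccentricity $2$. Without care, a hub like $w$, which is within distance $2$ of essentially everything, would itself have eccentricity $2$ and defeat the reduction irrespective of the HSE answer. The pendant $z'$ is introduced precisely to inflate the eccentricities of $w$, $z$, and every $d_i$ to $3$ through the unique short route $w\to c_a\to z\to z'$, while still being reachable in $2$ steps from a genuine hitting-set candidate $c_{a^*}$ via $z$. Verifying that this single gadget simultaneously forces every non-$c_a$ vertex to have eccentricity at least $3$ without ever blocking a hitting-set $c_{a^*}$ from eccentricity exactly $2$ is the delicate step.
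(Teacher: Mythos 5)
Your construction is, up to renaming ($w \leftrightarrow x$, $z \leftrightarrow y$, $z' \leftrightarrow z$), the same gadget the paper uses, and your two-sided correctness argument matches the paper's. The only thing left to add is the routine sparsification the paper performs at the end---pad with $\Theta(n\log n)$ degree-one vertices hanging off a new node that is adjacent to all of $A$---so that the resulting graph on $N$ nodes has $O(N)$ edges as the theorem's statement requires, after which a subquadratic-in-$N$ algorithm still solves HSE in $O(n^{2-\eps'})$ time.
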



%

\paragraph{Overcoming the barriers.} 
The rest of the paper tries to obtain meaningful positive results that overcome the barriers above.
We consider two of the most successful approaches for coping with NP-hard problems: \emph{approximation} and \emph{parameterization}.
In the first approach, we will address questions of the form: what is the smallest constant $c$ such that we can get a $c$-approximation algorithm for Diameter and Radius in directed and undirected graphs in $O(n^{2-\eps})$ time?
In the second approach, we will consider natural parameterizations of Radius and Diameter such as the treewidth of the input graph, and ask whether there is an $O(f(tw)\cdot n^{2-\eps})$ time, or \emph{fixed parameter subquadratic}, algorithm for the problems, and if so, for what functions $f$.

The positive results we obtain in the two parts of our work (corresponding to the two approaches) will use a disjoint set of tools.
However, in both approaches, the upper bounds will be matched (or nearly matched) by lower bounds that are obtained from similar constructions.


\subsection{Approximation algorithms}

In undirected graphs, both Diameter and Radius can be $2$-approximated by a simple linear time algorithm: pick any node and report the largest distance from it.
Aingworth et al.~\cite{aingworth} obtained an $\tilde{O}(n^2+m\sqrt n)$ time almost-$3/2$-approximation algorithm for Diameter and Radius in undirected graphs. Roditty and Vassilevska W.~\cite{RV13} obtained a randomized almost-$3/2$-approximation algorithm with runtime $\tilde{O}(m\sqrt n)$, and Chechik et al.~\cite{diametersoda14} derandomized the algorithm and obtained a genuine $3/2$-approximation algorithm running in time $\tilde{O}(mn^{2/3})$. As previously mentioned,~\cite{RV13} also showed that any $O(n^{2-\eps})$ time algorithm that $(3/2-\delta)$-approximates the diameter (for $\eps,\delta>0$) breaks the OV conjecture (as it would distinguish between graphs of diameter $2$ and $3$). We show that the known approximation algorithms for Radius are also likely tight. An immediate corollary of Theorem~\ref{thm:radius} is:
\begin{corollary} A subquadratic $(3/2-\delta)$-approximation algorithm for {\em Radius}, for some $\delta>0$, refutes the Hitting Set conjecture.\end{corollary}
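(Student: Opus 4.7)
The plan is to reduce the distinguishing problem of Theorem~\ref{thm:radius} directly to a subquadratic $(3/2-\delta)$-approximation algorithm for Radius, so that any such algorithm would contradict the Hitting Set conjecture.

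First I would recall the standard definition of a $c$-approximation for Radius: on an input whose true radius is $R$, the algorithm returns a value $\tilde R$ with $R \le \tilde R \le c R$. Suppose we are given an undirected unweighted graph $G$ on $n$ nodes and $O(n)$ edges promised to have radius either $2$ or $3$, and suppose we have a $(3/2-\delta)$-approximation running in $O(n^{2-\eps})$ time. Run it on $G$ to obtain $\tilde R$. If the true radius is $2$, then $\tilde R \le 2(3/2 - \delta) = 3 - 2\delta < 3$. If the true radius is $3$, then $\tilde R \ge 3$. Therefore testing $\tilde R < 3$ distinguishes the two cases in time $O(n^{2-\eps})$.

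By Theorem~\ref{thm:radius}, this distinguishing algorithm refutes the Hitting Set conjecture, completing the proof. There is no real obstacle here: the corollary is a direct consequence of the gap between $2$ and $3$, which is exactly the gap a $(3/2-\delta)$-approximation must resolve. The only minor care needed is to note that the approximation guarantee is multiplicative on the true value, so the separation threshold of $3$ works robustly for any constant $\delta > 0$.
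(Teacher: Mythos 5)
Your proof is correct and matches the paper's (implicit) reasoning exactly: the paper presents this as an immediate corollary of Theorem~\ref{thm:radius}, and your calculation $2(3/2-\delta)=3-2\delta<3$ is precisely the gap argument that makes a $(3/2-\delta)$-approximation distinguish radius $2$ from radius $3$.
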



%

The eccentricity of a node is the largest distance out of it. Diameter is the maximum eccentricity, and radius is the minimum.
Even though both undirected Diameter and Radius can be $3/2$-approximated in subquadratic time, the best known subquadratic algorithm for estimating \emph{all} the eccentricities, by Chechik et al. \cite{diametersoda14}, only gives a $5/3$ approximation.
We show that this result is tight conditioned on the OV conjecture.
 
 \begin{theorem}
 \label{thm:allecc}
 A $(5/3-\delta)$ approximation algorithm for the eccentricities of all nodes in undirected sparse graphs that runs in subquadratic time refutes the Orthogonal Vectors Conjecture.
\end{theorem}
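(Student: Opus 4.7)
The plan is to reduce \textsc{OV} to the promise problem of distinguishing ``every eccentricity is at most $3$'' from ``some eccentricity is at least $5$'' in a sparse undirected graph, which any $(5/3-\delta)$-approximation solves because the possible estimate intervals for true eccentricities $3$ and $5$ are disjoint once the ratio strictly drops below $5/3$. Given an \textsc{OV} instance $A,B\subseteq\{0,1\}^d$ with $|A|=|B|=n$ and $d=c\log n$, I will construct in $\tO(n)$ time a sparse graph on $O(n)$ vertices in which a distinguished family of ``probe'' vertices $\{p_a\}_{a\in A}$ all have eccentricity exactly $3$ in the \textsc{NO} case, while some probe has eccentricity exactly $5$ in the \textsc{YES} case. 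A subquadratic $(5/3-\delta)$-approximation on this graph would then solve \textsc{OV} in $O(n^{2-\eps})$ time, contradicting the \textsc{OV} Conjecture.

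Construction: introduce a vertex $p_a$ for each $a\in A$, a vertex $v_b$ together with a pendant $r_b$ for each $b\in B$, a coordinate vertex $c_k$ for each $k\in[d]$, a ``coordinate hub'' $h$ adjacent to every $c_k$, and a ``probe hub'' $h_A$ adjacent to every $p_a$. Add the incidence edges $p_a c_k$ whenever $a_k=1$ and $c_k v_b$ whenever $b_k=1$, along with the pendant edges $v_b r_b$; assume WLOG that every input vector is nonzero. The key identity to establish is
\[
d(p_a,r_b) \;=\; \begin{cases}3 & \text{if } \langle a,b\rangle \neq 0,\\ 5 & \text{if } a\perp b.\end{cases}
\]
The upper bounds come from the walks $p_a\to c_k\to v_b\to r_b$ (for any shared coordinate) and $p_a\to c_k\to h\to c_{k'}\to v_b\to r_b$ (valid whenever $a,b\neq 0$). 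Combined with the routine bounds $d(p_a,p_{a'})\le 2$ via $h_A$, $d(p_a,c_k)\le 3$ via $h$, and $d(p_a,v_b)\le 2$ in the \textsc{NO} case, this yields $\mathrm{ecc}(p_a)=3$ for every probe when the instance is \textsc{NO}, and $\mathrm{ecc}(p_{a^*})=5$ for the probe of an orthogonal witness $a^*$ when it is \textsc{YES}. The graph has $O(n+d)=O(n)$ vertices and $O(nd)=\tO(n)$ edges and is built in linear time.

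The main obstacle is proving $d(p_a,r_b)\ge 5$ for orthogonal pairs, equivalently $d(p_a,v_b)\ge 4$ since $r_b$'s only neighbor is $v_b$. I will rule out walks of length $2$ and $3$ by case analysis: a length-$2$ walk requires a common neighbor of $p_a$ and $v_b$, which must be some $c_k$ with $a_k=b_k=1$ (contradicting $a\perp b$); a length-$3$ walk must end through a neighbor of $v_b$, which is either $r_b$ (impossible since $N(r_b)=\{v_b\}$) or some $c_{k'}$ with $b_{k'}=1$, in which case the middle vertex must lie in $N(p_a)\cap N(c_{k'})=(\{c_k:a_k=1\}\cup\{h_A\})\cap(\{p_{a'}:a'_{k'}=1\}\cup\{v_{b'}:b'_{k'}=1\}\cup\{h\})$, which is empty because distinct coordinate vertices are not mutually adjacent and $h_A$ is attached only to probes. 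The critical design choice is placing $h$ only on coordinates and $h_A$ only on probes, so no composition of hub edges shortcuts the probe-to-$v_b$ walk below length $4$; the remaining items (sparsity, running time, and reading off the \textsc{OV} answer from the approximate eccentricities of the probes) are routine bookkeeping.
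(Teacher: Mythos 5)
Your proof is correct and takes essentially the same approach as the paper: a three-layer gadget from $A$ to coordinates to $B$, a pendant copy of each $B$-node to amplify the gap, and hub vertices to cap the distances from the probes to everything except the relevant pendants at $3$. The only cosmetic difference is that the paper adds an edge $\{x,y\}$ between its two hubs (your $h_A$ and $h$), which you omit and compensate for by routing $p_a \to c_{k'} \to h \to c_k$; both variants give the same $3$ vs.\ $5$ gap.
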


This completes the picture for undirected graphs and we now turn our attention to directed graphs, where much less was known before our work. 
To better highlight the novelty of this work, we will only present our results for Radius on directed graphs (see Table~\ref{tab:res-rad}).
Our results for Diameter can be found in Table~\ref{tab:res-diam}.

\paragraph{One-way distances.}
The first definition of Radius on directed graphs, Source Radius, is the natural extension of the undirected Radius definition: $\min_x \max_v d(x,v)$.
While in undirected graphs a $2$-approximation is trivial, this is no longer the case for directed graphs. In undirected graphs, we can claim for arbitrary $u$ and for all $x$, by the triangle inequality, $d(u,x)\leq d(u,c)+d(c,x)=d(c,u)+d(c,x)\leq 2R$. Since in directed graphs $d(u,c)$ is unrelated to $d(c,u)$, no approximation is guaranteed. 
Even the known $3/2$-approximation algorithms~\cite{RV13,diametersoda14} do not work since for directed graphs all that they can guarantee is that they compute the eccentricity of some node $u$ with either $d(u,c)\leq R/2$ or $d(c,u)\leq R/2$, and in the latter case no approximation can be guaranteed.
Our first algorithmic contribution is a new subquadratic $2$-approximation algorithm for Source Radius overcoming the above issues with a two level sampling approach.

\begin{theorem}
Given a directed unweighted graph on $n$ nodes and $m$ edges, there is an algorithm that outputs $R^*$ such that $R\leq R^*\leq 2R$, and runs in time ${O}(m\sqrt{n}\log^2{n})$.
\end{theorem}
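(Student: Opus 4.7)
The plan is to use a two-level random sampling scheme. In the first level, sample $S \subseteq V$ of size $k = \Theta(\sqrt{n}\log n)$ uniformly at random, and from each $s \in S$ run out-BFS to record $d(s,v)$ for all $v$ and $ecc^+(s) := \max_v d(s,v)$. Let $c$ be a true center, so $ecc^+(c) = R$, and consider the ``inbound ball'' $B^-(c,R) := \{v : d(v,c) \leq R\}$, which contains $c$. If some $s \in S$ lies in $B^-(c,R)$, then $d(s,c) \leq R$, so by the triangle inequality $ecc^+(s) \leq d(s,c) + ecc^+(c) \leq 2R$, and $\min_{s \in S} ecc^+(s)$ already suffices as a valid output.

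For the harder case, where $S \cap B^-(c,R) = \emptyset$, I add a second level: define $\rho(v) := \min_{s \in S} d(s,v)$ and set $w := \arg\max_v \rho(v)$. Run in-BFS from $w$ to compute $d(v,w)$ for all $v$, let $T$ be the $k$ vertices with smallest $d(\cdot, w)$, and run out-BFS from each $t \in T$. The algorithm outputs $R^*$ equal to the minimum of all eccentricities computed over $S \cup T$.

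For correctness in the hard case, $S \cap B^-(c,R) = \emptyset$ implies $\rho(c) > R$, so $\rho(w) \geq \rho(c) > R$, and therefore $S \cap B^-(w,R) = \emptyset$ as well. A standard sampling-and-union-bound argument over the $O(n^2)$ fixed sets of the form $B^-(u,r)$ then shows that with high probability $|B^-(w,R)| < \sqrt{n} \leq k$. Since $c \in B^-(w,R)$ (because $d(c,w) \leq R$), we conclude $c \in T$, so one of the out-BFS calls in the second level is from $c$ itself, yielding $\min_{t \in T} ecc^+(t) \leq ecc^+(c) = R$. Combined with the easy-case bound, $R^* \leq 2R$; and $R^* \geq R$ is immediate because $R^*$ is always equal to $ecc^+$ of some vertex.

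The running time is $|S| + 1 + |T| = O(\sqrt{n}\log n)$ BFS calls of cost $O(m)$ each. The main obstacle is the adaptive dependence of $w$ on the random set $S$: one cannot directly apply a sampling lemma to the random set $B^-(w,R)$, so the argument must union-bound over all $n$ possible choices of $w$ together with all $n$ possible values of the radius threshold $r$, which is what forces the $\log n$ blowup in $|S|$ and, after combining with the $\sqrt{n}\log n$ size of $T$, produces the $\log^2 n$ factor in the claimed runtime.
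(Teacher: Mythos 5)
Your proposal is correct and follows essentially the same approach as the paper: sample $\Theta(\sqrt n\log n)$ sources, search outward from them, pick the vertex $w$ farthest from the sample, search backward from $w$, and search outward from the nodes closest to $w$ backwards. The only cosmetic differences are that you take the second-level set $T$ of size $\Theta(\sqrt n\log n)$ rather than $\sqrt n$, and you phrase the union bound over the fixed family $\{B^-(u,R)\}_{u\in V}$ rather than over the family of ``$\sqrt n$ closest-backwards'' sets, but both arguments conclude $|B^-(w,R)|<\sqrt n$ and hence that the true center lies in the second sample.
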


Our algorithm is lightweight and easy to implement. 
Theorem~\ref{thm:radius} implies that a subquadratic algorithm for Source Radius is not likely to have an approximation guarantee better than $3/2$ and makes one wonder whether a $3/2$ guarantee is possible in subquadratic time, as is the case in undirected graphs.
However, using the directed edges we manage to increase the gap in the lower bound construction and prove that the approximation factor of our algorithm is optimal for a subquadratic algorithm under the HS conjecture.

\begin{theorem}
\label{thm:SourceRad}
A $(2-\delta)$-approximation algorithm for Source Radius in sparse graphs that runs in subquadratic time refutes the Hitting Set Conjecture.
\end{theorem}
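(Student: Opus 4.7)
The plan is to reduce from the HSE problem: given two lists $A, B \subseteq \{0,1\}^d$ of $n$ vectors each with $d = c \log n$, build a sparse directed graph $G_{A,B}$ on $\tO(n)$ vertices and edges such that $G_{A,B}$ has Source Radius at most $r$ when some $a^* \in A$ hits every $b \in B$, and at least $2r$ otherwise, for some constant $r$. Given a hypothetical $(2-\delta)$-approximation algorithm for Source Radius running in time $O(m^{2-\eps})$, applying it to $G_{A,B}$ and thresholding the output at $(2-\delta)\cdot r$ would decide HSE in $\tO(n^{2-\eps})$ time, contradicting the HS Conjecture.

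For the construction I would adapt the reduction behind Theorem~\ref{thm:radius} to the directed setting and push the gap from $3/2$ to $2$ by exploiting directedness. The core is a three-layer directed graph with vertex sets $\{v_a\}_{a \in A}$, $\{v_c\}_{c \in [d]}$, $\{v_b\}_{b \in B}$ and directed edges $v_a \to v_c$ iff $a[c]=1$ together with $v_c \to v_b$ iff $b[c]=1$; then $d(v_a, v_b) = 2$ precisely when $a$ hits $b$. I would augment this with an auxiliary gadget of $O(n)$ vertices and edges providing a ``return path'' from the $B$-layer back to the $A$-layer so that each $v_a$ can reach every other $v_{a'}$. If $a^*$ is a hitting set, $v_{a^*}$ reaches each $v_b$ in $2$ steps via a shared coordinate and each remaining vertex in at most $r$ steps via the gadget, so the Source Radius is at most $r$. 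If no hitting set exists, for every candidate center $x$ I would exhibit a target at distance $\geq 2r$: for non-hitting $v_a$ the witness is a $v_b$ not hit by $a$, which can only be reached by traversing the entire HS-encoding structure a second time.

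The main obstacle is designing the auxiliary gadget so that no gadget vertex becomes a small-eccentricity center in its own right. The naive fix---introducing a single hub $u$ with $v_b \to u$ and $u \to v_{a'}$ edges---gives $u$ itself an outgoing path $u \to v_a \to v_c \to v_b$ of length only three, collapsing the gap to at most $4/3$. Avoiding this requires a more refined gadget, for instance splitting the return path across several directed auxiliary vertices whose outgoing neighborhoods are carefully restricted, so that every such vertex's outgoing eccentricity is itself at least $2r$ in the no-HS case while the gadget still lets $v_{a^*}$ close the distance to every $v_{a'}$ within $r$ steps. Carrying out this balance, and verifying the no-HS lower bound of $2r$ case by case over $v_a$, $v_b$, $v_c$, and the gadget vertices, is the technical heart of the proof; the sparsity bound $m=\tO(n)$ and the $r$ versus $2r$ gap then deliver the theorem directly.
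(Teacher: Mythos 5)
Your high-level plan is right: reduce from HSE via the three-layer directed graph $A \to U \to B$, use path-length amplification to push the gap to a ratio of $2$, and make sure the auxiliary gadget does not itself become a good center. You also correctly spot that a naive single-hub ``return path'' collapses the gap. But the fix you sketch --- a refined multi-vertex return path from the $B$-layer back to the $A$-layer --- is not where the paper goes, and it is not clear it can be made to work: any gadget that gives $B$-vertices an outgoing route back into the encoding tends to create low-eccentricity centers somewhere along that route, for exactly the reason you already noticed. That is a genuine gap; you have reached the hard part but not solved it.

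The paper's construction avoids the problem by never adding a return path from $B$ at all. Instead it (i) attaches to each $b \in B$ a directed \emph{tail} of length $t-1$, so that the genuinely far targets are the tail-ends $b_{t-1}$; (ii) attaches to each $a \in A$ a directed \emph{prefix} path $a_1 \to \cdots \to a_{t-2} \to a$; and (iii) adds a single hub $x$ with edges $(a,x)$ for every $a \in A$ and edges $(x, a_1)$ into the prefixes only. Thus $a^*$ reaches any other $a'$ in $t$ steps via $x$ and the prefix, reaches $b$ in $2$ steps via a shared $u$ (if it hits $b$), and reaches $b_{t-1}$ in $t+1$ steps, giving radius $t+1$ in the yes-case. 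In the no-case, for any candidate center the witness is some $b_{t-1}$: an $a \in A$ that misses $b$ cannot reach $b$ in $2$ steps, so any route detours through $x$ and pays at least $t$ before it even re-enters $A$, giving distance $\geq 2t$; the hub $x$ itself is at distance $2t$ from every $b_{t-1}$; and nodes in $U \cup B$ and the $b_i$'s cannot reach $x$ at all. Choosing $t$ so that $(2-\delta)(t+1) < 2t$ (and setting $t$ comparable to $|U|$ so the graph stays sparse with $O(n|U|)$ nodes and edges) then yields the refutation. So the missing idea is to make $B$ \emph{harder to reach} rather than to provide a path \emph{out of} $B$; your plan cannot be completed as described without this reversal.
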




\paragraph{Roundtrip and longest distances.} 
The roundtrip distance between $u$ and $v$ is the distance from $u$ to $v$ plus the distance from $v$ to $u$, i.e. the sum of both one-way distances.
The Roundtrip Radius of the graph is $\min_{x} \max_v d(x,v)+d(v,x)$.
The Max-distance between $u$ and $v$ is the largest of the two one-way distances.
The Max Radius of the graph is $\min_x\max_v \max\{d(x,v),d(v,x)\}$.

These definitions are natural ways to turn the distances in directed graphs into a metric.
This means that by picking any node as the center we obtain a $2$-approximation near-linear time algorithm for Roundtrip Radius and Max Radius.
Moreover, Cowen and Wagner \cite{CW99,CW00} observed that many of the techniques for approximating distances in undirected graphs can be adapted to handle roundtrip distances, which also led to the roundtrip-spanners of Roditty, Thorup, and Zwick \cite{RTZ08}.
This seems to suggest that these versions of Radius should be more like the undirected version where a $3/2$-approximation is possible in subquadratic time, and not like Source Radius where the $2$ factor is tight.
Quite surprisingly, via a delicate reduction, we were able to obtain a gap of $2$ in the lower bound constructions, and show that anything better than the trivial $2$-approximation is unlikely to run in subquadratic time.

\begin{theorem}
A $(2-\delta)$-approximation algorithm for Roundtrip Radius or Max Radius that runs in $O(m^{2-\eps})$ time, for some $\eps,\delta>0$, refutes the Hitting Set Conjecture.\label{thm:roundtrip}
\end{theorem}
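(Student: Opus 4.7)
The plan is to reduce from the Hitting Set Existence (HSE) problem: given lists $A, B \subseteq 2^U$ with $|A| = |B| = n$ and $|U| = c\log n$, decide whether some $a^* \in A$ intersects every $b \in B$. From such an instance I would build a sparse directed graph $G$ with $\tilde{O}(n)$ vertices and edges such that (i) if a hitting set exists, then the roundtrip (resp.\ max) radius of $G$ is at most $r$, and (ii) otherwise it is at least $2r$. A $(2-\delta)$-approximation algorithm running in $O(m^{2-\eps})$ time would then distinguish the two cases in $\tilde{O}(n^{2-\eps})$ time, refuting the Hitting Set Conjecture.

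The graph has three main vertex classes: $V_A = \{x_a : a \in A\}$, $V_B = \{y_b : b \in B\}$, and coordinate vertices $V_U = \{z_u : u \in U\}$, together with auxiliary gadget vertices. The core ``short'' edges are $x_a \to z_u$ when $u \in a$ and $z_u \to y_b$ when $u \in b$, together with the reverse edges $y_b \to z_u$ and $z_u \to x_a$; these yield $d(x_a, y_b), d(y_b, x_a) \leq 2$ precisely when $a \cap b \neq \emptyset$. I would then lengthen the short paths to $r$ hops in each direction by inserting dummy chains, and add a ``long detour'' of length $2r$ between every $(x_a, y_b)$ pair in both directions via a dedicated one-way chain disjoint from the bipartite structure. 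Auxiliary gadgets attached to each non-$V_A$ vertex ensure unconditional eccentricity at least $2r$, so the only effective centers are elements of $V_A$. In the YES case the padded short path yields radius $r$ from the hitting-set center $x_{a^*}$; in the NO case, for any $x_a$ an unhit pair $(a,b)$ forces the only surviving route to have length $2r$, giving radius at least $2r$.

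The main obstacle is keeping this factor-$2$ gap tight despite the metric-like triangle inequality satisfied by roundtrip distance, which already yields a trivial $2$-approximation from any fixed center. Three sorts of shortcut must be ruled out: (a) a center outside $V_A$, handled by the eccentricity-pumping gadgets; (b) a bypass of the long detour through $V_B$ or $V_U$, handled by making the detour a dedicated one-way chain with no side entries from the rest of the graph; and (c) a composition of short edges that accidentally produces a medium-length path between $x_a$ and an unhit $y_b$, handled by calibrating the padding length so that any path of length at most $r$ through the bipartite coordinate structure must use an intermediate coordinate actually shared between $a$ and $b$. For \RoundtripRadius this last point is the most delicate, since both directions contribute to the round-trip cost; I would route the forward and backward short paths through disjoint copies of $V_U$, so that no cycle short enough to violate the $2r$ bound exists between $x_a$ and an unhit $y_b$. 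The rest of the proof reduces to a case analysis verifying all pairwise distances in the YES and NO cases.
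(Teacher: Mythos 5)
Your plan takes a genuinely different route than the paper, and unfortunately it has a gap that I do not think is patchable without importing the paper's main idea.

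The paper's proof of this theorem (split into Lemma~\ref{lem:roundtrip} for \RoundtripRadius{} and Lemma~\ref{lem:max} for \MaxRadius{}) hinges on a \emph{flipped} gadget: between each $a\in A$ and each coordinate copy $u_C$ (and $u_D$) there is a directed $4$-cycle consisting of one direct edge and one $3$-path, and \emph{the orientation of that cycle depends on whether $(a,u)\in E$}. The crucial consequence is that every $a\in A$ is on a $4$-cycle with every $u_C,u_D$ no matter what, so a hitting-set node $a^\ast$ automatically has roundtrip distance $4$ to every coordinate gadget and to every $a'$ (using the preprocessing that $N(a)\not\subseteq N(a')$); there is no separate "detour" structure whose internal nodes could become far from the center. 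The center is then forced into $A$ by gluing \emph{two} copies of the gadget along $A$, so that $b_1\in B_1$ and $b_2\in B_2$ have roundtrip distance $\ge 8$. For \MaxRadius{} the paper does not reuse this construction at all; it modifies the \SourceRadius{} gadget of Theorem~\ref{thm:SourceRad} by adding edges from $U\cup B\cup\{b_i\}$ to $x$ so that every node can reach the candidate center quickly.

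Your proposal has no analogue of the flipped gadget and instead relies on a global "long detour" of length $2r$ in each direction between every $(x_a,y_b)$ pair. This is where it breaks. If the detour is a \emph{shared} one-way chain $p_1\to\cdots\to p_{2r-1}$ with $x_a\to p_1$ and $p_{2r-1}\to y_b$ for all $a,b$ (which you need for sparsity), then the internal vertices $p_i$ can only move forward along the chain into $V_B$ and then take $\Omega(r)$ more steps to return to $V_A$; so $d(p_1\to x_{a^\ast})\ge 3r$ and the roundtrip distance from the intended YES-case center $x_{a^\ast}$ to $p_1$ is $\Omega(r)$ larger than the claimed radius. The same objection applies to the "eccentricity-pumping" gadgets you attach to non-$V_A$ vertices: anything whose job is to make those vertices bad centers must itself stay within roundtrip distance $r$ of $x_{a^\ast}$, and your sketch gives no mechanism for that. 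If instead you make the detour \emph{dedicated} per pair, the graph has $\Theta(n^2)$ edges and the reduction no longer contradicts the Hitting Set Conjecture. Your item (c), where medium-length compositions through the bipartite middle layer could shortcut an unhit pair, is also only waved at; the paper's $3$-path/direct-edge asymmetry is precisely what forbids such shortcuts, and a symmetric bidirected $x_a\leftrightarrow z_u\leftrightarrow y_b$ structure does not. Finally, you present one construction for both \RoundtripRadius{} and \MaxRadius{} "resp.", but the constraints differ materially (roundtrip needs both directions long; max only needs the larger one long, but then every node must reach the candidate center cheaply), and the paper accordingly uses two unrelated constructions. As written, the proposal is missing the central idea that makes the YES-case work, and I don't see how to complete it along the lines you've sketched.
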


\paragraph{Min Radius.} Finally, we consider a less standard but quite intriguing variant of Radius where distance is the shorter of the two directions.
Formally, we define the Min-eccentricity of a node $c$ to be the maximum over nodes $v$ of $\min\{d(c\to v), d(v \to c)\}$.
The node with minimum Min-eccentricity is the Min-Center of the graph and its Min-eccentricity is the Min-radius.
This directed definition naturally models certain applications. 
For example, in a network representing geographic locations, the Min-center would be the optimal location to place a hospital  since it will allow for the fastest possible medical treatment (either by driving to the hospital or by having an ambulance drive from the hospital to the patient) for any location in the graph.
This is the only directed Radius version without a trivial linear time algorithm on a DAG\footnote{The Max and Roundtrip Radius are infinite on a DAG, and the Source Radius is the eccentricity of the first node in the topological order.}.

Although the problem becomes easy once we compute APSP, it is quite challenging to approximate to within any constant factor without knowing all the distances.
Intuitively, a node with Min-eccentricity $R$ could be very hard to distinguish from nodes that have infinite min-distance to a single node in the graph.
We give a linear time algorithm for this simpler task.

\begin{proposition}
There is an $O(m)$ time algorithm that can check if there is a node in a directed graph with $m$ edges that can reach or be reached from any other node.
Consequently, there is a factor $n$ approximation for Min-Radius in linear time. 
\end{proposition}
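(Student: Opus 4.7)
The plan is to reduce the existence question to a single linear scan of the condensation DAG. Compute the strongly connected components of $G$ in $O(m+n)$ time and form the condensation $H$, a DAG on $k \le n$ supernodes; fix any topological order $v_1, v_2, \ldots, v_k$ of $H$. A vertex $c$ of $G$ can reach or be reached from every other vertex of $G$ iff the SCC $v_i$ containing $c$ is \emph{central} in $H$, meaning every other SCC of $H$ is either a strict ancestor or a strict descendant of $v_i$ (within an SCC all pairs are mutually reachable, and across SCCs reachability coincides with reachability in $H$).

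The key structural observation is a topological-index characterization of centrality. Because every edge in $H$ goes forward in the topological order, any $v_j$ with $j<i$ reaches $v_i$ only via a path staying inside $V_{\le i}:=\{v_1,\ldots,v_i\}$, and by a standard DAG fact every vertex of $V_{\le i}$ reaches $v_i$ iff $v_i$ is the unique sink of the induced subgraph $H[V_{\le i}]$. Symmetrically, $v_i$ reaches every vertex of $V_{\ge i}$ iff $v_i$ is the unique source of $H[V_{\ge i}]$. Define $h(j)$ as the minimum topological index of an out-neighbor of $v_j$ in $H$ (or $k+1$ if $v_j$ has none), and $f(j)$ as the maximum topological index of an in-neighbor (or $0$ if $v_j$ has none). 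These two conditions translate directly into $\max_{j<i} h(j) \le i$ and $\min_{j>i} f(j) \ge i$, respectively.

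The algorithm is now immediate: compute $f$ and $h$ for all $j$ in $O(m)$ by iterating over the edges of $H$, compute a prefix maximum of $h$ and a suffix minimum of $f$ in $O(k)$, and test both inequalities at every index $i$. If some $i$ satisfies both, return any vertex of the SCC $v_i$; otherwise report that no such vertex exists. The total running time is $O(m+n)$, and the main (and essentially only) nontrivial ingredient is the topological-index characterization above; everything else is routine bookkeeping on $H$.

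For the factor-$n$ approximation of Min-Radius, run the above procedure. If no central SCC exists, no vertex can reach or be reached from every other, so the Min-Radius is $+\infty$ and we output $+\infty$. Otherwise we obtain a center $c$ for which, for every $u\ne c$, either $d(c,u)\le n-1$ or $d(u,c)\le n-1$; hence the Min-eccentricity of $c$ is at most $n-1$. Since the true Min-Radius is at least $1$ whenever $n\ge 2$, outputting this value (or the exact Min-eccentricity of $c$, obtainable by one forward and one backward BFS) gives an approximation ratio of at most $n$ in $O(m+n)$ time.
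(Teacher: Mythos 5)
Your proof is correct and takes essentially the same route as the paper: condense to the SCC DAG, fix a topological order, and observe that a supernode $v_i$ is reached by everything before it iff every earlier supernode has an out-edge landing at index at most $i$ (the paper phrases this as ``the first out-neighbor in topological order lies at or before $v_i$'', and maintains a running count rather than the prefix-max of your $h$, but the characterization is identical). Your symmetric $f$/suffix-min handles the ``reaches everything after it'' direction just as the paper does by symmetry. One small remark: the paper's proof only establishes the first sentence of the proposition, whereas you also spell out the factor-$n$ consequence; your argument there implicitly uses unweighted distances (min-eccentricity of a finite-eccentricity vertex is at most $n-1$), which is the right reading of the statement.
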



Finally, we consider approximation algorithms for Min-Radius on a DAG - which, in our opinion, is the most natural version of the question ``what is the center of a DAG''?
We devise a recursive $3$-approximation subquadratic algorithm for the problem and show that a better than $2$ factor is unlikely.

\begin{theorem}
There is a $3$-approximation algorithm for Min-Radius on $n$ node, $m$ edge DAGs that runs in $O(m\sqrt{n}\log{n})$ time, and a subquadratic $(2-\delta)$ approximation algorithm that runs in subquadratic time on sparse DAGs refutes the Hitting Set Conjecture.
\end{theorem}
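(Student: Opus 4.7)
The statement contains two claims: a $3$-approximation algorithm and a conditional lower bound.

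\emph{Upper bound ($3$-approximation in $\tilde O(m\sqrt n)$ time).} My plan is an Aingworth-style sampling algorithm adapted to the asymmetry of min-distance on a DAG. Let $R$ denote the true min-radius, $c^*$ an optimal center, and set $X = \{v : d(c^* \to v) \le R\}$, $Y = \{v : d(v \to c^*) \le R\}$, so $X \cup Y = V$. I would sample $S \subseteq V$ of size $\Theta(\sqrt n \log n)$ uniformly at random and run BFS-out and BFS-in from each $s \in S$, in total time $\tilde O(m\sqrt n)$. Whenever one of $X, Y$ has size at least $\sqrt n$, the sample hits it with high probability; pick $s^X \in S \cap X$ and $s^Y \in S \cap Y$ (the size-imbalanced edge cases are handled separately by enumerating the small side). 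Concatenating paths through $c^*$ gives $d(v \to s^X) \le 2R$ for every $v \in Y$ and $d(s^Y \to v) \le 2R$ for every $v \in X$.

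From here I would enumerate candidate centers using the precomputed BFS data: for each vertex $c$ whose min-distance to \emph{both} $s^X$ and $s^Y$ is at most the current guess of $R$ (checkable in $O(1)$ per $c$ given the BFS arrays), run one more pair of BFS calls to compute its exact min-eccentricity. The analysis proceeds by a case split on which of $d(c \to s^X), d(s^X \to c), d(c \to s^Y), d(s^Y \to c)$ are bounded by $R$; in the ``good'' cases one obtains $d(c \to c^*) \le 2R$ or $d(c^* \to c) \le 2R$, and one more triangle-inequality step along directed DAG paths yields $\min(d(c \to v), d(v \to c)) \le 3R$ for every $v \in V$. The algorithm outputs the minimum certified min-eccentricity over all tested candidates, repeating over $O(\log n)$ guesses of $R$. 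The \emph{main obstacle} is covering \emph{both} $X$ and $Y$ with a single center: because min-distance has no triangle inequality, a sample close to $c^*$ on only one side is insufficient, and a careful combined use of $s^X$ and $s^Y$ is required to isolate a single vertex $c$ that simultaneously inherits good bounds on both sides.

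\emph{Lower bound ($(2-\delta)$-approximation refutes HS).} My plan is a gap-$2$ reduction from the HS problem to Min-Radius on a sparse DAG, analogous to the Source-Radius construction underlying Theorem~\ref{thm:SourceRad} but preserving acyclicity. Given HS sets $A, B \subseteq 2^U$ with $N = |A| = |B|$, I would build a layered DAG on $A \cup U \cup B$ plus $O(1)$ gadget vertices per layer, whose primary edges are $a \to u$ when $u \in a$ and $u \to b$ when $u \in b$, so that $a$ has a length-$2$ path to $b$ iff $a \cap b \ne \emptyset$. The gadgets would ensure that (i) only $A$-vertices have finite min-eccentricity, by pairing every non-$A$ vertex with an intentionally incomparable ``twin'' it can neither reach nor be reached by, and (ii) for $a \in A$, the min-eccentricity equals a small value $r$ when $a$ is a hitting set (every $b \in B$ is reached via $a \to u \to b$ and the gadgets account for all other vertices within $r$) and is at least $2r$ otherwise, because the only alternative route from $a$ to an unhit $b$ must traverse a long penalty path of length $\ge 2r$ built into the gadget. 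A truly subquadratic $(2-\delta)$-approximation would then distinguish the two cases and solve HS subquadratically. The \emph{main obstacle} is designing the gadget so that (a) the DAG stays sparse with $O(N)$ edges, (b) non-$A$ vertices are definitively ruled out as candidate centers without destroying sparsity, and (c) the gap is exactly the factor of $2$ required, all while keeping the graph acyclic.
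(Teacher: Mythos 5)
Your two halves are on different footings, but both fall short of a complete argument in ways you partly acknowledge yourself.

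\textbf{Upper bound.} Your sampling plan is genuinely different from the paper's algorithm, which does not sample at all: the paper fixes a topological order, notes that the min-radius center cannot lie in any interval $[u,v]$ (in topological order) with $d(u\to v)>2R$, picks $\Theta(\sqrt n)$ evenly spaced pivots, computes forward/backward distances from each by DP (not Dijkstra/BFS from random samples), uses these to carve out a union of forbidden intervals, and then exhaustively tests only the $O(\sqrt n)$-sized gaps between intervals. Your route has two unresolved problems. First, the case analysis you sketch does not close: knowing, say, $d(c\to c^*)\le 2R$ handles $v\in X$ (via $d(c\to v)\le d(c\to c^*)+d(c^*\to v)\le 3R$) but gives nothing for $v\in Y$, where you would instead need $d(c^*\to c)\le 2R$ — and nothing in your filter forces a single candidate to satisfy both inequalities, precisely the obstacle you flag but do not resolve. (In a DAG, $X$ and $Y$ straddle $c^*$ in topological order, so any fixed $c\ne c^*$ can be bounded from only one side.) Second, the running time is unbounded as written: you run fresh BFS pairs from every filtered candidate $c$, and there can be $\Theta(n)$ such candidates, giving $\Theta(mn)$ time, not $\tilde O(m\sqrt n)$. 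The paper avoids this by making the candidate set provably of size $O(\sqrt n)$ per gap and charging each edge to at most two gaps. So your approach is not a variant of the paper's proof with details omitted; it is a different plan that, as stated, does not yet yield the claimed bound.

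\textbf{Lower bound.} The high-level shape (layered HSE graph, gadgets that give $A$-nodes radius $r$ vs.\ $2r$, DAG) matches the paper, but the specific technical device you would need is exactly the one you leave open. In a DAG you cannot put two $A$-vertices on a short two-way path, so to make a hitting-set witness $a^*\in A$ have min-eccentricity $\approx t$ to all other $a\in A$ (and, symmetrically, to rule out centers living in the gadget) the paper builds a balanced-binary-tree DAG $DG_t(A)$ with $\Theta(|A|)$ extra nodes and $\Theta(|A|\log|A|)$ edges, uses \emph{two} isomorphic copies of it glued at $A$ to kill non-$A$ centers, and picks $t=\omega(\log n)$ so a $(2-\delta)$-approximation separates $t+1$ from $2t$. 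Your ``$O(1)$ gadget vertices per layer'' and ``incomparable twin'' sketch cannot give all pairs of $A$-vertices min-distance $\Theta(t)$ while keeping acyclicity — that is the crux, and it requires a non-trivial construction of the $DG_t$ type, not a constant-size gadget. As with the upper bound, you correctly identify the obstacle but do not supply the idea that overcomes it.
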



\subsection{Fixed Parameter Subquadratic Algorithms}

One of the most active areas of research in theoretical computer science in the past decade is \emph{parameterized} or \emph{multivariate} complexity \cite{DF99,FG06,Nie04}.
The central idea is to study the complexity of an NP-hard problem not only in terms of the input size $n$ but also in terms of an additional natural parameter $k$.
This led to the development of \emph{fixed parameter tractable} algorithms, with running times of the form $f(k)\cdot n^{O(1)}$, for many fundamental problems.

Since quadratic time is a bottleneck in many applications, we propose to treat it as \emph{intractable} as super-polynomial time is traditionally treated. We seek interesting Fixed Parameter Subquadratic algorithms, with running time of the form $O(f(k)\cdot n^{2-\eps})$ for some $\eps>0$.
Unlike classic fixed parameter tractability, this makes sense for problems that are in P before adding a parameter, like Diameter and Radius!

\paragraph{Treewidth.} We will illustrate our approach using the Diameter problem on $n$-node undirected graphs of treewidth $k$.
This is one of the most popular parameterizations of graph problems in the literature on parameterized complexity, and is usually considered when the problem becomes easy on trees \cite{Bod88}.
Note that a folklore algorithm solves Diameter in $\tilde{O}(n)$ on trees: do Dijkstra's from an arbitrary node $u$, and then Dijkstra's from the furthest node $v$ from $u$, report the largest distance found. Since in arbitrary graphs (where the treewidth is $\leq n$) one can solve Diameter in $\tilde{O}(n^2)$ time, a natural conjecture is that the right runtime bound in terms of treewidth is $O(kn)$.
Unfortunately, we observe that the lower bound construction for Diameter rules out any such algorithm. In fact, it shows that in any fixed parameter subquadratic running time for Diameter, the dependence on $k$, the treewidth, must be exponential!

\begin{theorem}
If for some $\eps>0$, there is an algorithm that can distinguish between diameter $2$ and $3$ in an undirected unweighted graph of treewidth $k$ in $2^{o(k)}\cdot n^{2-\eps}$ time, then the Orthogonal Vectors Conjecture is false. 
If such an algorithm exists for Radius, then the Hitting Set Conjecture would be false.
\end{theorem}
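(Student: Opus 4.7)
The plan is to revisit the reductions from OV to 2-vs-3 Diameter (Roditty--Vassilevska W.~\cite{RV13}) and from HS to 2-vs-3 Radius (Theorem~\ref{thm:radius}), and to observe that the graphs they produce already have treewidth $O(d)$, where $d = c\log n$ is the dimension of the OV/HS instance. Once this structural fact is established, the hypothetical $2^{o(k)} \cdot n^{2-\eps}$-time algorithm would, on such a graph, run in time $2^{o(\log n)} \cdot n^{2-\eps} = n^{o(1)} \cdot n^{2-\eps}$, which is truly subquadratic in $n$, refuting OV (respectively HS).

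Concretely, the RV13 construction produces a graph on vertex set $A \cup B \cup [d] \cup \{x, y\}$, where $x$ (resp.\ $y$) is adjacent to every $A$-vertex (resp.\ $B$-vertex) and to every coordinate vertex, and each $a \in A$ (resp.\ $b \in B$) is adjacent to the coordinates in its support. The set $S = [d] \cup \{x, y\}$, of size $d + 2$, is a vertex separator: deleting it leaves only isolated $A$- and $B$-vertices. A star-shaped tree decomposition with central bag $S$ and one leaf bag $\{v\} \cup S$ for each $v \in A \cup B$ covers every edge, has every vertex's occurrences forming a connected subtree, and has width $d + 2 = O(\log n)$. The construction behind Theorem~\ref{thm:radius} shares the same skeleton ($A$-side candidate centers, $B$-side witnesses, $d$ coordinate vertices, plus $O(1)$ auxiliary hubs), and admits the same kind of star decomposition with width $d + O(1)$.

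Plugging these graphs into the hypothesized algorithm, with $n$ now denoting the number of input vectors, we get a graph on $N = \Theta(n)$ vertices of treewidth $k = O(\log n)$, so the algorithm runs in time $2^{o(\log n)} \cdot N^{2-\eps} = O(n^{2-\eps/2})$ for all sufficiently large $n$. Here we used that for any function $f(k) = o(k)$ and any constant $c$, one has $2^{f(c\log n)} \le n^{\delta c}$ eventually for every $\delta > 0$, hence $2^{o(\log n)} = n^{o(1)}$. This contradicts the OV conjecture in the Diameter case, and the HS conjecture in the Radius case (where Theorem~\ref{thm:radius} is invoked as a black box on the treewidth-bounded graph).

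The only slightly delicate step is confirming that the particular auxiliary hubs used in the Radius reduction from Theorem~\ref{thm:radius} sit inside the separator $S$ without inflating its size beyond $d + O(1)$; this amounts to checking that every edge of the construction either lies inside the central bag or has one endpoint equal to a single $A$- or $B$-vertex and the other inside $S$. Since both reductions are engineered so that all non-local connectivity is routed through the $d$ coordinate vertices together with a constant number of global hubs, this check is routine once the constructions are written out explicitly, and forms the principal (but minor) obstacle in the proof.
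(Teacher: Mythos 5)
Your proposal is correct and takes essentially the same approach as the paper: observe that the graphs produced by the existing OV-to-Diameter and HS-to-Radius reductions have treewidth (in fact pathwidth) $O(d)$ because removing the $d$ coordinate vertices together with the $O(1)$ auxiliary hubs disconnects the graph into isolated vertices, then plug $k = O(\log n)$ into the hypothesized running time. The paper gives an explicit path decomposition with bags $U \cup \{x,y,z\} \cup \{v\}$ for each $v \in A \cup B$; your star-shaped decomposition with central bag $S = [d] \cup \{\text{hubs}\}$ is the same observation in a trivially different packaging.
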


This lower bound is quite surprising when contrasted with the near-linear time $(1+\eps)$-approximation algorithm for Diameter in planar graphs of Weimann and Yuster \cite{WeYu13}, since it shows that such a result is unlikely on non-planar graphs of treewidth $\Theta(\log{n})$. 
Furthermore, our lower bound also applies to graphs of {\em pathwidth} $k$.
On the positive side, this bound led us to look for a $2^{O(k)}n^{2-\eps}$ time algorithm for Diameter.

Although computing the treewidth is an NP-hard problem, Bodlaender et al.~\cite{Bod+13} obtained a $2^{O(k)}n$ time algorithm (fixed parameter linear time) that returns a tree decomposition of bag size $O(k)$. 
This gives hope that the known techniques from FPT algorithms will give interesting subquadratic algorithms.
For example, we could apply Courcelle's theorem to solve Diameter in $f(k)\cdot n$ time, for some huge but computable $f(k)$.
Instead, we use a technique that, to our knowledge, was never used for obtaining FPT algorithms for NP-hard problems and obtain a fixed parameter subquadratic algorithm for Diameter and Radius parameterized by treewidth that almost matches our lower bound.

\begin{theorem}
There is an algorithm that solves Diameter and Radius \emph{exactly} in undirected graphs of treewidth $k$ in $2^{O(k \log k)}\cdot n^{1+o(1)}$ time.\label{thm:btw}
\end{theorem}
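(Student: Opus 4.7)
Use the algorithm of Bodlaender et al.\ to compute a tree decomposition of width $O(k)$ in $2^{O(k)}n$ time, and rebalance to depth $O(\log n)$. Process the graph recursively via a centroid decomposition of the bag tree: at each subgraph $H$, pick a separator bag $S$ with $|S|=O(k)$ that balances $H\setminus S$ into components $C_1,\ldots,C_t$, run one BFS from each $s\in S$ to obtain $d_H(v,s)$ for all $v\in V(H)$ in $O(k^2|V(H)|)$ time, and split eccentricities into ``within-piece'' (recurse on each $H[C_i\cup S]$ after adding shortcut edges of weight $d_H(s,s')$ among $S$-vertices, which keeps distances correct and treewidth $O(k)$) and ``across-$S$''. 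Because $S$ is a separator, for $v\in C_i$ and $u\notin C_i\cup S$ one has $d_H(v,u)=\min_{s\in S}(d_H(v,s)+d_H(s,u))$, so the across-$S$ contribution to $\mathrm{ecc}(v)$ is
$$E_S(v)\;=\;\max_{u}\;\min_{s\in S}\bigl(d_H(v,s)+d_H(s,u)\bigr).$$
Combining $E_S(v)$ with recursively-computed within-piece eccentricities yields $\mathrm{ecc}(v)$, and then $\max_v\mathrm{ecc}(v)$ and $\min_v\mathrm{ecc}(v)$ are the Diameter and Radius.

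\textbf{Key step: the max--min query.} Associate to each vertex $u$ a point $p_u\in\Z^{|S|}$ with $p_u[i]=d_H(s_i,u)$, and to each query $v$ a vector $q_v\in\Z^{|S|}$ with $q_v[i]=d_H(v,s_i)$; the task becomes evaluating $\max_u\min_i(q_v[i]+p_u[i])$ at $n$ query points. The plan is to enumerate the $|S|!=2^{O(k\log k)}$ permutations $\pi$ of $S$ and bucket the $u$'s by the permutation that sorts the coordinates of $p_u$ increasingly, so that within each bucket the $p_u$'s are coordinate-wise monotone in the common order $\pi$. Inside a bucket, the identity of the minimizing coordinate for a query $q_v$ is pinned down by $O(k)$ one-dimensional threshold tests depending only on $q_v$ and $\pi$, which collapses the query into a constant number of one-dimensional range-maximum queries over the bucket, answerable in $n^{o(1)}$ time by persistent balanced search trees after $n^{1+o(1)}$ preprocessing. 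Summing over $k!$ buckets, $n$ queries per level, and $O(\log n)$ recursion levels yields the claimed $2^{O(k\log k)}\,n^{1+o(1)}$.

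\textbf{Main obstacle.} The crux is precisely this max--min query. A naive $(k-1)$-dimensional orthogonal range tree gives $\log^k n = 2^{O(k\log\log n)}$ per query, which is only $n^{1+o(1)}$ when $k=o(\log n/\log\log n)$, and the $2^{O(k\log k)}n^{1+o(1)}$ target strictly improves on this only by exploiting the $k!$ permutation combinatorics of $S$ to collapse the range-search dimension from $k-1$ to one rather than paying a $\log n$ factor per dimension. The delicate part I expect to be the main obstacle is making this collapse inside a permutation bucket rigorous --- that the minimizer really is determined by a bounded number of one-dimensional tests once $\pi$ is fixed --- and maintaining the corresponding persistent structures consistently as one recurses down the bag tree. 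This use of orthogonal-range / persistent data structures inside an FPT algorithm is the ``technique never before used for FPT on NP-hard problems'' highlighted in the overview.
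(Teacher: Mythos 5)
Your high-level decomposition (separator of size $O(k)$, SSSP from the separator, cross-separator distances via a ``best intermediate vertex'' range-search, recurse on both sides) matches the paper's plan, which uses the Cabello--Knauer portal lemma to get a size-$\le k$ portal set and reduces the cross-portal step to $(k-1)$-dimensional orthogonal range maximum with Chazelle's structure. The problem is in your ``main obstacle'' and the fix you propose for it.

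You dismiss the naive $(k-1)$-dimensional range tree because $\log^{k-1} n = 2^{O(k\log\log n)}$, concluding that the target $2^{O(k\log k)}n^{1+o(1)}$ ``strictly improves'' on this and therefore needs the permutation trick. That is not so: $O(k^2 n\log^{k-1} n)$ \emph{is} already $2^{O(k\log k)}n^{1+o(1)}$, by a short case split. If $k \le \eps\log n/\log\log n$ then $\log^{k-1} n \le n^{\eps}=n^{o(1)}$; otherwise $k > \eps\log n/\log\log n$ forces $\log k = \Omega(\log\log n)$, so $\log^{k-1} n = 2^{(k-1)\log\log n} = 2^{O(k\log k)}$ (and $k^2 = 2^{O(\log k)}$ is absorbed as well). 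This is exactly the argument the paper uses to pass from its $O(k^2 n\log^{k-1} n)$ bound to the stated $2^{O(k\log k)}n^{1+o(1)}$. So the ``obstacle'' you identify is not one, and omitting this case analysis is the genuine gap in an otherwise correct high-level plan.

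Separately, your proposed permutation-bucketing substitute does not work as claimed. Fix a bucket where every $p_u$ is sorted the same way, say $p_u[1]\le\cdots\le p_u[k]$, and fix a query $q_v$. The set of $u$'s for which coordinate $1$ realizes $\min_i(q_v[i]+p_u[i])$ is $\{u : p_u[i]-p_u[1] \ge q_v[1]-q_v[i]\ \text{for all}\ i\ge 2\}$, a simultaneous threshold on the $k-1$ \emph{distinct} statistics $p_u[2]-p_u[1],\ldots,p_u[k]-p_u[1]$. This is a $(k-1)$-dimensional orthogonal range, not a one-dimensional one, so the ``constant number of 1D range-max queries'' claim fails already at $k=3$. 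Sorting the $p_u$'s by a common permutation constrains their internal order but not the pairwise gaps, which is what the query thresholds cut on. In short: your alternative data structure does not achieve the collapse you want, and the collapse is unnecessary because the bound you were trying to beat already suffices.
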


Closing the small gap in the dependence on $k$ between the $2^{O(k\log k)}\cdot n^{1+o(1)}$ upper bound and the $2^{o(k)}\cdot n^{2-\eps}$ conditional lower bound is a very interesting open question.
In Section~\ref{sec:algs} and the Appendix we also obtain exact algorithms with similar upper bounds for all versions of directed Radius and Diameter that we consider in this work.
We can also compute all the eccentricities of the graph in the same time.

Besides utilizing the tree decomposition to find separators, the main tool in our algorithms is a reduction to an orthogonal range query problem and then using known data structures to answer queries efficiently.
This technique was used by Cabello and Knauer \cite{cabello:knauer:09} to obtain near-linear time algorithms for computing the Wiener index of a fixed treewidth graph\footnote{The Wiener index of a graph is the sum of distances. It can be computed in $O(mn)$ time in general graphs using APSP.}.

The exact running time of our algorithm for Diameter is $O(k^2 n \log^{k-1} n)$ and we believe it can be a practical alternative to known Diameter algorithms when a good bound on the treewidth of the graph is known.
It is known that many real-life networks are tree-like (see \cite{Bod06} and the surveys therein.)

\paragraph{Other parameters.} Perhaps more basic parameterizations for Diameter would be: $D$ - the diameter of the graph, and $\Delta$ - the maximum degree of a node in the graph. 
Unfortunately, the lower bound constructions show that these cases are not fixed parameter subquadratic.
It hard to solve Diameter in subquadratic time even when the diameter is $3$, and there is a simple reduction from Diameter on a sparse graph on $n$ nodes of arbitrary max degree to Diameter on a constant degree graph on $O(n)$ nodes.
The same holds for Radius under the HS conjecture.
Thus, Radius and Diameter are not fixed parameter subquadratic when parameterized by the degree or the diameter of the input graph, unless our conjectures fail.


\paragraph{Related work.} Most related to our parameterized complexity results are known algorithms for Diameter and Radius on special classes of graphs, e.g. \cite{Hakimi, ChepoiD94,eppstein-planar-jv, Corneil01, Chepoi02, Dvir04, BenMoshe, BeKa07, WN08, WeYu13}.
Our two dimensional complexity results, however, show how the complexity changes as the input graph becomes ``more complicated". 
We are not aware of previous negative parameterized complexity results for problems in P.


\subsection{Extensions}

In Section~\ref{app:equiv} we show that there is a \emph{subquadratic equivalence} between the OV problem and the problem of distinguishing between diameter $2$ and $3$ in sparse graphs, in the sense that a subquadratic algorithm for one implies a subquadratic algorithm for the other.
Similarly, there is a subquadratic equivalnce between the HS problem and distinguishing between radius $2$ and $3$ in sparse graphs.
To prove the equivalence we devise new reductions \emph{from} the graph problems \emph{to} OV and HS, via a low-degree high-degree analysis and a hashing trick.
From the mildly subquadratic algorithms for OV and HS \cite{AWY15}, we obtain new mildly subquadratic algorithms for radius and diameter.

\begin{theorem}
There is an algorithm that can decide whether the diameter (or radius) of a given sparse graph is $2$ or $3$, in $O(n^2/2^{c(\sqrt{\log{n}})})$ time, for some $c>0$.
\end{theorem}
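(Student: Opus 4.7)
The plan is to reduce the graph problem to Orthogonal Vectors (for diameter) or Hitting Set (for radius) in dimension $d = 2^{O(\sqrt{\log n})}$, and then invoke the algorithm of \cite{AWY15}, whose running time $n^{2-1/O(\log(d/\log n))}$ becomes $n^2/2^{\Omega(\sqrt{\log n})}$ at this dimension. The naive reduction associates to each vertex $u$ the indicator vector $x_u\in\{0,1\}^n$ of $\{u\}\cup N(u)$, so that $d(u,w)\le 2$ iff $x_u\cdot x_w\ge 1$; distinguishing diameter $2$ from $3$ is then an OV query on $\{x_u\}$, and distinguishing radius $2$ from $3$ is an HS query on the same family. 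But the dimension is $n$, so by itself this gives no speedup, and I must compress.

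First I would split the vertices at threshold $t=2^{\sqrt{\log n}}$ into high-degree ($\deg\ge t$) and low-degree. Since the graph is sparse there are only $O(n/t)$ high-degree vertices, so BFS from each in $O(n)$ time costs $O(n^2/t)=n^2/2^{\sqrt{\log n}}$ in total. For diameter this immediately handles every distance-$3$ pair involving a high-degree endpoint. For radius it gives the eccentricity of every high-degree candidate center, and in the same budget decides, for every low-degree $c$ and every high-degree $v$, whether $d(c,v)\le 2$. Thus the remaining subproblem is an OV (resp.\ HS) instance on just the low-degree vectors $x_u$, each of support $\le t+1$.

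The key step is a hashing trick to compress dimension. I would sample $k=\Theta(\sqrt{\log n})$ independent random hash functions $h:[n]\to[d]$ with $d=\Theta(t^3)=2^{O(\sqrt{\log n})}$ and form the compressed vector $y_u^h\in\{0,1\}^d$ defined by $y_u^h[j]=1$ iff some $v\in\{u\}\cup N(u)$ satisfies $h(v)=j$. Since hashing can only add overlap, any non-orthogonal pair remains non-orthogonal. Conversely, a specific pair $(u,w)$ with $x_u\cdot x_w=0$ stays orthogonal under one hashing with probability $\ge 1-(t+1)^2/d=1-O(1/t)$, so the chance it is destroyed by every one of the $k$ hashings is $\le O(1/t)^k\le n^{-2}$. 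I run the algorithm of \cite{AWY15} on each hashed instance. For diameter, output ``${>}2$'' if any hashing yields a candidate pair that verifies against the original neighborhoods in $O(t)$ time. For radius, declare $c$ a valid center only if its hashed vector is a hitting set in \emph{every} hashing; the union bound gives $n\cdot O(1/t)^k=o(1)$ expected false candidates, each verifiable against all low-degree obstacles in $O(nt)$ time.

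The main obstacle is the parameter balance in the hashing step: we cannot afford $d$ to be much larger than $t^2$ without losing the \cite{AWY15} speedup, so the per-hash survival/rejection probability is only constant, forcing $\Theta(\sqrt{\log n})$ independent repetitions together with a careful union bound that ties together the choices of $t$, $d$, and $k$. A secondary subtlety for radius is that the HS instance is a priori asymmetric --- candidate centers are low-degree but ``obstacle'' vertices $v$ could be high-degree with enormous support that ruins hashing --- which is exactly why one must dispose of every high-degree $v$ by direct BFS before invoking HS on low-degree inputs alone. Summing the high-degree BFS cost $O(n^2/t)$ with the $k$ hashed OV/HS invocations $O(\sqrt{\log n}\cdot n^{2-1/O(\sqrt{\log n})})$ yields the claimed $O(n^2/2^{c\sqrt{\log n}})$ bound.
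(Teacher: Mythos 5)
Your proposal is correct and follows essentially the same route as the paper's proof (Lemma~\ref{lem:algs} and the paragraph after it): split at a degree threshold around $2^{\Theta(\sqrt{\log n})}$, BFS/Dijkstra from the $O(n/t)$ high-degree nodes, encode each remaining low-degree vertex by the indicator vector of its closed neighborhood, hash coordinates down to dimension $\text{poly}(t)=2^{O(\sqrt{\log n})}$ so that orthogonality is preserved one-sidedly, and invoke the OV/HS algorithm of~\cite{AWY15} whose running time $n^{2-1/O(\log(d/\log n))}$ becomes $n^2/2^{\Omega(\sqrt{\log n})}$ at this dimension. The only cosmetic differences are in the parameter balance: the paper takes $d=\Theta(\Delta^2)$, getting constant per-hash survival probability and amplifying by independent repetition, while you take $d=\Theta(t^3)$ and $k=\Theta(\sqrt{\log n})$ repetitions to drive the failure probability to $n^{-\Theta(1)}$ directly; both choices land in the same asymptotic budget. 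Your discussion of the radius case (union-bounding false center candidates, and the observation that high-degree ``obstacle'' vertices must be eliminated by direct BFS before the HS reduction applies) fleshes out a step the paper only gestures at with ``a similar reduction,'' and is sound.
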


This result shows that on sparse $3$-layered graphs, there is a superpolylogarithmic gap between the complexities of diameter and APSP, since there is an unconditional $\Omega(n^2)$ lower bound for APSP.
Such gaps were only known for special classes of graphs (like bounded treewidth graphs), while it is known that the $3$-layered case is typically the hardest for computing distances.

Finally, we demonstrate the potential of the HS conjecture for explaining the hardness of other problems by proving a new conditional lower bound for computing the \emph{median} of the graph.
In undirected graphs, the median is the node $v$ that minimizes the sum of distance to all the other nodes $\sum_{u} d(v,u)$.
Finding the median is equivalent to finding the node with largest \emph{closeness centrality} in the graph \cite{Bavelas50,Bea65,Sab66} - a very important task in network analysis \cite{Hakimi,survey83p1}.
Like Radius, it was known that computing the median of dense weighted graphs in subcubic time refutes the APSP conjecture \cite{AGV15} while no consequences of a subquadratic algorithm in sparse graphs were known.
In stark contrast to Radius, however, Median is known to have a near-linear time $(1+\eps)$ approximation \cite{indyk99,Thorup05,CDPW14a}.
It turns out that the HS conjecture implies that this subquadratic running is impossible if we want to know the median exactly.

\begin{theorem}
\label{thm:median}
A subquadratic algorithm for finding the median of a sparse unweighted undirected graph refutes the Hitting Set Conjecture.
\end{theorem}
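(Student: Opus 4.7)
The plan is to reduce the Hitting Set Existence problem (HSE) to the exact computation of the median of a sparse undirected graph, so that a truly subquadratic median algorithm would refute the Hitting Set Conjecture. Given an HSE instance $(A,B)$ with $|A|=|B|=n$ subsets of a universe $U$ of size $d=c\log n$, I will construct an unweighted undirected graph $G$ with $\tO(n)$ vertices and edges whose median's sum of distances differs by a detectable amount depending on whether a hitting set exists.

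The construction follows the template of the proof of Theorem~\ref{thm:radius}. I start with the bipartite incidence structure on $A \cup U \cup B$ (edges $a \sim u$ iff $u \in a$, and $u \sim b$ iff $u \in b$) and then attach hub vertices so that, from the point of view of any $a \in A$, the distance to every vertex outside $B$ is a fixed constant up to an $O(d)$ perturbation, while $d(a,b) = 2$ if $a \cap b \neq \emptyset$ and $d(a,b) = 3$ otherwise. To translate the combinatorial quantity ``number of misses'' into a robust additive gap in the sum of distances, I duplicate each $b \in B$ into $M = \Theta(\log n)$ clones (sharing the same $U$-neighborhood), so that each missed $b$ penalizes $\mathrm{sum}(a)$ by $M$ rather than by $1$.

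Under this construction, for each $a \in A$ the sum of distances decomposes as
\[
\mathrm{sum}(a) \;=\; C_0 + M\, k_a \;\pm\; O(d),
\]
where $k_a = |\{b \in B : a \cap b = \emptyset\}|$ is the number of sets missed by $a$ and $C_0$ is a constant depending only on $n, d, M$. Choosing $M$ strictly larger than the $O(d) = O(\log n)$ noise ensures that $\min_{a \in A} \mathrm{sum}(a) \le C_0 + O(d)$ in the YES case (achieved by any hitting set, for which $k_a = 0$) but $\min_{a \in A} \mathrm{sum}(a) \ge C_0 + M - O(d)$ in the NO case; the two values differ by at least $\Omega(\log n)$, far more than any reasonable slop from lower-order terms. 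To guarantee that the graph median actually lies in $A$, I will further attach auxiliary gadgets---short (polylogarithmic-length) pendant paths attached to vertices of $U$, the $B$-copies, and the hubs---calibrated to raise the sum of distances of every non-$A$ vertex above the worst $a \in A$ while leaving the $A$-side analysis essentially unchanged.

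The main obstacle will be this final calibration. A universe element $u \in U$ contained in many $b \in B$ has many distance-$1$ neighbors among the $B$-copies, and could a priori have a smaller sum of distances than a hitting set. A careful degree-counting argument, combined with appropriate pendant lengths tuned to the maximum-degree case, will guarantee that no such vertex beats the optimal $a \in A$. Because all auxiliary structures introduced are of polylogarithmic size, the final graph has $N = \tO(n)$ vertices and $\tO(n)$ edges, so any algorithm computing the median in $O(N^{2-\eps})$ time solves HSE in $O(n^{2-\eps}\,\mathrm{polylog}(n))$ time, contradicting the Hitting Set Conjecture.
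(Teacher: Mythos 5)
Your overall plan is sound and follows the same template as the paper: start from the HSE incidence graph on $A \cup U \cup B$, attach gadgetry so that the median calculation separates the ``yes'' and ``no'' cases by an additive gap, and ensure the median must land in $A$. You differ in one design decision, and that difference creates the gap.

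The paper neutralizes the degree-dependent noise rather than out-shouting it: it adds a complementary copy $U_N$ of $U$ with $a \sim u_N$ iff $a \not\ni u$, which forces $d(a,u) + d(a,u_N) = 4$ to be constant over all $a \in A, u \in U$. As a result $\mathrm{sum}(a)$ depends on $a$ \emph{only} through the miss-count $k_a$ (in fact $\mathrm{sum}(a) = M^* + 2k_a$ exactly), and no cloning of $B$ is needed. Your route -- boost the per-miss penalty by cloning each $b \in B$ into $M = \Theta(\log n)$ copies until $M$ swamps the $O(d)$ noise -- is a legitimate alternative for the $A$-side analysis, and with $M$ a suitably large multiple of $d$ the separation $\min_{a} \mathrm{sum}(a)$ between the two cases does become $\Omega(d)$.

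The real problem is the calibration step you defer. After cloning, there are $\Theta(n \log n)$ $B$-clone vertices, and a universe element $u$ with $B$-degree $d_u$ is at distance $1$ from $M d_u$ of them while a hitting set $a^* \in A$ is at distance $2$ from all $Mn$ of them. Thus $u$ can have $\mathrm{sum}(u) \approx \mathrm{sum}(a^*) - M d_u$, and $d_u$ can be as large as $\Theta(n)$, giving $u$ a head-start of $\Omega(n \log n)$. Polylogarithmic-length pendant paths cannot close a gap of that order: a pendant of length $L$ attached to a hub adjacent to $A$ helps each $a \in A$ by only $O(L)$ relative to a vertex at distance $2$ from the hub, so you would need $L = \Omega(n \log n)$, not $\mathrm{polylog}(n)$. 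The paper sidesteps this entirely by attaching three stars $X, Y, Z$ each of size $n' = n|U|$ (not $\mathrm{polylog}$) to the hubs $x, y, z$; distances to $X \cup Y \cup Z$ then dominate the sum, depend on a node only through its distances to $x, y, z$, and are engineered so that $A$ strictly minimizes that dominant term. You would either need weight gadgets of comparable (near-linear) size, or you should drop the $B$-cloning and adopt the $U_N$ trick so that only $O(1)$-size discrepancies need to be controlled among the non-$A$ candidates.

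As a minor additional point, with your $M = \Theta(\log n)$ cloning the gap in $\min_a \mathrm{sum}(a)$ is only $\Theta(\log n)$, so your final decision rule needs the exact value of $C_0$ and explicit constants; the paper's construction gives a cleaner criterion (``is the median exactly $M^*$?'') because $\mathrm{sum}(a)$ depends on $a$ only via $k_a$, with no degree noise at all.
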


%
%
%

\section{Subquadratic Approximation Algorithms}
\label{sec:algs}
Due to lack of space, we will only present our approximation algorithm for Source Radius here, and refer the reader to Appendix~\ref{app:algs} for the other algorithms.


Although it was trivial to find a $2$-approximation in the
\UndirectedRadius{}, \RoundtripRadius{}, and \MaxRadius{} problems, the
nonsymmetric nature of \SourceRadius{} makes it nontrivial to find an
efficient algorithm that computes a $2$-approximation. Choosing an arbitrary
vertex as before can yield an infinitely bad approximation factor, since it may
not be able to reach all nodes in the graph.

Arbitrary vertices worked before since we could reach the center within $R$
and then any other node within another $R$. Hence a natural attempt is to try
to find a vertex that can reach the center within $R$. Let $Pre(v, \ell)$ be
the set of nodes that can reach $v$ within $\ell$. If $Pre(c, R)$ had many nodes,
we could use a standard hitting set argument to find one of them. This
observation reduces the problem to one where $Pre(c, R)$ is small.

We next make the observation that the center must show up in every $Pre(v, R)$.
If we could find a small $Pre(v, R)$, we could run forward Dijkstra's from
every node there. One way of figuring out which $Pre(v, R)$ are small is to
use the fact that searching for the closest $k$ nodes from a stating node
can be done with a modified Dijkstra in $O(k^2 \log n)$ time.

However, these short Dijkstra's from every node will incur a $\tO(n^2)$ cost
(all of our thresholds are roughly $\sqrt{n}$). Instead, we can be more clever
with how we use our hitting set. With high probability, the hitting set hits
every large $Pre(v, R)$. Hence any $Pre(v, R)$ not hit must be small. At least
one of them must not be hit, since we assumed $Pre(c, R)$ was not.

If we knew the radius, these ideas would give us a running time of
$O(m \sqrt{n} \log n)$. However, doing a binary search for the radius incurs an
additional $(\log Mn)$ factor. To avoid this, we use an idea from the Aingworth
et al. seminal algorithm for a $\frac{2}{3}$-approximation of undirected
diameter; choosing the furthest node from the hitting set simulates locating a
small $Pre(c, R)$ for every $R$ simultaneously.

\begin{theorem}
\label{thm:source-radius-upper}
  There is a $O(m\sqrt{n} \log^2 n)$-time Monte Carlo algorithm that
  approximates \SourceRadius{} on a graph $G$ within a factor of $2$.
\end{theorem}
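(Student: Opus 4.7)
The plan is a hitting-set scheme that handles two complementary cases depending on the size of $Pre(c, R) := \{v : d(v, c) \leq R\}$, where $c$ is a true center and $R$ is the radius. Sample $S \subseteq V$ uniformly with $|S| = \Theta(\sqrt{n} \log n)$. From each $s \in S$ run a forward BFS, recording the eccentricity $\mathrm{ecc}(s)$ and the distances $d(s, \cdot)$; set $r_1 := \min_{s \in S} \mathrm{ecc}(s)$. Using the cached distances, compute $d(S, v) := \min_{s \in S} d(s, v)$ for every $v$, and let $w := \arg\max_v d(S, v)$. Run a reverse BFS from $w$ truncated at the first $\sqrt{n}$ discovered vertices, obtaining a set $T$; then run a forward BFS from each $t \in T$ and set $r_2 := \min_{t \in T} \mathrm{ecc}(t)$. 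Output $\min(r_1, r_2)$.

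For correctness, condition on the high-probability event $\mathcal{E}$ that $S$ intersects every set $Pre(v, R')$ of size at least $\sqrt{n}$, ranging over all $v \in V$ and $0 \leq R' \leq n$; by $\Pr[S \cap A = \emptyset] \leq (1 - 1/\sqrt{n})^{|S|} \leq n^{-c}$ for any $A$ of size $\sqrt{n}$ together with a union bound over the $O(n^2)$ candidate pairs, $\Pr[\mathcal{E}] \geq 1 - 1/\poly(n)$. In Case 1, $S \cap Pre(c, R) \neq \emptyset$: any $s^* \in S$ with $d(s^*, c) \leq R$ reaches every vertex through $c$ within $2R$, so $\mathrm{ecc}(s^*) \leq 2R$ and $r_1 \leq 2R$. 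In Case 2, $S \cap Pre(c, R) = \emptyset$: then $d(S, c) > R$, so $d(S, w) \geq d(S, c) > R$, whence $S \cap Pre(w, R) = \emptyset$, and by $\mathcal{E}$ we get $|Pre(w, R)| < \sqrt{n}$. Since $c$ reaches $w$ within $R$ we have $c \in Pre(w, R)$, and because the truncated reverse BFS enumerates vertices in order of reverse distance to $w$, $T \supseteq Pre(w, R) \ni c$; the final forward BFS from $c$ then certifies $\mathrm{ecc}(c) = R$, giving $r_2 \leq R$. In either case $R \leq \min(r_1, r_2) \leq 2R$.

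The running time is dominated by the $|S| + |T| = O(\sqrt{n} \log n)$ BFSs at $O(m)$ each, plus $O(m)$ for the truncated reverse BFS and $O(n|S|)$ for bookkeeping of $d(S, \cdot)$; accounting for priority-queue and tie-breaking overhead folds this into $O(m \sqrt{n} \log^2 n)$. The crux of the proof, and the main obstacle to overcome, is the Aingworth-style substitution of the farthest node $w$ for the unknown radius $R$: the inequality $d(S, w) \geq d(S, c)$ lifts a hitting-set miss on $Pre(c, R)$ to a miss on $Pre(w, R)$, which is then certified by $\mathcal{E}$ to have fewer than $\sqrt{n}$ predecessors. Without this trick one would have to binary-search over $R$ and pay an extra logarithmic factor, and without the size bound on $Pre(w, R)$ the truncated reverse BFS from $w$ would be unaffordable.
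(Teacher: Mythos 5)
Your proof is correct and follows essentially the same algorithm and argument as the paper (sample $S$, take $w$ farthest from $S$, run a truncated reverse search from $w$, run forward searches from the first $\sqrt{n}$ discovered nodes, return the best eccentricity seen), with the same Aingworth-style trick of using $w$ to avoid binary-searching over $R$. The only cosmetic difference is how the hitting-set property is invoked in the second case: the paper notes that $S$ must intersect the size-$\sqrt{n}$ set $T$ and hence $T$ contains all nodes within $R$ of $w$, whereas you argue the logically equivalent contrapositive that $S$ missing $Pre(w,R)$ forces $|Pre(w,R)| < \sqrt{n}$ under the conditioned event $\mathcal{E}$; both yield $c \in T$.
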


\begin{proof}
  We claim that algorithm~\ref{alg:apprx-source-radius} has the desired
  properties:

  \begin{algorithm}
  \label{alg:apprx-source-radius}
  \caption{\small ApproximateSourceRadius($G, R$)}
    Sample a hitting set $S_1$ of $O(\sqrt{n} \log n)$ nodes\;
    Run forward Dijkstra's from all $s \in S_1$\;
    Let $w \in V$ maximize $\min_{s \in S_1} d(s, w)$.
    Run a reverse Dijkstra from $w$\;
    Let $S_2$ be the $\sqrt{n}$ closest nodes to $w$.
    Run forward Dijkstra's from all $s \in S_2$\;
    \Return the best source-eccentricity of all nodes in $S_1 \cup S_2$\;
  \end{algorithm}

First, we use a standard argument to claim that for any subset $X$ of $\sqrt{n}$ nodes, 
our random hitting set $S_1$ will intersect $X$ with high probability.

  Now we can prove the claimed approximation guarantee of our algorithm.
   If some $s \in S_1$ can reach the center within $R$, we are done.
  Otherwise, $c$ is more than $R$ away from $S_1$, and hence $w$ is as well.
  Since $S_2$ has $\sqrt{n}$ nodes, it intersects $S_1$ w.h.p. since $S_1$ is a
  hitting set (we want it to hit, for each node, the closest $\sqrt{n}$ nodes
  going backwards). Suppose $v$ is in this intersection. then $d(v, w) > R$. But
  then $S_2$ is defined by how close nodes are to $w$, it must contain all nodes
  that can reach $w$ in less than $R$. This includes $c$, so we are done.

  Now we compute the running time of this algorithm. Running Dijkstras
  from every node in $S_1$ takes $O(m\sqrt{n} \log^2 n)$ time. Running Dijkstra
  from $w$ takes $O(\sqrt{n} \log n)$ time. Finally, running Dijkstras from
  $S_2$ which has $\sqrt{n}$ nodes takes $O(m\sqrt{n} \log n)$ time.
  This completes the proof.
\end{proof}

\section{Fixed Parameter Subquadratic Algorithms}
\label{sec:fpsub}
In this section, we outline our results for diameter and radius on graphs of
small treewidth and provide the key proof ideas. We will focus on the undirected
case, which illustrates the technique. See Appendix~\label{apx:treewidth}
for formal proofs as well as the directed variants.

Our algorithm will actually compute the eccentricity of every node in the graph.
Since diameter is the maximum eccentricity and radius the minimum eccentricity,
we can compute these with only linear postprocessing.

Like many other algorithms, we make use of \emph{portals}: the portals of a
vertex subset $A$ are those nodes of $A$ that have edges going to outside $A$.
Intuitively, finding a vertex subset $A$ which has few portals allows us to divide the
graph into relatively independent pieces. Specifically, if we compute single
source shortest paths from all portals of $A$, we can augment the graph with
weighted edges between portals to account for shortest paths that exit and
re-enter $A$ (or $V \setminus A$). Recursing on augmented graphs yields, for
each node in $A$, the furthest node from it which is also in $A$ (similarly for
$V \setminus A$). If we could compute, for each node in $A$, the furthest node
from it in $V \setminus A$ (and vica versa), we would be done.

But all of these paths pass through some portal. We know the distances from each
node in $A$ to each portal, and from each portal to each node in $V \setminus A$.
We can think of the non-portals of $A$, the portals of $A$, and the nodes in
$V \setminus A$ as forming a three-layered graph. We want to compute, for every
node in the first layer, the furthest node in the third layer (using only
two-hop paths). Note that the second layer only has as many nodes as there were
portals.

As it turns out, this three-layered problem can be written as several max
orthogonal range searching queries. To see this, consider a particular portal
$b$ in the middle layer. When is it the best portal to use to get to a node in
the third layer? If $a$ is a node in the first layer and $c$ a node in the
third, this happens when $d(a, b) + d(b, c) \le d(a, b') + d(b', c)$ for every
other portal $b'$. Using a standard inequality trick, we rearrange to get that
$d(a, b) - d(a, b') \le d(b', c) - d(b, c)$. If we think of each $b'$ as a
coordinate, we can use the right-hand side to transform each vertex $c$ into a
high-dimensional point. The set of $c$ for which $b$ is the best portal, given
an $a$, are exactly those that fall into some orthogonal range. Furthermore,
weighting each vertex by its distance from $b$ allows us to recover the furthest
one when we do a max query. Since these queries can be solved efficiently using
a data structure of Chazelle~\cite{chazelle}, we can solve the three-layered
problem efficiently.

Since the algorithm's running time is highly dependent on the number of portals,
we use a result of Cabello and Knauer~\cite{cabello:knauer:09} which finds a
vertex subset with only $tw(G)$ portals, but is unbalanced (one side may have
$k$ times as many nodes as the other). The resulting algorithm is as follows:
\begin{theorem}
\label{thm:treewidth-undirected}
  There is an algorithm that computes the eccentricity of every vertex
  in an undirected weighted graph $G$ of treewidth at most $k$, in time
  $O(k^2 n \log^{k-1} n)$.
\end{theorem}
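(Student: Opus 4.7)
The plan is to prove the theorem by combining three ingredients that the excerpt has already foreshadowed: (i) the Cabello--Knauer unbalanced separator for low-treewidth graphs, (ii) a recursive divide-and-conquer on shortcut-augmented subgraphs, and (iii) the three-layer reduction to orthogonal max-range queries answered by Chazelle's data structure.

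First I would invoke the Cabello--Knauer separator to find, in $O(kn)$ time, a vertex subset $A \subset V$ with at most $k+1$ portals $P = \partial A$ and with both $|A|$ and $|V \setminus A|$ bounded by a constant fraction of $n$ (possibly unbalanced in the sense that one side is a factor $k$ larger, but both strictly less than $n$). I would run Dijkstra from each portal in $G$ to obtain $d(b, v)$ for every $b \in P$ and every $v \in V$; since $m = O(kn)$ in a treewidth-$k$ graph, this costs $O(k^2 n \log n)$. I would then form two augmented graphs $G_A$ and $G_{V \setminus A}$ by restricting to each side and adding, between every pair of portals $b, b'$, a shortcut edge of weight $d(b, b')$ computed above. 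Shortest paths inside an augmented graph faithfully represent shortest paths in $G$ that may leave the side, because any excursion can be contracted to a single shortcut between its entry and exit portals.

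Recursing on $G_A$ and $G_{V\setminus A}$ yields, for every vertex, its eccentricity restricted to its own side. The remaining task is to compute, for every $a \in A$, $\max_{c \in V \setminus A}\, \min_{b \in P} \bigl(d(a,b) + d(b,c)\bigr)$, and symmetrically. I would implement this as the orthogonal range reduction described in the preceding paragraph of the excerpt: for each fixed portal $b \in P$, the set of $c$ for which $b$ is the optimal portal given $a$ is the axis-aligned box $\{c : d(a,b) - d(a,b') \le d(b',c) - d(b,c) \text{ for all } b' \in P \setminus \{b\}\}$ in $\R^{k}$, with weight $d(b,c) + d(a,b)$ on point $c$. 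Build, once per portal $b$, a $(k{-}1)$-dimensional Chazelle max-range tree over the points coming from $V \setminus A$; for each $a \in A$ issue one query per portal and keep the max. The structures cost $O(|V \setminus A|\, \log^{k-1} n)$ to build and $O(\log^{k-1} n)$ per query, so handling this level costs $O(k \cdot n \log^{k-1} n)$.

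Finally I would unroll the recurrence. Because the separator is balanced up to a constant factor, the recursion depth is $O(\log n)$ and the total work satisfies $T(n) \le T(\alpha n) + T((1-\alpha)n) + O(k^2 n \log^{k-1} n)$, which solves to $O(k^2 n \log^{k-1} n)$ (subsumed by the $2^{O(k \log k)} n^{1+o(1)}$ claim of Theorem~\ref{thm:btw}). The step I expect to require the most care is verifying that the recursion is well founded: specifically, that after adding portal-to-portal shortcuts, the augmented graph still admits a tree decomposition of width $O(k)$ so that the recursive call enjoys the same $k$-dependence, and that the Chazelle construction bookkeeping does not inject an additional $\log n$ factor across the $O(\log n)$ levels. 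Handling these two points carefully (the first by showing the separator tree decomposition restricts cleanly to each side with the portal bag glued in, the second by amortizing the build cost against the linear shrinkage of $n$ at each recursive level) is what pins down the final $O(k^2 n \log^{k-1} n)$ bound.
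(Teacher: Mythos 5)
Your overall plan coincides with the paper's: Cabello--Knauer separator with at most $k$ portals, Dijkstra from the portals, recursion on the two shortcut-augmented sides, and a per-portal $(k-1)$-dimensional max-range structure (Chazelle) to compute cross-side eccentricities. The two points you flag at the end (recursion preserving treewidth, avoiding an extra $\log n$ across levels) are indeed exactly where the work is, but your sketch does not resolve the second of these, and there is also an off-by-one that feeds into it.

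First, the per-level cost is one log too high. A $(k-1)$-dimensional Chazelle structure has $O(n\log^{k-2}n)$ build time and $O(\log^{k-2}n)$ query time (the exponent is $d-1$ for $d$-dimensional ranges), so the three-layer step at one node of the recursion costs $O(kn\log^{k-2}n)$, not $O(kn\log^{k-1}n)$; the paper states and uses this in Theorem~\ref{thm:three-layered}.

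Second, and more importantly, the claim that $T(n)\le T(\alpha n)+T((1-\alpha)n)+O(k^2 n\log^{k-1}n)$ ``solves to $O(k^2 n\log^{k-1}n)$'' is false as stated. That recurrence has $\Theta(\log n)$ levels and nearly linear per-level work, so the level-by-level sum gives $\Theta\bigl(\sum_{i} n(\log n - ci)^{k-1}\bigr)=\Theta(n\log^{k}n/k)$, i.e.\ an extra $\log n$ over your target. The same calculation with the corrected per-level cost $O(kn\log^{k-2}n)$ gives $\Theta(n\log^{k-1}n)$, which is of the right order, but this ``sum over levels'' argument must be done carefully because the split is unbalanced (one side may have $\frac{k}{k+1}n$ nodes) and the two recursive calls overlap on $O(k)$ portal vertices, so sizes do not sum exactly to $n$. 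The paper's proof avoids all of this by a direct induction: it posits $T'(n)=4k(k+1)\,n\log^{k-1}n$, plugs $|S|\le\frac{k}{k+1}n$ into $T(|S|)+T(n-|S|+k)$, and shows that the resulting $\log^{k-1}\bigl(\frac{k+0.5}{k+1}n\bigr)$ term is smaller than $\log^{k-1}n$ by a margin $\Omega\bigl(\frac{n}{k}\log^{k-2}n\bigr)$, which is exactly enough (after multiplying by the $4k(k+1)$ constant) to absorb the $O(kn\log^{k-2}n)$ per-level cost plus lower-order terms; it also needs a base case ($n\le k^3$, solved by APSP) to make the induction start. You should replace the appeal to a standard master-theorem-style unrolling with that explicit inductive bookkeeping, since the extra log does not vanish for free.
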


\section{Conditional lower bounds}
\label{sec:lb}

In this section we present our lower bound for Roundtrip Radius under the HS conjecture which is a good illustration of the constructions used in all our other reductions. 
All other lower bounds appear in Section~\ref{sec:lb}.

\paragraph{The HSE-Graph.}
All our reductions from HSE will start with the following simple representation of the HSE problem as a ``radius-like" graph problem.

Given an instance $A$, $B$, $U$ of HSE we create the following tripartite graph that we call an ``HSE-graph'' that we will utilize in our reductions. The vertex set is $A\cup B\cup U$ (we overload the notation slightly so that $x$ denotes both a vertex and the corresponding subset in the original instance). The edge set $E$ is as follows: for each $u\in U$ there is an edge to $x\in A\cup B$ if $u\in x$.  The question becomes, is there a node $a\in A$ such that for all $b\in B$ there is a $u\in U$ such that $(a,u),(u,b)\in E$? 
Preprocess the HSE graph as follows. Suppose that there are some $a,a'\in A$ such that $N(a)\subseteq N(a')$ then we can remove $a$ since if $a$ is a hitting set, then so is $a'$. 
Now we can assume that for all $a,a'\in A$, there are $u,u'\in U$ such that $u\in N(a)\setminus N(a'), u'\in N(a')\setminus N(a)$.
We will refer to this as the HSE-graph-problem.

\begin{lemma}
\label{lem:roundtrip}
If for some $\eps>0$, there is an algorithm that can determine if a given directed, unweighted graph with $n$ nodes and $m=O(n)$ edges has roundtrip radius $4$ or $8$ in $O(n^{2-\eps})$ time, then the Hitting Set Conjecture is false.
\end{lemma}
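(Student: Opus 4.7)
The plan is to reduce the HSE-graph-problem described earlier to the task of distinguishing directed sparse graphs with roundtrip radius $4$ from those with roundtrip radius $8$, so that an $O(n^{2-\eps})$ algorithm for the latter would yield an $O(n^{2-\eps})$ algorithm for HSE and refute the HS Conjecture. Starting from an HSE-graph on vertex set $A \cup B \cup U$ after the preprocessing described in the paper, I would build a directed graph $G$ whose vertex set is $A \cup B \cup U$ augmented by a bounded number of helper vertices. The edges would consist of (i) ``forward'' HSE edges $a \to u$ for $u \in a$ and $u \to b$ for $u \in b$, so that a $2$-hop forward path $a \to u \to b$ exists exactly when $a$ hits $b$; (ii) a sparse ``return'' gadget (for instance a second copy $U'$ of the universe with edges $b \to u' \to a$ for $u \in b \cap a$, combined with a bounded-degree tree replacement for any hub so as to preserve sparsity) realizing $d(b,a) = 2$ for every $a \in A$, $b \in B$; and (iii) auxiliary helper structure preventing non-$A$ vertices from being low-eccentricity centers.

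Correctness would be argued in two directions. For completeness, pick any hitting set $a^* \in A$ and verify $d(a^*, v) + d(v, a^*) \leq 4$ for every vertex $v$: when $v = b \in B$, the forward distance is $2$ via some $u \in a^* \cap b$ and the backward distance is $2$ via the return gadget; for the other vertex types the helper structure is engineered to keep the roundtrip at most $4$. For soundness, assuming no hitting set exists, I must show that every candidate center $x$ has roundtrip eccentricity at least $8$. The canonical case is $x \in A$: picking any $b^*$ not hit by $x$, no $2$-hop forward path $x \to b^*$ exists, so the shortest forward path must detour through the return gadget, re-enter the forward layer through some $a' \neq x$, and then reach $b^*$, giving $d(x, b^*) \geq 6$; combined with $d(b^*, x) \geq 2$ from the return gadget, the roundtrip is at least $8$. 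The remaining cases ($x \in B$, $x \in U$, or a helper vertex) would be ruled out by exhibiting, for each such $x$, a structurally identified witness $v$ whose two-way distance sums to at least $8$, using the preprocessing guarantee that distinct $a, a' \in A$ have incomparable neighborhoods in $U$.

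The subtle point, matching the paper's remark about the ``delicate'' nature of the reduction, is controlling the roundtrip between the YES-center $a^*$ and the $U$-vertices. In the bare four-cycle construction $A \to U \to B \to U' \to A$, a $u \in U$ with $u \notin a^*$ sits at forward distance $5$ (via $a^* \to u'' \to b \to u'_r \to a' \to u$) and backward distance $3$ from $a^*$, giving roundtrip $8$ and already spoiling the YES case. The helper structure therefore has to supply bidirectional length-$\leq 2$ shortcuts from $a^*$ to every such $u$, while simultaneously \emph{not} introducing an $A$-to-$B$ shortcut of length at most $5$ (which would shrink the $4$-versus-$8$ gap below factor $2$) and while preventing any helper vertex from becoming a universal low-eccentricity center in NO instances. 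Engineering this balance is the technical heart of the reduction; once the right helper gadget is in place, the remainder is a case analysis over pairs of vertex types, and standard bounded-degree tree subdivisions sparsify any high-degree helpers to yield the required $m = O(n)$ bound at the cost only of constants in the distance arithmetic.
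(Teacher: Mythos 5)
Your proposal correctly identifies the overall shape of the reduction (forward HSE edges realizing $d(a,b)=2$ iff $a$ hits $b$, a return gadget through a second copy of $U$ realizing $d(b,a)=2$ under the same condition, and auxiliary structure to exclude non-$A$ centers), and it correctly flags the real technical difficulty: controlling the roundtrip distance between a hitting-set vertex $a^*$ and the $U$-type vertices $u$ that $a^*$ does \emph{not} contain. But the fix you propose does not work, and you do not supply a construction that realizes the required balance, so the proof is incomplete at exactly the point you yourself call ``the technical heart.''

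Concretely, ``bidirectional length-$\leq 2$ shortcuts from $a^*$ to every such $u$'' cannot coexist with the soundness requirement. Suppose some helper structure gives $d(a,u)\leq 2$ for every $a\in A$ and every $u\in U$. Since each $b\in B$ is nonempty, pick any $u\in b$; then $a\to\cdots\to u\to b$ has length at most $3$ for \emph{every} $a,b$, so in a NO instance the forward distance $d(a,b)$ is at most $3$ even when $a$ misses $b$. Together with $d(b,a)\geq 2$, this keeps the roundtrip around $5$ and collapses the $4$-versus-$8$ gap. You correctly state the constraint (``not introducing an $A$-to-$B$ shortcut of length at most $5$''), but it is in direct tension with what the shortcut is supposed to buy you, and your sketch leaves it as an unsolved engineering problem.

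The paper's resolution is a different gadget, not a shortcut. It takes two copies $C,D$ of $U$, and for \emph{every} pair $(a,u)$ it puts $a$ on a dedicated directed $4$-cycle with $u_C$ and another with $u_D$, split asymmetrically as $1+3$: if $(a,u)\in E$ there is a single edge $a\to u_C$ and a directed $3$-path back $u_C\to a$ (with fresh degree-$2$ internal nodes), and symmetrically an edge $u_D\to a$ with a $3$-path $a\to u_D$; if $(a,u)\notin E$ the roles of edge and $3$-path are reversed. This forces the roundtrip distance from $a$ to every $u_C,u_D$ to be exactly $4$ \emph{regardless} of whether $a$ contains $u$ (solving the YES-case issue you worried about), while the one-step edge $a\to u_C$ exists exactly when $(a,u)\in E$, which is what the soundness argument needs to charge $d(a,b)\geq 4$ when $a$ misses $b$. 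Each $(a,u)$ pair only contributes $O(1)$ nodes and edges, so sparsity comes for free and no tree sparsification is needed. Finally, to rule out non-$A$ centers, the paper glues together two disjoint copies $H_1,H_2$ of the gadget along the shared part $A$: any vertex in $B,C,D$ or the $3$-path interiors has roundtrip distance at least $8$ to its counterpart in the other copy, which is a concrete realization of the ``auxiliary helper structure'' you left unspecified. Without the $1+3$ padding idea and the two-copy gluing, the construction cannot be completed as you propose.
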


\begin{proof}

We will start from the HSE-graph $G$ with partitions $A',B',U$ and edge set $E$. 
We first build a gadget graph $H$ from $G$ as follows.
$H$ has vertex set $A\cup B\cup C\cup D$ where $A$ is a copy of $A'$, $B$ is a copy of $B'$ and $C$ and $D$ are copies of $U$. For $a\in A'$, let its copy in $A$ also be $a$, and for $b\in B'$ let its copy in $B$ also be $b$. For $u\in U$ let its copies in $C$ and $D$ be $u_C$ and $u_D$, respectively.

If $a\in A',u\in U$, we create a directed $4$-cycle connecting $a\in A$ and $u_C$ and a directed $4$-cycle connecting $a\in A$ and $u_D$ as follows.
If $(a,u)\in E$, then there is an edge from $a$ to $u_C$ and a path of length $3$ directed from $u_C$ to $a$ where the internal nodes of the path are of degree 2 in $H$; additionally, there is an edge from $u_D$ to $a$ and a path of length $3$ from $a$ to $u_D$.
If $(a,u)\notin E$, then the roles of the edges and $3$-paths are reversed. That is, there is a $3$-path from $a$ to $u_C$ and an edge from $u_C$ to $a$ and an edge from $a$ to $u_D$ and a $3$-path from $u_D$ to $a$.
Call the set of internal nodes of all the $3$-paths, $X$.
Each edge $(u,b)$ with $u\in U$, $b\in B$ is represented by two directed edges, $(u_C,b),(b,u_D)$.
Note that any cycle in $H$ has length at least $4$ so that any roundtrip distance within $H$ is also at least $4$.
Now, given $H$ as a gadget, create two copies of $H$, $H_1$ on vertex partitions $(A,B_1,C_1,D_1, X_1)$ and $H_2$ on $(A,B_2,C_2,D_2, X_2)$ so that $H_1$ and $H_2$ are glued at $A$. Call this graph $F$ and see Figure~\ref{fig:reduction} for an illustration.

First suppose that the HSE-instance $G$ was a ``yes'' instance, and there is some $a\in A$ such that for all $b\in B$, there is some $u\in U$ with $(a,u),(b,u)\in E$.
Then we will show that $a$ has roundtrip distance at most $4$ to all nodes in $F$ and hence the roundtrip radius is at most $4$. To see this, first note that by construction, $a$ is on a cycle of length $4$ to every node of $D_1\cup C_1\cup D_2\cup C_2$. 
For any other node $a'\in A$, let $u,u'\in U$ be nodes such that $(a,u),(a',u')\in E, (a,u'),(a',u)\notin E$ (recall such $u,u'$ exist). Then $a\rightarrow u_{C_1}\rightarrow a'\rightarrow u_{D_1}\rightarrow a$ is a directed $4$-cycle in $F$. Finally, for any $b_i\in B_i$ for $i=1,2$, if $u\in U$ is such that $(a,u),(u,b)\in G$,
the following is a directed $4$-cycle in $F$: $a\rightarrow u_{C_i}\rightarrow b_i\rightarrow u_{D_i}\rightarrow a$.

Now suppose that the roundtrip radius of $F$ is $<8$ and we will show that the original graph $G$ must be a ``yes" instance.
We first claim that no node of $F\setminus A$ can be a center.

Case 1. Suppose that some node $u_{C_1}$ is a center (the cases $u_{C_2},u_{D_1},u_{D_2}$ are symmetric).
Then consider the roundtrip shortest path to $u_{C_2}$.
Either the portion of the path  from $u_{C_1}$ to $u_{C_2}$, or the one from  $u_{C_2}$ to $u_{C_1}$, must have length at most $3$. 
Assume, w.l.o.g. that $d(u_{C_1} \to u_{C_2}) \leq 3$, and note that the path must go through $A$. Non of the $3$-paths can be used, since the length would become $>3$, which implies that it must be of the form $u_{C_1} \to a \to u_{C_2}$ for some $a \to A$.
However, by construction, if $(a,u_{C_2})\in E(F)$ then $(a,u)\in E(G)$ and $(u_{C_1},a)\notin E(F)$.
Hence $u_{C_1}$ cannot be a center.

Case 2. Suppose that some node $x$ in $X$ is the center and let $u$ be the closest node in $C$ to $x$.
Note that any roundtrip path from $x$ must go through $u$, which implies that $u$ can only be a better center than $x$.
But by case 1, $u$ cannot be the center and therefore neither can $x$.

Case 3. Now consider any two nodes $b_1\in B_1$, $b_2\in B_2$. By construction, $d(b_1,b_2),d(b_2,b_1)\geq 4$ and hence the roundtrip distance is at least $8$. Hence no node of $B_1\cup B_2$ can be a center.


Hence the center of $F$ is some node $a\in A$. 
Consider the roundtrip distance from $a$ to any $b_1\in B_1$. It is supposed to be at most $7$. Any path from $a$ to $b_1$ that does not go directly from $a$ to some node of $C_1$ to $b_1$ must have length at least $4$. Similarly, any path from $b_1$ to $a$ that does not go directly from $a$ to some node of $D_1$ to $a$ must have length at least $4$. Thus, if the roundtrip radius is $<8$, one of the pieces of the roundtrip path (from $a$ to $b_1$ and from $b_1$ to $a$) must be of length $2$, as otherwise the roundtrip path would be of length at least $8$. 
Hence there is some $u\in U$ for which $(a,u),(u,b)\in E$ and the original graph $G$ is a ``yes'' instance of HSE.

To complete the proof, note that our new graph $F$ has $O(n|U|)$ nodes and $O(n |U|)$ edges. This implies that a subquadratic algorithm for sparse graphs that distinguished between roundtrip radius $4$ and $8$ will solve the HSE problem in $O(n^{2-\eps} \cdot |U|^{2-\eps})$ time, for some $\eps>0$, which refutes the HS conjecture. 
\end{proof}

Finally, we observe that the treewidth (in fact, pathwidth) of the graph in our construction is $O(|U|)$ since by removing all nodes in the $C \cup D$ parts of the graph we are left with a disconnected set of paths.
Thus, an algorithm that can compute Radius on treewidth (or pathwidth) $k$ graphs in $2^{o(k)}\cdot n^{2-\eps}$ can be used to solve the HSE problem where $|U|=\omega(\log{n})$ in $O(n^{2-\eps})$ time, refuting the HS conjecture.


\begin{table*}\centering\small
\begin{tabular}{|c | c | c | c|}
\hline
\multicolumn{4}{|c|}{Radius Variants} \\
\hline
Problem & Definition & Upper Bound & HS Conjecture\\
\hline
  \UndirectedRadius{} &
  $\min\limits_c\max\limits_{v} d(c,v)$ &
  $3/2$ in $\tO(m\sqrt{n})$ [\cite{RV13}] &
  $3/2$ [Thm~\ref{thm:radius}] \\
\hline
  \SourceRadius{} &
  $\min\limits_c\max\limits_{v} d(c \to v)$ &
  $2$ in $\tO(m\sqrt{n} \log M)$ [Thm~\ref{thm:source-radius-upper}] &
  $2$ [Thm~\ref{thm:SourceRad}] \\
\hline
  \MaxRadius{} &
  $\min\limits_c\max\limits_{v} \max\{d(c \to v), d(v \to c)\}$ &
  $2$ in $\tO(m)$ [metric]&
  $2$ [Lemma~\ref{lem:max}]
  \\
\hline
  \MinRadius{} &
  $\min\limits_c\max\limits_{v} \min\{d(c \to v), d(v \to c)\}$ & $n$ [Lemma~\ref{thm:finite}]
  &
  $2$ [Lemma~\ref{lem:MinRad}]
  \\
\hline
  \MinRadius{} on DAGs &
  $\min\limits_c\max\limits_{v} \min\{d(c \to v), d(v \to c)\}$ &
  $3$ in $\tO(m\sqrt{n}\log M)$ [Thm~\ref{thm:min-radius-dag-upper}] &
  $2$ [Lemma~\ref{lem:MinRad}]
  \\
\hline
  \RoundtripRadius{} &
  $\min\limits_c\max\limits_{v} \{d(c \to v) + d(v \to c)\}$ &
  $2$ in $\tO(m)$ [metric] &
  $2$ [Thm~\ref{thm:roundtrip}] \\
\hline

\end{tabular}
\caption{Our Bounds for Various Radius Problems}
\label{tab:res-rad}
\end{table*}

\bibliographystyle{plain}
\bibliography{directed}

\begin{thebibliography}{10}

\bibitem{ABV15}
Amir Abboud, Arturs Backurs, and Virginia~Vassilevska Williams.
\newblock Quadratic-time hardness of lcs and other sequence similarity
  measures.
\newblock {\em arXiv preprint arXiv:1501.07053}, 2015.

\bibitem{AGV15}
Amir Abboud, Fabrizio Grandoni, and Virginia~Vassilevska Williams.
\newblock Subcubic equivalences between graph centrality problems, apsp and
  diameter.
\newblock {\em SODA}, 2015.

\bibitem{AWY15}
Amir Abboud, Ryan Williams, and Huacheng Yu.
\newblock More applications of the polynomial method to algorithm design.
\newblock {\em SODA}, 2015.

\bibitem{AV14}
Amir Abboud and Virginia~Vassilevska Williams.
\newblock Popular conjectures imply strong lower bounds for dynamic problems.
\newblock {\em FOCS}, 2014.

\bibitem{AVW14}
Amir Abboud, Virginia~Vassilevska Williams, and Oren Weimann.
\newblock Consequences of faster alignment of sequences.
\newblock In {\em ICALP (1)}, pages 39--51, 2014.

\bibitem{aingworth}
D.~Aingworth, C.~Chekuri, P.~Indyk, and R.~Motwani.
\newblock Fast estimation of diameter and shortest paths (without matrix
  multiplication).
\newblock {\em SIAM J. Comput.}, 28(4):1167--1181, 1999.

\bibitem{AlYuZw97}
N.~Alon, R.~Yuster, and U.~Zwick.
\newblock Finding and counting given length cycles.
\newblock {\em Algorithmica}, 17:209--223, 1997.

\bibitem{BI15}
Arturs {Backurs} and Piotr {Indyk}.
\newblock {Edit Distance Cannot Be Computed in Strongly Subquadratic Time
  (unless SETH is false)}.
\newblock In {\em STOC}, 2015.

\bibitem{Bavelas50}
Alex Bavelas.
\newblock Communication patterns in task-oriented groups.
\newblock {\em The Journal of the Acoustical Society of America}, pages
  725--730, 1950.

\bibitem{Bea65}
Murray~A Beauchamp.
\newblock An improved index of centrality.
\newblock {\em Behavioral Science}, 10(2):161--163, 1965.

\bibitem{BenMoshe}
B.~Ben-Moshe, B.~K. Bhattacharya, Q.~Shi, and A.~Tamir.
\newblock Efficient algorithms for center problems in cactus networks.
\newblock {\em Theoretical Computer Science}, 378(3):237 -- 252, 2007.

\bibitem{BeKa07}
P.~Berman and S.~P. Kasiviswanathan.
\newblock Faster approximation of distances in graphs.
\newblock In {\em Proc. WADS}, pages 541--552, 2007.

\bibitem{Bod88}
Hans~L Bodlaender.
\newblock {\em Dynamic programming on graphs with bounded treewidth}.
\newblock Springer, 1988.

\bibitem{Bod06}
Hans~L Bodlaender.
\newblock Treewidth: characterizations, applications, and computations.
\newblock In {\em Graph-theoretic concepts in computer science}, pages 1--14.
  Springer, 2006.

\bibitem{Bod+13}
Hans~L Bodlaender, P{\aa}l~Gr{\o}n{\aa}s Drange, Markus~S Dregi, Fedor~V Fomin,
  Daniel Lokshtanov, and Michal Pilipczuk.
\newblock An o (c\^{} kn) 5-approximation algorithm for treewidth.
\newblock In {\em Foundations of Computer Science (FOCS), 2013 IEEE 54th Annual
  Symposium on}, pages 499--508. IEEE, 2013.

\bibitem{BCH14}
Michele Borassi, Pierluigi Crescenzi, and Michel Habib.
\newblock Into the square - on the complexity of quadratic-time solvable
  problems.
\newblock {\em CoRR}, abs/1407.4972, 2014.

\bibitem{BCH+15}
Michele Borassi, Pierluigi Crescenzi, Michel Habib, Walter~A. Kosters, Andrea
  Marino, and Frank~W. Takes.
\newblock Fast diameter and radius bfs-based computation in (weakly connected)
  real-world graphs: With an application to the six degrees of separation
  games.
\newblock {\em Theoretical Computer Science}, 2015.
\newblock accepted.

\bibitem{Bring14}
Karl Bringmann.
\newblock Why walking the dog takes time: Frechet distance has no strongly
  subquadratic algorithms unless {SETH} fails.
\newblock {\em FOCS}, 2014.

\bibitem{BK15}
Karl Bringmann and Marvin K{\"u}nnemann.
\newblock Quadratic conditional lower bounds for string problems and dynamic
  time warping.
\newblock {\em arXiv preprint arXiv:1502.01063}, 2015.

\bibitem{cabello:knauer:09}
Sergio Cabello and Christian Knauer.
\newblock Algorithms for graphs of bounded treewidth via orthogonal range
  searching.
\newblock {\em Computational Geometry}, 42(9):815--824, 2009.

\bibitem{Chan12}
T.~M. Chan.
\newblock All-pairs shortest paths for unweighted undirected graphs in {\it
  o}({\it mn}) time.
\newblock {\em ACM Transactions on Algorithms}, 8(4):34, 2012.

\bibitem{chazelle}
B.~Chazelle.
\newblock A minimum spanning tree algorithm with inverse-ackermann type
  complexity.
\newblock {\em J. ACM}, 47(6):1028--1047, 2000.

\bibitem{diametersoda14}
S.~Chechik, D.~Larkin, L.~Roditty, G.~Schoenebeck, R.~E. Tarjan, and
  V.~{Vassilevska Williams}.
\newblock Better approximation algorithms for the graph diameter.
\newblock In {\em Proc. SODA}, 2014.

\bibitem{Chepoi02}
V.~Chepoi, F.~Dragan, and Y.~Vax\`{e}s.
\newblock Center and diameter problems in plane triangulations and
  quadrangulations.
\newblock In {\em Proc. SODA}, pages 346--355, 2002.

\bibitem{ChepoiD94}
V.~Chepoi and F.~F. Dragan.
\newblock A linear-time algorithm for finding a central vertex of a chordal
  graph.
\newblock In {\em ESA}, pages 159--170, 1994.

\bibitem{chung}
F.~R.~K. Chung.
\newblock Diameters of graphs: Old problems and new results.
\newblock {\em Congr. Numer.}, 60:295--317, 1987.

\bibitem{CDPW14a}
Edith Cohen, Daniel Delling, Thomas Pajor, and Renato~F. Werneck.
\newblock Computing classic closeness centrality, at scale.
\newblock {\em CoRR}, abs/1409.0035, 2014.

\bibitem{Corneil01}
D.G. Corneil, F.F. Dragan, M.~Habib, and C.~Paul.
\newblock Diameter determination on restricted graph families.
\newblock {\em Discr. Appl. Math.}, 113:143 -- 166, 2001.

\bibitem{CW99}
L.~Cowen and C.~Wagner.
\newblock Compact roundtrip routing for digraphs.
\newblock In {\em SODA}, pages 885--886, 1999.

\bibitem{CW00}
Lenore~J Cowen and Christopher~G Wagner.
\newblock Compact roundtrip routing in directed networks.
\newblock In {\em Proceedings of the nineteenth annual ACM symposium on
  Principles of distributed computing}, pages 51--59. ACM, 2000.

\bibitem{cygan}
M.~Cygan, H.~Dell, D.~Lokshtanov, D.~Marx, J.~Nederlof, Y.~Okamoto, R.~Paturi,
  S.~Saurabh, and M.~Wahlstrom.
\newblock On problems as hard as {CNFSAT}.
\newblock In {\em Proc. CCC}, pages 74--84, 2012.

\bibitem{DF99}
Rod~G Downey and Michael~Ralph Fellows.
\newblock {\em Parameterized complexity}, volume~3.
\newblock springer Heidelberg, 1999.

\bibitem{Dvir04}
D.~Dvir and G.~Handler.
\newblock The absolute center of a network.
\newblock {\em Networks}, 43:109 -- 118, 2004.

\bibitem{eppstein-planar-jv}
D.~Eppstein.
\newblock Subgraph isomorphism in planar graphs and related problems.
\newblock {\em J. Graph Algorithms and Applications}, 3(3):1--27, 1999.

\bibitem{FG06}
J{\"o}rg Flum and Martin Grohe.
\newblock Parameterized complexity theory, volume xiv of texts in theoretical
  computer science. an eatcs series, 2006.

\bibitem{FHW12}
Silvio Frischknecht, Stephan Holzer, and Roger Wattenhofer.
\newblock Networks cannot compute their diameter in sublinear time.
\newblock In {\em Proceedings of the twenty-third annual ACM-SIAM symposium on
  Discrete Algorithms}, pages 1150--1162. SIAM, 2012.

\bibitem{Hakimi}
S.L. Hakimi.
\newblock Optimum location of switching centers and absolute centers and
  medians of a graph.
\newblock {\em Oper. Res.}, 12:450 -- 459, 1964.

\bibitem{ipz1}
R.~Impagliazzo and R.~Paturi.
\newblock On the complexity of k-sat.
\newblock {\em J. Comput. Syst. Sci.}, 62(2):367--375, 2001.

\bibitem{ipz2}
R.~Impagliazzo, R.~Paturi, and F.~Zane.
\newblock Which problems have strongly exponential complexity?
\newblock {\em J. Comput. Syst. Sci.}, 63(4):512--530, 2001.

\bibitem{indyk99}
Piotr Indyk.
\newblock Sublinear time algorithms for metric space problems.
\newblock In {\em Proceedings of the thirty-first annual ACM symposium on
  Theory of computing}, pages 428--434. ACM, 1999.

\bibitem{Nie04}
R.~Niedermeier.
\newblock Invitation to fixed-parameter algorithms.
\newblock pages 84--103, 2004.

\bibitem{patrascuroditty}
M.~P\v{a}tra\c{s}cu and L.~Roditty.
\newblock Distance oracles beyond the thorup--zwick bound.
\newblock In {\em Proc. FOCS}, pages 815--823, 2010.

\bibitem{PW10}
M.~P\v{a}tra\c{s}cu and R.~Williams.
\newblock On the possibility of faster {SAT} algorithms.
\newblock In {\em Proc. SODA}, pages 1065--1075, 2010.

\bibitem{RV13}
L.~Roditty and V.~{Vassilevska Williams}.
\newblock Fast approximation algorithms for the diameter and radius of sparse
  graphs.
\newblock In {\em Proceedings of the 45th annual ACM symposium on Symposium on
  theory of computing}, STOC '13, pages 515--524, New York, NY, USA, 2013. ACM.

\bibitem{RTZ08}
Liam Roditty, Mikkel Thorup, and Uri Zwick.
\newblock Roundtrip spanners and roundtrip routing in directed graphs.
\newblock {\em {ACM} Transactions on Algorithms}, 4(3), 2008.

\bibitem{RZ04}
Liam Roditty and Uri Zwick.
\newblock On dynamic shortest paths problems.
\newblock In {\em ESA}, pages 580--591, 2004.

\bibitem{Sab66}
Gert Sabidussi.
\newblock The centrality index of a graph.
\newblock {\em Psychometrika}, 31(4):581--603, 1966.

\bibitem{Saha14}
Barna Saha.
\newblock Faster language edit distance, connection to all-pairs shortest paths
  and related problems.
\newblock {\em CoRR}, abs/1411.7315, 2014.

\bibitem{survey83p1}
Barbaros~C Tansel, Richard~L Francis, and Timothy~J Lowe.
\newblock State of the art - location on networks: a survey. part i: the
  p-center and p-median problems.
\newblock {\em Management Science}, 29(4):482--497, 1983.

\bibitem{Thorup05}
Mikkel Thorup.
\newblock Quick k-median, k-center, and facility location for sparse graphs.
\newblock {\em SIAM Journal on Computing}, 34(2):405--432, 2005.

\bibitem{WeYu13}
O.~Weimann and R.~Yuster.
\newblock Approximating the diameter of planar graphs in near linear time.
\newblock In {\em Proc. ICALP}, 2013.

\bibitem{W04}
R.~Williams.
\newblock A new algorithm for optimal constraint satisfaction and its
  implications.
\newblock In {\em Proc. ICALP}, pages 1227--1237, 2004.

\bibitem{VW10}
V.~Vassilevska Williams and R.~Williams.
\newblock Subcubic equivalences between path, matrix and triangle problems.
\newblock In {\em Proc. FOCS}, pages 645--654, 2010.

\bibitem{WN08}
C.~{Wulff-Nilsen}.
\newblock Wiener index, diameter, and stretch factor of a weighted planar graph
  in subquadratic time.
\newblock {\em Technical report, University of Copenhagen}, 2008.

\bibitem{Yuster10}
Raphael Yuster.
\newblock Computing the diameter polynomially faster than apsp.
\newblock {\em arXiv preprint arXiv:1011.6181}, 2010.

\end{thebibliography}

\appendix

\begin{figure}[t]\begin{center}
\includegraphics[width=.8\columnwidth]{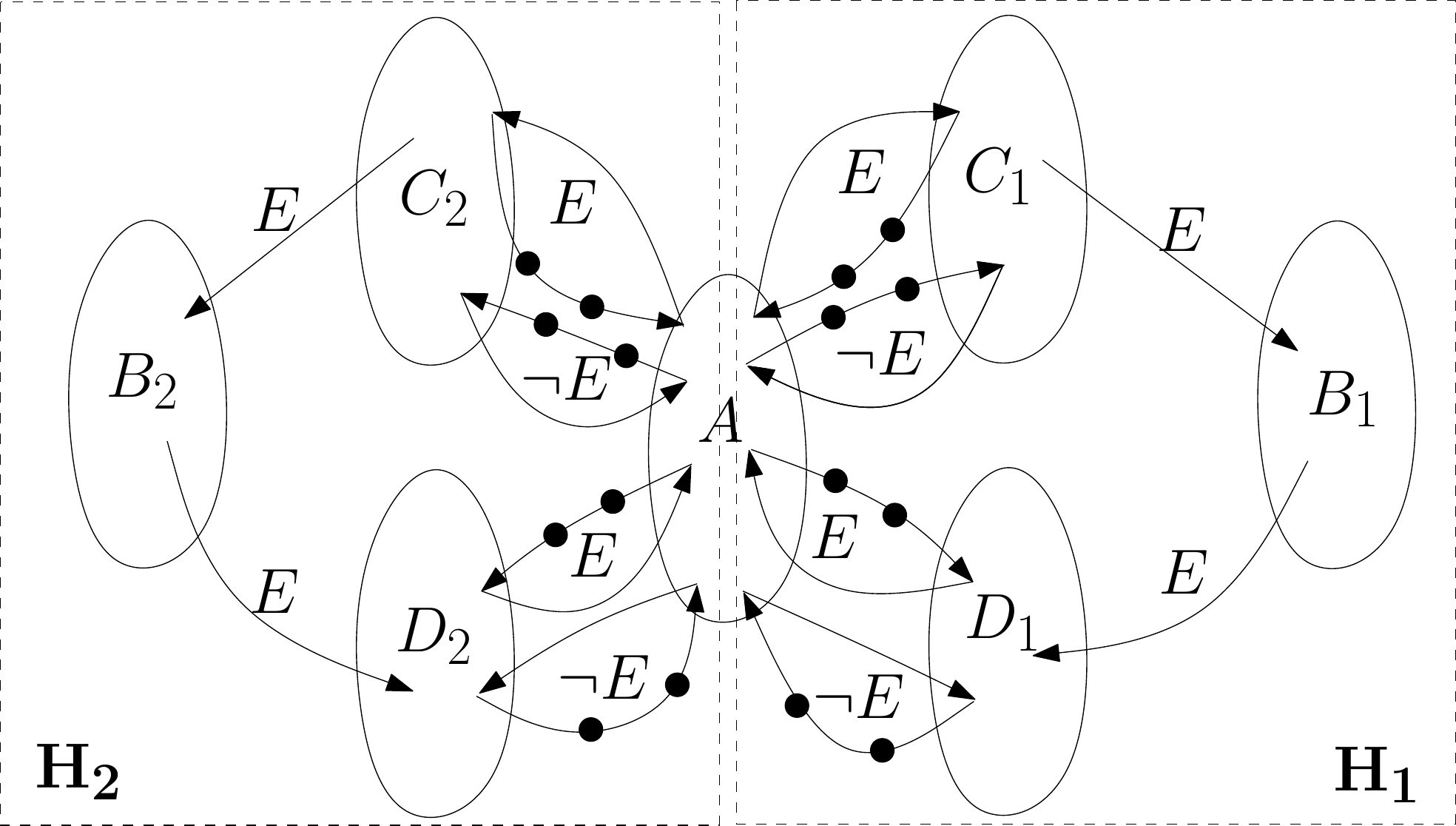} 
\caption{The reduction from HSE to Roundtrip Radius.}\label{fig:reduction}\end{center}
\end{figure}

\section{Subquadratic Reductions}
\label{app:equiv}

\paragraph{Equivalent formulations of the conjectures.}

By very simple reductions, the following problems are equivalent:
\begin{itemize}
\item (Orthogonal Vectors) Given two lists of $n$ vectors in $\{0,1\}^d$ is there an orthogonal pair, one from each list?
\item Given two lists of $n$ sets in $[d]$, is there a pair of sets, one from each list, that are disjoint?
\item Does the product of an $n \times d$ boolean matrix with a $d \times n$ boolean matrix contain any zeros? 
\item (Batch Partial Match) Given a set of $n$ strings of length $d$ over the alphabet $\{0,1,\star\}$, is there a pair that are equal if $\star$ can be treated as any letter?
\end{itemize}

Similarly, the following variants of the above problems are also equivalent:
\begin{itemize}
\item Given two lists of $n$ vectors in $\{0,1\}^d$ is there a vector in the first list that is not orthogonal to any vector in the second list?
\item (HSE) Given two lists of $n$ sets in $[d]$, is there a set in the first list that intersects every set in the second list?
\item Does the product of an $n \times d$ boolean matrix with a $d \times n$ boolean matrix contain a zero in every row? 
\item (No Partial Match) Given two set of $n$ strings of length $d$ over the alphabet $\{0,1,\star\}$, is there a string in the first list that does not match any string from the second list?
\end{itemize}

The first set of problems might look easier because of the alternating quantifiers, which would mean that the HS conjecture should be more likely than the OV conjecture.
However, we show that the opposite is true: a subquadratic algorithm for a problem in the first list will imply a subquadratic algorithm for a problem in the second list.

\begin{proposition}
If the Orthogonal Vectors Conjecture is false, then the Hitting Set Conjecture is also false.
\end{proposition}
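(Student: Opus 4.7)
The plan is to prove this by contrapositive: assume OV admits an $O(n^{2-\eps})$-time algorithm $\mathcal{A}$ for some $\eps>0$ at every dimension $d = c\log n$, and use it to build a subquadratic algorithm for HSE. Identifying sets with their $\{0,1\}$-indicator vectors, HSE$(A,B)$ asks whether some $a\in A$ satisfies $a\not\perp b$ for every $b\in B$. Writing $A_{\mathrm{bad}} = \{a\in A : \exists b\in B,\ a\perp b\}$, HSE is equivalent to deciding whether $A\setminus A_{\mathrm{bad}}\neq\emptyset$, so it suffices to certify $A_{\mathrm{bad}}\neq A$ in subquadratic time using $\mathcal{A}$ as a black box.

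The core idea is to iteratively carve out $A_{\mathrm{bad}}$ via a witness-finding upgrade of $\mathcal{A}$. First, standard divide-and-conquer on $A$ turns $\mathcal{A}$ into a $\tO(n^{2-\eps})$-time routine that either returns a witness pair $(a^*,b^*)$ with $a^*\perp b^*$ or reports NO. Each round, upon receiving $(a^*,b^*)$, I would additionally scan $A$ in $O(nd)$ time to batch-remove every $a'\in A$ with $a'\perp b^*$ (all of which must lie in $A_{\mathrm{bad}}$), and delete $b^*$ from $B$. The loop halts when $\mathcal{A}$ returns NO; at that moment, every remaining $a$ is a hitting set, so HSE is YES iff any $a$ remains.

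The naive accounting gives $r$ rounds times $\tO(n^{2-\eps})$ per round, and $r$ can be as large as $n$, yielding only $\tO(n^{3-\eps})$. To beat quadratic, I would split into two regimes parameterized by a threshold $\tau$: (i) if $|A_{\mathrm{bad}}|\le \tau$, the above iteration terminates in at most $\tau$ rounds for a total of $\tO(\tau\cdot n^{2-\eps})$; (ii) if $|A_{\mathrm{bad}}|>\tau$, a random sample $B'\subseteq B$ together with $\mathcal{A}$-calls on asymmetric sub-instances $(A,B')$ padded to size $n$ should reveal many bad $a$'s per call, because a random $b^*$ has nontrivial probability of witnessing many bad $a$'s on average. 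Choosing $\tau=n^{\eps/2}$ and interleaving both strategies with a doubling guess for $\tau$ yields a final running time of $\tO(n^{2-\eps/2})$, contradicting the HS conjecture.

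The main obstacle is regime (ii): on adversarial instances where $A_{\mathrm{bad}}$ is witnessed by a near-matching of $A_{\mathrm{bad}}$ to $B$, each witness-finding call eliminates only one $a$ via batch removal, and random sub-sampling of $B$ may miss the witnesses needed to classify the remaining bad $a$'s. The crux is a potential-function argument guaranteeing that every $\tO(n^{2-\eps})$-time OV invocation makes $n^{\Omega(\eps)}$ units of combinatorial progress (in $|A|+|B|$ or in the fraction of orthogonal pairs remaining). If this argument fails, a backup plan is to reduce HSE directly to a single OV instance on $n^{1+o(1)}$ vectors of dimension $\mathrm{poly}(d)$ via an explicit encoding gadget; making such a gadget encode the universal quantifier over $B$ without blowing up $n$ by more than a $\mathrm{polylog}$ factor is likely the hardest technical step.
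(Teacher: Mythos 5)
There is a genuine gap, and it is exactly the one you flag yourself: regime~(ii) of your plan is left unresolved, and without it the argument does not close. The deeper issue is that you keep calling OV on instances of full size $n$, so each round costs $\tO(n^{2-\eps})$, and you are then forced to bound the number of rounds -- which is not $O(n^{\eps/2})$ in general (an adversary can arrange that each witness pair $(a^*,b^*)$ kills only one $a$). The random subsampling of $B$ and the potential-function claim that every call makes $n^{\Omega(\eps)}$ progress are both unsubstantiated, and the ``backup plan'' of a single gadgeted OV instance is not developed.

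The fix the paper uses (via Lemma 4.1 of \cite{AWY15}) is a one-line twist on your peeling idea: partition $A$ \emph{and} $B$ into $\sqrt{n}$ blocks of size $\sqrt{n}$ each, and do the peeling at the block level. Run OV on pairs of blocks $(A_i,B_j)$; when it returns a witness $(a^*,b^*)$, delete $a^*$ from the (local) block and repeat; when it returns ``no'', move to the next block pair. Every ``yes'' call removes an element of $A$ and there are at most $n$ such removals, and there are at most $\sqrt{n}\cdot\sqrt{n}=n$ ``no'' calls, so the total cost is $O(n\cdot T(\sqrt{n},d))$. Since $d=c\log n=2c\log\sqrt{n}$, the OV assumption applies to the small instances, and $T(\sqrt{n},d)=O(n^{1-\eps/2})$ gives the $O(n^{2-\eps/2})$ bound for HSE. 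This block decomposition makes the amortization you were hunting for come for free and eliminates the need for your case split and the problematic regime~(ii) entirely.
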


\begin{proof}
By Lemma 4.1 in \cite{AWY15} it is known that a $T(n,d)$ algorithm for OV implies a $O(n \cdot T(\sqrt{n},d))$ algorithm for deciding if there is a vector in the first list that is not orthogonal to any vector in the second list.
The latter problem is equivalent to HSE, and the proposition follows by noticing that if $T(n,d)$ can be bounded by $O(n^{2-\eps})$ when $d=\omega(\log{n})$ then we get an $O(n^{2-\eps/2})$ bound for HSE when $d=\omega(\log{n})$.
\end{proof}

Next, we show a reduction from diameter and radius to OV and HSE.
This is the opposite direction of our lower bound proofs, which allows to conclude that the problems are \emph{subquadratic equivalent} and get new mildly subquadratic algorithms for Diameter and Radius on three-layered graphs.

\begin{lemma}
For any $\Delta \in [n]$, if OV can be solved in $T(n,d)$ time, then there is a randomized algorithm that can distinguish between diameter $2$ and $3$ in $n$ node and $m$ edge graphs, w.h.p, in $\tilde{O}(nm/\Delta + T(n,\Delta^2))$ time.\label{lem:algs}
\end{lemma}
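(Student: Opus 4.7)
The plan is to combine a standard high/low-degree split with a hashing-based dimension reduction. The starting point is the characterization $d(u,v)\geq 3 \iff N[u]\cap N[v]=\emptyset$, which recasts the 2-vs-3 diameter problem as deciding whether the $n$ closed-neighborhood indicators $x_v=\mathbb{1}_{N[v]}\in\{0,1\}^V$ contain an orthogonal pair. The two tasks are (i) handling witness pairs with a high-degree endpoint via $\tilde{O}(nm/\Delta)$ BFS work, and (ii) compressing the resulting OV call down to dimension $\Delta^2$.

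For task (i), I would sample a uniformly random set $S\subseteq V$ of size $\Theta((n/\Delta)\log n)$, so that with high probability $S\cap N(u)\neq\emptyset$ for every $u$ with $\deg(u)\geq\Delta$. Running BFS from every $s\in S$ costs $\tilde{O}(nm/\Delta)$ total; if any BFS finds a vertex at distance $\geq 3$ from its source, output ``diameter $\geq 3$'' and halt. After this phase, every sampled vertex has eccentricity at most $2$, and in particular any surviving witness pair $(u,v)$ whose $u$ has high degree must satisfy $d(u,v)=3$ exactly, because the sampled neighbor $s\in S\cap N(u)$ yields $d(u,v)\leq 1+d(s,v)\leq 3$.

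For task (ii), draw a pairwise-independent hash $h\colon V\to [k]$ with $k=\Theta(\Delta^2\log^{c}n)$ and form the compressed vectors $\tilde{x}_v[j]=\bigvee_{w\in N[v]:\,h(w)=j}1\in\{0,1\}^k$; then invoke the OV oracle on $\{\tilde{x}_v\}_{v\in V}$. Soundness is automatic because hashing only merges $1$-coordinates: any hashed orthogonal pair was already orthogonal and certifies $N[u]\cap N[v]=\emptyset$. For completeness, when both endpoints of a witness pair lie in $L=\{v:\deg(v)<\Delta\}$, a union bound gives collision probability at most $|N[u]|\cdot|N[v]|/k \leq \Delta^2/k = O(1/\log^{c}n)$; standard boosting by taking $O(\log n)$ independent hashes drives the failure probability to $1/\text{poly}(n)$.

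The main obstacle is completeness for a witness pair with a high-degree endpoint, since then $|N[u]|$ can be as large as $n$ and the naive union bound $|N[u]|\cdot|N[v]|/k$ is vacuous. My plan here is to exploit the structural constraint established by task (i): any such pair has $d(u,v)=3$ with a distinguished sampled witness $s\in S\cap N(u)$ satisfying $d(s,v)=2$. I would augment the OV vectors with a short BFS-signature block indicating, for each $s\in S$, whether $d(s,v)\leq 1$; then for any high-endpoint witness pair the coordinate corresponding to $s$ is set in $\tilde{x}_u$ but not in $\tilde{x}_v$, yielding a collision-proof distinguishing coordinate that survives the hashed OV. The delicate bookkeeping is fitting the extra $\tilde{O}(n/\Delta)$ signature coordinates into the $\tilde{O}(\Delta^2)$ dimension budget of a single OV call---either via a case split based on whether $n/\Delta \leq \Delta^2$, or by a secondary folding of the signature block against the sampled set $S$. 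Verifying this bookkeeping, together with the hashing concentration for the low-low case, constitutes the bulk of the proof.
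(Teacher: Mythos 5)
Your low-degree analysis is right and matches the paper (hash coordinates into $O(\Delta^2)$ buckets, soundness because hashing only merges $1$'s, completeness via the union bound $|N[u]|\cdot|N[v]|/k \le \Delta^2/k$, then boost). But there is a real gap in how you handle witness pairs with a high-degree endpoint, and I don't think the "signature block" idea can be made to work.

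The paper's phase (i) is different from yours in a crucial way: it runs Dijkstra/BFS directly \emph{from} every high-degree node. There are at most $O(m/\Delta)$ such nodes, so this costs $\tilde{O}(m^2/\Delta)$, and it directly computes the eccentricity of every high-degree node; if either endpoint of the diameter pair is high-degree, this phase alone certifies diameter $\ge 3$. Your phase (i) instead runs BFS from a random hitting set $S$ of size $\tilde{O}(n/\Delta)$ that hits $N(u)$ for every high-degree $u$. But this only establishes that every $s\in S$ has eccentricity $\le 2$; if $u^*$ is a high-degree diameter witness and $s\in S\cap N(u^*)$, you only get $d(s,v^*)=2$, which is invisible to your BFS sweep. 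So after your phase (i), the algorithm has \emph{not} decided the high-degree case, and you are forced to push it into the OV call --- which is exactly where it breaks.

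The signature-block repair doesn't close the gap. In OV, a coordinate where $\tilde{x}_{u^*}[s]=1$ and $\tilde{x}_{v^*}[s]=0$ contributes $0$ to the inner product and hence certifies nothing: it neither forces nor blocks orthogonality. What you actually need is for the hashed neighborhood blocks of $u^*$ and $v^*$ to remain disjoint, and that is precisely what fails when $\deg(u^*)$ is large: the collision probability $|N[u^*]|\cdot|N[v^*]|/k$ can be $\Omega(n/\Delta)\gg 1$, and no amount of \emph{added} coordinates can undo collisions in the existing hashed block. (There is also a secondary bookkeeping issue you flagged yourself --- $|S|=\tilde{O}(n/\Delta)$ extra coordinates only fit the $\tilde{O}(\Delta^2)$ budget when $\Delta\gtrsim n^{1/3}$ --- but the conceptual problem above is the real one.) The fix is the paper's: eliminate high-degree nodes entirely in phase (i) by running shortest paths from them, so that the OV call only ever has to be complete for low-degree/low-degree pairs, where both neighborhoods have size $<\Delta$ and your union bound is valid.
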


\begin{proof}

We use $\Delta$ as a threshold and say that nodes with degree $<\Delta$ are low-degree and otherwise they are high-degree. 
Dijkstra from every high-degree node, and let $D_1$ be the largest distance found.
This step takes $\tilde{O}(m^2/\Delta)$ since there are $O(m/\Delta)$ high-degree nodes.
Assume $a^*,b^*$ is the witness for the diameter, such that $d(a^*,b^*)=D$, and note that if either of them is a high-degree node then $D_1 = D$.

We now handle the case in which the witnesses of the diameter are (both) low-degree, by reduction to OV.

The first idea is to represent the neighborhood of a node $v$ with a vector $\vec{v} \in \{0,1\}^n$, such that $\vec{a},\vec{b}$ are orthogonal iff the distance between $a$ and $b$ is $>2$.
This is straightforward: associate a number in $[n]$ with each node in the graph and let $\vec{v}[j]=1$ if $\{v,j\}\in E$ or if $v=j$, and $\vec{v}[j]=0$ otherwise. 
The only problem is that the dimension of the vector is large, and we will use the fact that the nodes we care about have low-degree to reduce it.

The second idea is to hash each coordinate $j \in [n]$ into a random coordinate $h(j) \in [d]$ where $d=10\Delta^2$.
We can now define new vectors $\vec{v}'$ so that if $\{v,j\} \in E$ we set $\vec{v}'[h(j)] =1$ and set it to $0$ otherwise.
We now claim that if $d(a,b)>2$ then $\vec{a}',\vec{b}'$ are orthogonal with probability at least $2/3$, while if $d(a,b)\leq 2$ then the vectors are \emph{not} orthogonal with probability $1$.

We construct the vectors and call an oracle for OV. 
If an orthogonal pair was found we set $D_2=3$ and otherwise $D_2=2$.
By the above, we have that if $D=2$ then $D_2=2$ with probability $1$, while if $D>2$ then $D_2=3$ with constant probability.
By repeating the above, we can amplify this probability.
This step takes $\tilde{O}(T(n,\Delta^2))$.

Finally, we output $D'=\max\{D_1,D_2\}$. If the diameter is $2$, we always output $D'=2$, and otherwise we output $3$ with very high probability. 
 
\end{proof}

A similar reduction proves the analogous statement for HSE and distinguishing between radius $2$ and $3$.
Since HSE and OV can be solved in $T(n,d)=n^{2-1/O(\log{(d/\log{n})})}$ \cite{AWY15} we get the algorithms for Diameter and Radius in Lemma~\ref{lem:algs}, by setting $\Delta=2^{\Omega(\sqrt{\log{n}})}$.

\section{Lower Bound for Median}
\label{app:median}

In this section we reduce the HSE problem to computing the median of a sparse unweighted undirected graph to prove Theorem~\ref{thm:median}.
An algorithm for Median outputs the quantity $\min_{c \in V} \sum_{v \in V, c \neq v} d(c,v)$.

\begin{reminder}{Theorem~\ref{thm:median}}
A subquadratic algorithm for finding the median of a sparse unweighted undirected graph refutes the Hitting Set Conjecture.
\end{reminder}

\begin{proof}

Given an instance of the HSE problem we construct the corresponding HSE graph $G$ as in Section~\ref{sec:lb}.
We will construct a graph $G'$ from $G$ such that the median of $G'$ tells us whether $G$ is a ``yes" HSE-instance, as follows.
We start by taking $G$ and un-directing all the edges.
Then we add another copy of $U$ to $G'$, call it $U_N$ and denote a copy of $u\in U$ in $U_N$ by $u_N$.
For every $a \in A$ and $u \in U$, if $(a,u) \notin E(G)$ we add the edge $\{a,u_N\}$ to $E(G')$.
Note that $U_N$ is only connected to $A$ and not to $B$ and that $d(a,u)+d(a,u_N)$ is fixed to $4$ for all $a \in A,u \in U$. 
Then, we add a node $x$ and connect every node in $A$ to $x$, and we add $n'=n|U|$ nodes $x_1,\ldots,x_{n'}$ and connect them all to $x$. Call these nodes $X$.
Similarly, we add a node $y$ and connect it to all of $B$, and add $n'$ nodes $y_1,\ldots,y_{n'}$ and connect them all to $y$. Call these nodes $Y$.
Finally, we add a similar gadget and connect it to $U_N$, that is: a node $z$ that is connected to all of $U_N$ and nodes $z_1,\ldots,z_{n'}$ that are connected to $z$. Call these nodes $Z$.

We claim that the median is exactly $M^* = 9n' + 4n + 4|U| + 4$ if $G$ a ``yes" HSE-instance, and $<M^*$ otherwise.

The distance from any node in $A$ to the nodes in $V(G') \setminus B$ is fixed to $2 (n-1) + |U| \cdot (1+3)  + 1 + 2\cdot n'+3+4n' + 2 + 3n' = M^*-2n$ by construction:
The sum of distances to the nodes in $X$ is exactly $1+2n'$, the sum of distances to the nodes in $Y$ is exactly $3+4n'$, and to the nodes in $Z$ it is $2+3n'$.
The sum of distances to the other nodes in $A$ is $2(n-1)$.
The sum of distances to the nodes in $U \cup U_N$ is exactly $4 |U|$, by construction.

For some node $a \in A$,
the sum of distances to the nodes in $B$ (in $G'$) is exactly $2n$ if $a$ could reach every node in $B$ in $G$ (i.e. $a$ was a hitting set), and at least $2n+2$ otherwise.

Thus, the sum of distances from $a \in A$ is exactly $M^*$ iff $a$ is a hitting set, and is at least $M^*+2$ otherwise.

Now, we show that any node that is not in $A$ will have sum of distances greater than $M^*$.
To see this, note that the dominant terms in the sum of distances are the distances to $X,Y,Z$, because of their sizes.
$A$ has distance $2$ to $X$, distance $4$ to $Y$ and distance $3$ to $Z$.
The node $x$ has distance $1$ to $X$ but distance $5$ to $Y$  and $4$ to $Z$, which makes it worse than the nodes of $A$.
The node $y$ has distance $1$ to $Y$ but distance $5$ to $X$ and $6$ to $Z$, and the node $z$ is very far from $Y$.
Similarly, the nodes in $U_N$ are closer by $1$ to $Z$ but further by $1$ from $X,Y$, the nodes in $U$ are closer by $1$ to $Y$ but further by $1$ from $X,Z$, the nodes in $B$ are closer by $2$ to $Y$ but further by $2$ to $X,Z$.
The remaining nodes to consider are the nodes in $X \setminus \{x\}$ (and analogously for $Y,Z$) but those are clearly worse than $x$.

Therefore, the median of $G'$ is $M^*$ iff $G$ was a ``yes" HSE-instance.
To complete the proof node that $G'$ has $O(n|U|)$ nodes and $O(n|U|)$ edges, which implies that a subquadratic algorithm on sparse graphs will solve HSE in subquadratic $\tilde{O}{n^{2-\eps}}$ time for $|U|=\omega(\log{n})$ and refutes the HS conjecture.
\end{proof}


\section{Missing Algorithms}
\label{app:algs}
In this appendix, we cover approximation algorithms for \MinDiameter{} and
\MinRadius{}, as well as giving algorithms for directed diameter and radius for
graphs of small treewidth.

\subsection*{\MinDiameter{}}

We first present an algorithm for \MinDiameter{} on general graphs.

\begin{lemma}
\label{lem:min-diameter-upper}
  Given $\epsilon \ge 0$, there is a $\tO(m n^{1 - \epsilon})$-time algorithm
  that approximates \MinDiameter{} on directed graphs within a factor of
  $n^{\epsilon}$.
\end{lemma}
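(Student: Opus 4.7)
The plan is to prove the lemma with a direct random sampling algorithm: sample a set $S$ of $\Theta(n^{1-\epsilon}\log n)$ vertices uniformly at random, run both a forward and a backward Dijkstra from every $s\in S$ (obtaining $d(s,v)$ and $d(v,s)$ for all $v\in V$), and return
\[
  D' \;:=\; \max_{s\in S,\ v\in V}\min\!\bigl(d(s,v),\, d(v,s)\bigr).
\]
The running time is $\tO(|S|\cdot m) = \tO(mn^{1-\epsilon})$, as required. The upper bound $D'\le D$ on the true MinDiameter $D$ is immediate, since each term in the max is the min-distance of an actual pair.

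For the lower bound $D'\ge D/n^\epsilon$, I would fix a witness pair $(u^*,v^*)$ of $D$, WLOG with $d(u^*,v^*)=D$ and $d(v^*,u^*)\ge D$, and look at a shortest $u^*\to v^*$ path $u^*=p_0,p_1,\ldots,p_D=v^*$. Since this is a shortest path of length exactly $D$, every $p_i$ satisfies $d(u^*,p_i)=i$ and $d(p_i,v^*)=D-i$. I would then focus on the ``middle'' $M := \{p_i : D/n^\epsilon \le i \le D - D/n^\epsilon\}$, which has $|M|=\Omega(D)$. In the only non-trivial regime $D\ge n^\epsilon$ this gives $|M|=\Omega(n^\epsilon)$, so a standard Chernoff/hitting-set argument ensures $S\cap M\ne\emptyset$ w.h.p.; the residual case $D<n^\epsilon$ is trivial since $D/n^\epsilon<1$ and any output identifying a pair at positive min-distance works.

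The heart of the argument is then a short case analysis for a sampled $s=p_i\in M$. Using the reverse-direction bound $d(v^*,u^*)\ge D$ together with the triangle inequality $d(v^*,u^*)\le d(v^*,p_i)+d(p_i,u^*)$, at least one of $d(p_i,u^*)$ and $d(v^*,p_i)$ is $\ge D/2$. If $d(p_i,u^*)\ge D/2$, the pair $(s,u^*)$ yields $\min(d(s,u^*),d(u^*,s))\ge \min(D/2,i)\ge D/n^\epsilon$; if instead $d(v^*,p_i)\ge D/2$, the pair $(s,v^*)$ yields $\min(d(s,v^*),d(v^*,s))\ge \min(D-i,D/2)\ge D/n^\epsilon$. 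Either way $D'\ge D/n^\epsilon$. I expect the main subtlety to be exactly this case analysis: a priori, sampling on the forward $u^*\to v^*$ path tells us nothing about $d(s,u^*)$ or $d(v^*,s)$, so one must exploit the reverse distance $d(v^*,u^*)\ge D$, and restricting $s$ to the middle of $P$ (forcing both $i$ and $D-i$ to be at least $D/n^\epsilon$) is precisely what lets the two subcases pair with the $D/2$ bound and jointly deliver the $n^\epsilon$ approximation.
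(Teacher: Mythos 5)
Your proposal is correct and follows essentially the same route as the paper: sample a hitting set of $\tilde{O}(n^{1-\epsilon})$ nodes, run forward and backward Dijkstra from each, argue that some sample falls in the middle of a shortest $u^*\to v^*$ path, and then exploit the reverse distance $d(v^*,u^*)\ge D$ via the triangle inequality to guarantee a long min-distance witness among the sampled pairs. The only (immaterial) difference is that the paper uses the ``middle third'' together with a $D/3$ bound while you use a slightly wider middle band $[D/n^\epsilon,\,D-D/n^\epsilon]$ with the sharper $D/2$ bound.
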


\begin{proof}
  Suppose that the diameter is realized by the pair of points $(u^*, v^*)$ where
  $d(u^*, v^*) = D$ and $d(v^*, u^*) \ge D$. If $D$ is at most $n^\epsilon$,
  then any edge is a sufficient approximation.

  Consider the case where $D$ is larger than $n^\epsilon$. We choose a hitting
  set $S$ of $\tO(n^{1 - \epsilon})$ nodes that, with high probability, hits the
  middle third of any shortest path longer than $n^\epsilon$. In particular it
  hits the middle third of the shortest path from $u^*$ to $v^*$ at vertex $w$,
  so that $d(u^*, w) \ge \frac{D}{3}$ and $d(w, v^*) \ge \frac{D}{3}$. Notice
  that one of $d(v^*, w)$ or $d(w, u^*)$ must also be at least $\frac{D}{3}$
  long, otherwise $d(v^*, u^*) < D$.

  Hence if we Dijkstra from all nodes in $S$ and return
  $\max_{s \in S, v \in V} \min \{d(s, v), d(v, s)\}$, this yields a
  $n^\epsilon$-approximation. But this takes only $\tO(m n^{1 - \epsilon})$
  time, which completes the proof.
\end{proof}

We get a much better algorithm for \MinDiameter{} on DAGs, since we can use the
topological order of the graph to run a divide-and-conquer.

\begin{theorem}
\label{thm:min-diameter-dag-upper}
  There is a $O(m \log n)$-time algorithm that approximates
  \MinDiameter{} on a DAG $G$ within a factor of $2$.
\end{theorem}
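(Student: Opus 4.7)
The plan is to use a divide-and-conquer strategy on the topological order of the DAG, mirroring the approach used for $\textsc{MinRadius}$ on DAGs (Theorem~\ref{thm:min-radius-dag-upper}). The structural fact I would rely on is that in any DAG, every shortest path between two vertices $v_i, v_j$ with $i < j$ in topological order uses only intermediate vertices in the range $[i,j]$; consequently, the distances inside the induced subgraph on any topological range agree with the full-graph distances for pairs contained in that range.

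After computing a topological order $v_1,\ldots,v_n$ in $O(m+n)$ time, the recursive subroutine operates on a contiguous range $[\ell, r]$. It picks the median vertex $c = v_{\lfloor(\ell+r)/2\rfloor}$, runs a forward SSSP from $c$ and a backward SSSP into $c$ inside the induced subgraph $G[\{v_\ell,\ldots,v_r\}]$ in $O(m_{\ell,r})$ time, and computes $e_c = \max_v \min\{d(c\to v),\,d(v\to c)\}$, where pairs whose both distances are infinite are discarded. It then recurses on $[\ell, c-1]$ and $[c+1, r]$ to obtain $D_L$ and $D_R$, and returns $\max\{e_c, D_L, D_R\}$. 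For the running time, the subranges at any fixed recursion depth partition the vertices and hence each edge belongs to the induced subgraph of at most one subrange per level; every level therefore costs $O(m)$ in SSSP work, and since the recursion depth is $O(\log n)$ the total is $O(m\log n)$.

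For correctness, every reported value is a Min-eccentricity of some concrete vertex, hence at most $D$. For the matching $2$-approximation lower bound, let $(u^*,v^*)$ realize $D$ with $u^*\prec v^*$ in the topological order, so that $D = d(u^*\to v^*)$ and $d(v^*\to u^*) = \infty$. Consider the first recursion level at which the chosen median $c$ strictly separates $u^*$ from $v^*$; at earlier levels the pair stays together in a single subrange, so induction applies. At this level, the current range $[\ell, r]$ contains both $u^*$ and $v^*$, and by the structural fact it contains every vertex on a shortest $u^*\to v^*$ path. When the median $c$ happens to lie on such a path, we have $d(u^*\to c) + d(c\to v^*) = D$, and both quantities are at most $e_c$ (because $u^*\prec c$ forces $\min\{d(c\to u^*), d(u^*\to c)\} = d(u^*\to c)$, and symmetrically for $v^*$), so $e_c \ge D/2$.

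The main obstacle is the case where the median $c$ is topologically between $u^*$ and $v^*$ but does not lie on any shortest $u^*\to v^*$ path (for instance, $c$ is unreachable from $u^*$). Then the triangle-inequality bound fails at this level, and since the recursion thereafter separates $u^*$ and $v^*$ into different subgraphs, the $u^*\to v^*$ path is no longer captured as a whole in any descendant subproblem. My plan to overcome this is to augment the algorithm so that at each range $[\ell,r]$ we also run SSSPs from the endpoints $v_\ell$ and $v_r$ and include $e_{v_\ell}, e_{v_r}$ in the reported maximum: this only multiplies the work by a constant and preserves the $O(m\log n)$ bound, and it ensures that at the first level whose subrange contains the full optimal path, one of the three SSSPs either originates from or lies on the path, yielding a value $\ge D/2$. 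Formalizing that at some recursion level this three-way maximum always reaches $D/2$, and handling the degenerate case where $\textsc{MinDiameter}(G)=\infty$ because $G$ has an incomparable pair (detectable in $O(m)$ by checking uniqueness of the topological order), is the bulk of the remaining technical work.
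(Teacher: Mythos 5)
Your overall strategy---divide and conquer on the topological order, pick the median vertex $c$ of the current range, compute $e_c=\max_v\min\{d(c\to v),d(v\to c)\}$ by two SSSP passes, recurse on both halves, and return the overall max---is the same approach the paper takes. The paper's algorithm also does exactly one forward and one backward DP from the median of the current range, returns $\max\{\max_{v\le w}d(v,w),\max_{w\le v}d(w,v)\}$ (which is precisely the Min-eccentricity of $w$, since in a DAG the ``wrong'' direction is always infinite), and recurses on the two halves. The running time analysis (disjoint ranges per level, $O(\log n)$ levels) is also the same.

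However, the ``main obstacle'' you identified is not actually an obstacle, and the fix you propose (running extra SSSPs from the endpoints $v_\ell,v_r$) is unnecessary. You worry that if the median $c$ does not lie on any shortest $u^*\to v^*$ path, ``the triangle-inequality bound fails.'' It does not. The bound $D=d(u^*\to v^*)\le d(u^*\to c)+d(c\to v^*)$ holds for \emph{every} vertex $c$, not just those on a shortest path: concatenating a shortest $u^*\to c$ path with a shortest $c\to v^*$ path gives \emph{some} $u^*\to v^*$ path, hence of length at least $D$. It is an inequality, not an equality. The only way this could fail to yield $e_c\ge D/2$ is if one of the two quantities is infinite. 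But if, say, $d(u^*\to c)=\infty$ with $u^*\prec c$ in topological order, then $\min\{d(u^*\to c),d(c\to u^*)\}=\infty$ and the true $\MinDiameter{}$ is already $\infty$---which you have agreed to detect separately by the topological-order uniqueness check (equivalently, checking that each consecutive pair $v_i,v_{i+1}$ is joined by an edge). So once the degenerate infinite case is filtered out, every ordered pair is reachable, both $d(u^*\to c)$ and $d(c\to v^*)$ are finite, and the triangle inequality directly gives you one of them $\ge D/2$; in either case, since $c$ lies topologically between $u^*$ and $v^*$, that quantity lower-bounds $e_c$. This is exactly the argument the paper uses (their ``Case 3''). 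You do not need the endpoint SSSPs, and the ``bulk of the remaining technical work'' you defer is in fact already done by the basic triangle-inequality case analysis. Also note you should not silently discard pairs where both directed distances are infinite when computing $e_c$---that is precisely the signal that $\MinDiameter{}=\infty$---but since you have a separate linear-time check for this, the point is moot as long as you run that check first.
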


\begin{proof}
  Since $G$ is a DAG, we can run a topological sort in $O(m)$ time and use this
  order to relabel the vertices as $\{0, 1, 2, \ldots, n-1\}$ so that edges run
  from lower- numbered nodes to higher-numbered nodes. Suppose that the diameter
  is realized by the pair of points $(u^*, v^*)$, $u^* < v^*$. There are three
  possible cases:
  \begin{enumerate}
    \item $u^*, v^* < \frac{n}{2}$;
    \item $\frac{n}{2} \le u^*, v^*$;
    \item $u^* < \frac{n}{2} \le v^*$.
  \end{enumerate}

  In case (3), consider node $\frac{n}{2}$, which we denote as $w$.
  $d(u^*, v^*) \le d(u^*, w) + d(w, v^*)$ and so either
  $d(u^*, w) \ge \frac{D}{2}$ or $d(w, u^*) \ge \frac{D}{2}$. Moreover, since
  $u^* \le w \le v^*$, returning \\
  $\max\{\max_{v \le w} d(v, w), \max_{w \le v} d(w, v)\}$ definitely yields a
  $2$-approximation for the diameter. Note that $d(v, w)$ and $d(w, v)$ can be
  computed for all $v$ with a DP in $O(m)$ time.

  Otherwise, if case (3) does not hold, run the algorithm recursively on the
  subgraphs of $G$ induced by the first and last $\frac{n}{2}$ nodes in
  topological order. Building these induced graphs takes $O(m)$ time. There
  are $\log n$ levels of recursion, and each level takes $O(m)$ time: $2^i$
  DPs on $\frac{n}{2^i}$ nodes each where the total number of edges is at most
  $m$. The total time is hence $O(m \log n)$.
\end{proof}

\subsection*{\MinRadius{}}

\MinRadius{} is a difficult problem on general graphs, but it turns out that we
can determine which vertices have a finite min-eccentricity:

\begin{lemma}\label{thm:finite}
  There is a $O(m + n)$-time algorithm that determines which vertices in a
  directed graph $G$ have a finite min-eccentricity.
\end{lemma}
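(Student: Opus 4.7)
The plan is to reduce the problem to a comparability property on the SCC condensation of $G$, and then decide that property via a short structural test that can be checked in linear time. Observe first that $c$ has finite min-eccentricity iff for every other $v$ at least one of $d(c \to v), d(v \to c)$ is finite, i.e., $c$ and $v$ are comparable in the reachability order. Because reachability is constant on each strongly connected component, this property is preserved when $c$ is replaced by any vertex in its SCC, and in the condensation DAG $D$ of $G$ it says exactly that $c$'s SCC is comparable to every other SCC of $D$. I would therefore compute the SCC decomposition and $D$ in $O(n+m)$ time via Tarjan's algorithm and reduce the problem to marking each SCC of $D$ as comparable-to-all or not.

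Next I would fix any topological order $T_1 \prec T_2 \prec \cdots \prec T_k$ of $D$ and prove the following characterization: $T_i$'s SCC has finite min-eccentricity iff (a) every $T_j$ with $j > i$ has some in-neighbor $T_{j'}$ with $j' \geq i$, and (b) every $T_j$ with $j < i$ has some out-neighbor $T_{j'}$ with $j' \leq i$. The ``only if'' direction of (a) is immediate because any witnessing path from $T_i$ to $T_j$ must enter $T_j$ from a vertex whose index lies in $[i, j-1]$. The ``if'' direction uses the standard fact that a DAG whose unique source is a vertex $v$ has $v$ reaching every other vertex: the subgraph of $D$ induced on $\{T_i, \ldots, T_k\}$ has $T_i$ as a source (topological order forbids any in-edge to $T_i$ from within this set), and (a) says this source is unique, so $T_i$ reaches every $T_j$ with $j > i$. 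Condition (b) is symmetric, applied to the reverse of $D$.

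Given the characterization, the algorithm is a short scan. Precompute $\text{maxPred}(j) := \max\{j' : T_{j'} \to T_j \in E(D)\}$ (or $-\infty$ if $T_j$ has no predecessor) and $\text{minSucc}(j) := \min\{j' : T_j \to T_{j'} \in E(D)\}$ (or $+\infty$). Then (a) for every $i$ is decided by the suffix minimum of $\text{maxPred}$ over $j > i$, and (b) for every $i$ by the prefix maximum of $\text{minSucc}$ over $j < i$. Both scans run in $O(k + |E(D)|) = O(n+m)$ time; marking each original vertex by whether its SCC passes both tests gives the required output.

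The main obstacle is the ``if'' direction of the characterization, where care is needed to handle ``skip'' edges of $D$ that could a priori let a late vertex reach some earlier in-neighbor only via a path of length greater than one. The unique-source argument resolves this: as long as every later $T_j$ has \emph{some} in-neighbor with index in $[i, j-1]$, the process of tracing in-edges from $T_j$ never leaves $\{T_i, \ldots, T_k\}$ and must terminate at its unique source, which is $T_i$, yielding the required path from $T_i$ to $T_j$.
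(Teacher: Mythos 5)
Your proof is correct and follows essentially the same approach as the paper's: reduce to the SCC condensation DAG, fix a topological order, characterize ``comparable to all'' by a local condition on each other vertex's nearest in-/out-neighbor relative to the candidate, and verify it with a linear scan. The only cosmetic difference is that the paper realizes the scan via a running count of violating predecessors, while you phrase it as a prefix-max of $\mathrm{minSucc}$ and a suffix-min of $\mathrm{maxPred}$; the two are equivalent and both run in $O(n+m)$.
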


\begin{proof}
  In linear time, we can compute the strongly connected components of $G$. Notice
  that a vertex has a finite min-eccentricity iff its SCC's vertex in the SCC
  graph has a finite min-eccentricity. Hence it suffices to consider the problem
  on DAGs.

  We first compute a topological order of the vertices, which can be done in
  linear time. It suffices for us to determine which nodes can be reached by all
  nodes before them in the topological order, since then we could also compute
  which nodes can reach all nodes \emph{after} them in the topological order by
  symmetry.

  We precompute, for each node, the first node in the topological order it has an
  edge to. This can be done in linear time by taking a minimum over all the
  edges coming out of a node.

  Fix some node $v$. Suppose that every node before $v$ has an edge to a node
  which is before $v$ or is $v$. Then every node before $v$ can reach $v$, since
  we can keep taking edges that do not take us past $v$, and each edge moves us
  forward in the DAG.

  Hence for each node $v$, we will count the number of nodes before $v$ that have
  an edge to a nodes which is before $v$ or is $v$. However, this is easy to do
  with our precomputation. The count is zero for the first node, and the count
  for the $i^{th}$ node is the count for the $(i-1)^{th}$ node plus one (for the
  $(i-1)^{th}$ node itself and minus the number of nodes whose first outward
  edges is to the $i^{th}$ node. Hence we can compute these counts in linear
  time. Nodes can be reached by all nodes before them in topological order iff
  their count is zero, so we can finish in linear time.

  All of our computations took $O(m + n)$ time, as desired. This completes the
  proof.
\end{proof}

Like \MinDiameter{}, \MinRadius{} turns out to be easier on DAGs since we can run a
divide-and-conquer:

\begin{theorem}
\label{thm:min-radius-dag-upper}
  There is a $O(m\sqrt{n}(\log Mn))$-time algorithm that approximates
  \MinRadius{} on a DAG $G$ within a factor of $3$.
\end{theorem}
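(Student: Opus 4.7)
My approach would combine three ingredients: binary search on the radius value $R$ (providing the $\log(Mn)$ factor), a hitting-set sampling step of size $\sqrt n$ in the spirit of the \SourceRadius{} algorithm (Theorem~\ref{thm:source-radius-upper}), and exploitation of the DAG's topological structure.

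First I would topologically sort $G$. A clean structural observation is that for any candidate center $c$ at topological position $k$, every other node is comparable to $c$, so the min-eccentricity reduces to $E(c) = \max(\max_{v > c} d(c \to v),\, \max_{v < c} d(v \to c))$, and this quantity is finite iff $c$ reaches every topological successor and is reached from every topological predecessor. In particular, the optimal $c^*$ with $E(c^*) = R^*$ has all predecessors within one-way distance $R^*$ into $c^*$ and all successors within one-way distance $R^*$ out of $c^*$, so these two sets together cover all of $V \setminus \{c^*\}$ and one has size $\ge (n-1)/2$.

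For a fixed target $R$ in the binary search, I would sample a hitting set $S_1$ of $\Theta(\sqrt n \log n)$ vertices and run forward and backward Dijkstra from each $s \in S_1$ in total time $\tilde O(m \sqrt n)$. From these outputs I can compute $E(s)$ for each $s \in S_1$; if any satisfies $E(s) \le 3R$, return it. Otherwise, by the hitting-set property applied to the larger of $c^*$'s predecessor and successor sets, at least one $s^* \in S_1$ satisfies either $d(s^* \to c^*) \le R$ or $d(c^* \to s^*) \le R$. I would then apply the furthest-vertex plus secondary hitting set $S_2$ trick as in \SourceRadius{}: identify a vertex $w$ that is extremal in one-way distance from $S_1$ (simulating the search over all radii simultaneously), extract $S_2$ as the $\sqrt n$ vertices closest to $w$ in the appropriate direction, run forward and backward Dijkstra from each $s \in S_2$, compute $E(s)$, and return the best over $S_1 \cup S_2$. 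The $3$-approximation guarantee should arise by composing at most three hops of length $R$: one from a vertex in $S_1$ to $c^*$, one through $c^*$ to a sampled vertex in $S_2$ (bridging the asymmetric one-way structure), and one from there to the target vertex.

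The main obstacle is the one-way asymmetry that is absent from \SourceRadius{}: if $s^* \in S_1$ is a predecessor of $c^*$, then $s^*$ still needs to reach \emph{every} topological successor of itself (not merely those after $c^*$) within $3R$ in order to itself be a good center. In a DAG this is not automatic, since $s^*$ may fail to reach intermediate vertices that lie between $s^*$ and $c^*$ in the topological order, which can make $E(s^*)$ infinite. The fix is to argue that the furthest-vertex $w$ together with the secondary set $S_2$ captures a universal-reach candidate near $c^*$; making this precise requires a careful case analysis that separately handles the predecessor and successor sides of $c^*$, and shows that in the case no $s \in S_1$ works, the ``few neighbors'' condition that enables the second round must hold, so $c^*$ (or a $3$-approximation of it) lies in $S_2$. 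This is where the specific DAG structure, plus the fact that we only need a $3$-approximation rather than a $2$-approximation, gives us the slack needed to close the argument.
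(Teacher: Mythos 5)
Your proposal reuses the \SourceRadius{} hitting-set machinery (Theorem~\ref{thm:source-radius-upper}), but this approach has a genuine gap that your last paragraph hints at without resolving, and I don't think it can be resolved in this framework. The key issue is that the hitting-set argument in \SourceRadius{} is designed for a dichotomy that does not arise in \MinRadius{} on a DAG. For \SourceRadius{}, either $S_1$ hits $\mathrm{Pre}(c^*, R)$ — in which case the hit vertex is immediately a $2$-approximate center, since it reaches everything through $c^*$ — or $\mathrm{Pre}(c^*, R)$ is small, which is exactly the precondition that makes the furthest-vertex-plus-$S_2$ trick find $c^*$. For \MinRadius{} on a DAG, the center $c^*$ has $d(v \to c^*) \le R$ for \emph{every} topological predecessor $v$ and $d(c^* \to v) \le R$ for \emph{every} successor, so $\mathrm{Pre}(c^*, R)$ is essentially never small, $S_1$ essentially always hits it, and yet the hit vertex $s^*$ need not be a good center at all. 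Concretely, if $s^* < c^*$ with $d(s^* \to c^*) \le R$, there is no bound on $d(v \to s^*)$ for $v < s^*$ (there is no path from $v$ to $s^*$ through $c^*$ in a DAG), nor on $d(s^* \to v)$ for $s^* < v < c^*$; both can be infinite. So the first phase can fail to return, and the second phase has nothing to exploit: the furthest-vertex argument needs the ``$\mathrm{Pre}$ is small'' case to fire, and here it never does. Your final paragraph asserts that the ``few neighbors'' condition must hold and that $c^*$ lies in $S_2$, but that is precisely what is false.

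The paper's actual proof takes a completely different route. After the topological sort, it uses the structural observation that if $u < v$ and $d(u \to v) > 2R$, then \emph{no} vertex in the topological interval $[u, v]$ can have eccentricity at most $R$ — because such a vertex would be reachable from $u$ within $R$ and would reach $v$ within $R$, contradicting $d(u \to v) > 2R$. The algorithm then places $\sqrt{n}$ evenly-spaced probes, runs one DAG DP (not Dijkstra) per probe, and for each probe $A[i]$ either certifies that $A[i]$ is a $2R$-approximate center, or finds a witness that eliminates a topological interval containing $A[i]$. The union of eliminated intervals covers all probes, so the uncovered gaps each have $O(\sqrt{n})$ vertices; the algorithm brute-forces those, charging each edge to at most two gaps. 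This interval-elimination divide-and-conquer is where the DAG structure does real work, and you would want to reach for that rather than random sampling here. The binary-search-over-$R$ outer loop and the overall $O(m\sqrt{n}\log(Mn))$ budget are the only parts of your plan that match.
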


\begin{proof}
  First we show that there is an algorithm that, given the radius $R$, finds
  a vertex $v \in V$ such that $\epsilon(v) \le 3R$ or guarantees that for all
  vertices $v \in V$, $\epsilon(v) > R$. This algorithm will run in
  $O(m\sqrt{n})$ time. From this claim, we can binary search for $R$ in the
  range $[0, Mn]$, yielding the desired result.

  Since $G$ is a DAG, we can run a topological sort and use this order to
  relabel the vertices as $\{0, 1, 2, \ldots, n-1\}$ so that edges run from
  lower- numbered nodes to higher-numbered nodes. Notice that since $G$ is a
  DAG, if we choose $u, v \in V$ with $u < v$, $d(v, u) = \infty$ so we are
  only concerned with $d(u, v)$. Furthermore, suppose that $d(u, v) > 2R$. We
  claim that the center cannot be in the interval $[u, v]$, since then there
  is a path from $u$ to $v$ through the center with length at most $2R$.

  Algorithm~\ref{alg:apprx-center} uses this observation to return a vertex
  with eccentricity at most $3R$ or guarantees all vertices have eccentricity
  strictly more than $R$.
  \begin{algorithm}
  \label{alg:apprx-center}
  \caption{\small ApproximateCenter($G, R$)}
    Initialize a vector $A$ with $\sqrt{n}$ evenly-spaced vertices, i.e.
      $A[i] = i\frac{n-1}{\sqrt{n}-1}$\;
    \For {$i = 0, 1, \ldots, \sqrt{n}-1$} {
      Use a DP to compute $d(v, A[i])$ and $d(A[i], v)$ for all $v \in V$\;
      \If{$\forall v \in V$, $\min(d(v, A[i]), d(A[i], v)) \le 2R$} {
        \Return $A[i]$\;
      }
    }

    Let $S$ be a stack of vertex intervals, intially empty\;

    \For {$i = 0, 1, \ldots, \sqrt{n}-1$} {
      Let $\ell$ be the topologically-first vertex $v$ such that $d(v, A[i]) > 2R$,
        or $A[i]$ if no vertex satisfies this condition\;
      Let $r$ be the topologically-last vertex $v$ such that $d(A[i], v) > 2R$,
        or $A[i]$ if no vertex satisfies this condition\;
      Suppose the top interval of $S$ is $[a, b]$. If $\ell \le b + 1$, then pop
      $[a, b]$ and push $[a, r]$. Otherwise, just push $[\ell, r]$.
    }

    \For {adjacent vertex intervals $[a, b]$ and $[c, d]$ in $S$} {
      \For {$u \in [b+1, c-1]$} {
        Use a DP to compute $d(v, u)$ and $d(u, v)$ for all $v \in [a, d]$\;
        \If{$\forall v \in [a, d]$, $\min(d(v, u), d(u, v)) \le R$} {
          \Return $u$\;
        }
      }
    }

    \Return all vertices have eccentricity strictly greater than $R$\;
  \end{algorithm}

  First, we will show that Algorithm~\ref{alg:apprx-center} is correct.
  If it returns some node $A[i]$, then every node was within $2R$ of that node
  and hence it does have eccentricity at most $3R$. Otherwise, each node $A[i]$
  has some node $u_i$ that is strictly more than $2R$ away (in the appropriate,
  non-infinite direction). If $u_i < A[i]$, then $[u_i, A[i]]$ cannot contain
  a vertex of eccentricity at most $R$. Similarly, if $A[i] < u_i$, then
  $[A[i], u_i]$ cannot contain a vertex of eccentricity at most $R$. Hence
  every interval of $S$ cannot contain a vertex of eccentricity at most $R$.

  The next phase of the algorithm searches the regions between adjacent
  intervals of $S$ (note that since the first node is in an interval of $S$, as
  well as the last node, all remaining nodes fall between two intervals of $S$).
  Suppose that some $u \in [b+1, c-1]$ can reach all $v \in [a, d]$ in at most
  $R$ distance, either forward or backwards. Then consider the node of $A$
  immediately to its left, $A[i]$. $u$ can reach $A[i]$ (backward) in at most
  $R$ distance. By construction $a$ is either the topologically-first vertex
  $v$ that cannot be reached by $A[i]$ (backwards) in $2R$ distance, or lies
  before that (due to a union with an even earlier region). Hence $u$ can reach
  \emph{all nodes} (backwards) to the left of $a$ with at most $3R$ distance by
  going through $A[i]$. Hence $u$ can reach all nodes before it (backwards)
  using only $3R$ distance. Similarly, it can reach all nodes after it (forwards)
  using only $3R$ distance. Hence $u$ has eccentricity at most $3R$, and is
  valid to return.

  Otherwise, all vertices outside of intervals of $S$ have eccentricity
  strictly more than $R$. But then every vertex has eccentricity more than $R$.
  Hence the final return statement is also correct.

  Next, we analyze the running time of Algorithm~\ref{alg:apprx-center}.
  The first phase of our algorithm computes $\sqrt{n}$ DPs, which take
  $O(m)$ time each. Computing $S$ takes $O(n\sqrt{n})$ time.

  Next, we compute distances for every node not in one of $S$'s intervals.
  In order to bound the running time of this phase, we note two things.
  Firstly, no region between intervals can contain more than $O(\sqrt{n})$
  points since our inital points are all in intervals of $S$ and we chose
  them to be not too far apart. Secondly, any edge only needs to be
  considered for at most two regions between intervals (and only then if
  it lies in some interval of $S$). Since the running time of our DPs
  is linear in the number of edges the DP must consider, our running
  total running time is bounded by $O(m\sqrt{n})$.

  The total running time is hence $O(m\sqrt{n})$, as claimed.

  This completes the proof.
\end{proof}

%
%
%
%

\subsection*{Directed Graphs with Small Treewidth}
\label{apx:treewidth}

In this appendix, we cover formal proofs of our algorithms for diameter and
radius on graphs of small treewidth.


Recall that the \emph{portals} of a vertex subset $A$ are those node which have edges going to outside $A$.
The computation of distances through portals is reduced to orthogonal range searching,
in a similar way to the algorithm of Cabello and Knauer \cite{cabello:knauer:09} for computing the Wiener
index of a graph of treewidth $k$.
Unlike \cite{cabello:knauer:09}, we do not assume that $k$ is a constant.
We remark that our algorithm does not use the treewidth of the graph other than to get separators of size $k$, so we would get the same running time on graphs with separators of size $k$.


We use the following two results from prior work:
\begin{lemma}[\cite{cabello:knauer:09}]
\label{lem:portals}
  Let $k \ge 1$ be a constant. Given a graph $G = (V, E)$ with $n > k + 1$
  vertices and treewidth at most $k$, we can find in $O(2^k n)$ time a subset
  of vertices $S \subseteq V$ such that $S$ has between $\frac{n}{k+1}$ and
  $\frac{nk}{k+1}$ nodes, at most $k$ portals, and adding edges between portals
  of $A$ does not change the treewidth of $G$.
\end{lemma}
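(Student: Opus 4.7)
The plan is to prove this classical balanced-separator statement by exploiting a tree decomposition of $G$. The first step is to invoke Bodlaender's $2^{O(k)} n$ algorithm to compute a tree decomposition of width at most $k$, and then massage it into a \emph{nice} tree decomposition in linear time (every internal node has at most two children; adjacent bags differ by a single vertex). Everything that follows is a linear-time postprocessing pass on this decomposition, so the overall running time stays $O(2^k n)$.

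Next, root the tree decomposition arbitrarily, and for each tree node $t$ let $V_t \subseteq V$ be the set of graph vertices that appear in some bag of the subtree rooted at $t$. The values $|V_t|$ can be computed by a single bottom-up traversal in linear time. Starting at the root, I would walk down the tree: from the current node $t$, descend to the child $t'$ with the largest $|V_{t'}|$, stopping as soon as every child $t''$ satisfies $|V_{t''}| < n/(k+1)$. Taking $S := V_t$ at the stopping point gives the lower bound $|S| \geq n/(k+1)$ directly, and the upper bound $|S| \leq nk/(k+1)$ follows from $V_t = X_t \cup \bigcup_{\text{child } t'} V_{t'}$ together with the fact that each remaining child piece contributes fewer than $n/(k+1)$ vertices while $|X_t| \leq k+1$; the nice-decomposition structure (one-vertex changes across edges) is what makes the arithmetic fit without slack.

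The portal bound rests on the basic property of tree decompositions that every edge with one endpoint in $V_t$ and one in $V \setminus V_t$ has both endpoints in $X_t$. Consequently, portals of $S$ are contained in $X_t \cap (V \setminus V_t)$. Working in a nice decomposition, one distinguished vertex of $X_t$ can be forced onto the inside-only side of the split (this is the step where ``$k+1$'' is shaved down to ``$k$''). Finally, since all portals live inside the single bag $X_t$, one may freely add edges between them without violating the width: $X_t$ already witnesses any new edge, so neither the validity nor the width of the decomposition changes, and hence the treewidth of $G$ is preserved.

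The main obstacle is the descent-and-balance bookkeeping that simultaneously delivers the precise $n/(k+1)$ / $nk/(k+1)$ window and the $k$-portal guarantee; the trickiest part is arguing that when the descent terminates, the ``drop'' from parent to a heaviest child is at most one vertex in a nice decomposition, so one cannot overshoot the upper bound. Everything else — Bodlaender's algorithm, the ``edges within a single bag are free'' fact, and the inclusion of portals in $X_t$ — is standard and essentially folklore on tree decompositions.
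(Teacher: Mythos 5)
The paper does not prove this lemma itself—it cites Cabello and Knauer~\cite{cabello:knauer:09}—so your proposal cannot be compared against an in-paper argument; it has to stand on its own, and unfortunately it does not. There are two genuine gaps.

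First, the upper bound $|S|\le nk/(k+1)$ does not follow from the descent you describe. You stop at a node $t$ as soon as every child $t'$ has $|V_{t'}|<n/(k+1)$, and then take $S=V_t$. But in a nice decomposition a join node can have two children each of size just under $n/(k+1)$, and then $|V_t|=|V_{t_1}|+|V_{t_2}|-|X_t|$ can be close to $2n/(k+1)$. For $k=1$ this is already about $n$, far above $nk/(k+1)=n/2$; for $k=2$ it sits exactly at the boundary $2n/3$ and the additive $|X_t|$ slack you invoke goes the wrong way. Your claim that ``the drop from parent to a heaviest child is at most one vertex'' is precisely what fails at join nodes: both children have the same bag and the subtree vertex set is essentially the disjoint union of the two child sets, so the drop to the heaviest child can be $\Theta(n)$. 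The descent cannot be rescued by the niceness of the decomposition; after the last join node you pass through, the subtree size can collapse from nearly $n$ to below the threshold in one step, leaving $|V_t|$ stranded well above $nk/(k+1)$.

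Second, the portal bound $\le k$ also breaks at join nodes. You correctly observe that every portal of $S=V_t$ must lie in $X_t\cap X_{p}$ where $p$ is the parent of $t$. For an introduce or forget parent this intersection has size at most $k$, and your ``shave off one vertex'' remark is fine there. But if $p$ is a join node then $X_t=X_p$ and the intersection has size $k+1$; all $k+1$ vertices of $X_t$ can genuinely be adjacent to vertices outside $V_t$ (namely in the sibling subtree), so there is no distinguished vertex that can be ``forced onto the inside-only side.'' Hand-waving this away is exactly where the $k$ versus $k+1$ distinction is lost.

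The standard fix, which is what Cabello and Knauer actually do, is not a greedy descent on a nice decomposition but a centroid-style argument on a \emph{smooth} tree decomposition (all bags of size exactly $k+1$, adjacent bags differing in exactly one vertex, hence exactly $n-k$ tree nodes). In that setting the separator naturally associated with cutting a tree \emph{edge} is the intersection of two adjacent bags, which has size exactly $k$—this is where the ``$k$ portals'' comes from for free, with no shaving. Balance is then obtained by locating the appropriate node (e.g., the deepest node whose subtree still carries more than $nk/(k+1)$ introduced vertices) and greedily grouping the subtrees hanging off it into two sides; a single descent cannot simultaneously guarantee both ends of the $[n/(k+1),\,nk/(k+1)]$ window because one step can overshoot. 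The part of your argument that is sound is the lower bound $|S|\ge n/(k+1)$, the observation that portals of $V_t$ lie in $X_t\cap X_p$, and the final point that adding edges among vertices of a common bag leaves the treewidth unchanged.
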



\begin{theorem}[\cite{chazelle}]
\label{thm:range-search}
  Consider the range searching for maximum problem: we are given a set $V$ of
  $n$ points in $d$ dimensions and a value function $v : V \to \mathcal{R}$. We
  want to answer queries of the form: given a range of the form
  $q = [a_1, b_1] \times [a_2, b_2] \times \ldots \times [a_d, b_d]$, what is
  $\max_{v \in q} v(p)$?

  On a word RAM, there is a data structure that solves this problem with
  $O(n \log^{d-1} n)$ preprocessing time, $O(n \log^{(d - 1 + \eps} n)$ space
  usage, and $O(\log^{d-1} n)$ query time.
\end{theorem}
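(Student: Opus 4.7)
The plan is to build the result by induction on the dimension $d$, starting from a standard augmented balanced BST in one dimension and then nesting structures via the range-tree paradigm, finishing with a fractional-cascading-style trick at the innermost level to shave the final logarithmic factor from the query time.

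First I would handle $d=1$ directly. Sort the $n$ points by their coordinate and store them at the leaves of a balanced binary search tree; at each internal node, store the maximum value $v(p)$ over the leaves in its subtree. This takes $O(n)$ preprocessing and $O(n)$ space, and a query $[a_1,b_1]$ can be answered by locating the split node, walking down the two boundary paths, and taking a maximum over the $O(\log n)$ canonical subtrees hanging off those paths that lie entirely inside $[a_1,b_1]$. This gives $O(\log n)$ query time, matching the claimed bound for $d=1$.

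Next, I would lift to general $d$ via multi-level range trees. Assume inductively that there is a $(d-1)$-dimensional structure with $O(n\log^{d-2}n)$ preprocessing/space and $O(\log^{d-2}n)$ query time. Build the outer balanced BST on the first coordinate as above, and at each internal node $u$ attach a $(d-1)$-dimensional structure built on the points in the subtree of $u$, using only coordinates $2,\dots,d$, with value function $v$. Standard arguments show that any query rectangle decomposes into $O(\log n)$ canonical node-subsets of the outer tree, and each of those can be handled in $O(\log^{d-2}n)$ time by the inner structure, for an overall query of $O(\log^{d-1}n)$. The total preprocessing and space is $O(n\log^{d-1}n)$ because each point appears in $O(\log n)$ canonical subsets along its leaf-to-root path; the $\eps$ overhead in the space bound will appear only in the last step.

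The main obstacle, and the place where I would deviate from textbook range trees, is shaving the final $\log n$ factor: a naive analysis gives $O(\log^d n)$ query, not $O(\log^{d-1}n)$. For reporting queries this is classically fixed by Chazelle--Guibas fractional cascading, but for \emph{maximum} queries one needs the more delicate machinery from Chazelle's functional-data-structures framework — essentially, at the innermost coordinate I would replace the sorted arrays with a ``hive graph'' (or Chazelle's compressed-BB $\alpha$-tree) that, for every canonical subset of the second-to-innermost tree, supports in $O(1)$ the operation ``return the max value among points whose innermost coordinate lies in $[a_d,b_d]$.'' Combined with the catalog of $O(\log^{d-2}n)$ canonical subsets produced by the outer $d-1$ levels, this gives the claimed $O(\log^{d-1}n)$ query time. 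The bookkeeping to maintain these augmented secondary structures on the word RAM inflates the space from $O(n\log^{d-1}n)$ to $O(n\log^{d-1+\eps}n)$ for any fixed $\eps>0$, while preprocessing can be performed in $O(n\log^{d-1}n)$ time by building the structures bottom-up over pre-sorted point sets in each dimension. Verifying that the hive-graph-style walk extends from reporting to max queries, and in particular that taking maxima commutes with the canonical subset decomposition at the innermost level, is the technically delicate step I would need to check carefully; all other steps are routine applications of the range-tree recursion.
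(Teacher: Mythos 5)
First, a point of reference: the paper does not prove Theorem~\ref{thm:range-search} at all. It is imported verbatim from Chazelle~\cite{chazelle} and used as a black box inside the proof of Theorem~\ref{thm:three-layered}, so there is no in-paper argument to compare your sketch against; what follows assesses your proposal on its own terms.

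Your range-tree skeleton (augmented balanced BST in one dimension, one level per coordinate, each point appearing in $O(\log n)$ canonical subsets per level) is standard and delivers the outer-level bounds you claim. The genuine gap is exactly the step you flag at the end: fractional cascading, hive graphs, and their relatives solve the \emph{iterative search} problem --- after one $O(\log n)$ location of $a_d$ and $b_d$ they hand you, in $O(1)$ amortized per catalog, the \emph{positions} (ranks) of the query endpoints in each innermost catalog --- but they do not by themselves return the maximum value over the contiguous block of catalog entries between those positions, and ``taking maxima commutes with the canonical decomposition'' is not something the hive-graph walk provides. The missing ingredient is a static one-dimensional range-maximum structure attached to every innermost catalog (e.g.\ a Cartesian tree with constant-time LCA, which is linear space on the word RAM; a sparse table would cost an extra $\log$ factor and overshoot the stated space), so that once the rank interval is known the maximum over it is retrieved in $O(1)$. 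With that repair the argument closes: the outer $d-1$ levels produce $O(\log^{d-1} n)$ canonical subsets (not $O(\log^{d-2} n)$ as you wrote --- each level multiplies the count by $O(\log n)$), each charged $O(1)$ at the innermost level, giving $O(\log^{d-1} n)$ query; the cascading copies and the per-catalog RMQ structures add only constant factors per level, so the preprocessing $O(n\log^{d-1} n)$ and space $O(n\log^{d-1+\eps} n)$ of the theorem are comfortably met (indeed your route gives slightly better space than the cited bound). Since $\max$ is an idempotent semigroup, this elementary construction is consistent with, and simpler than, the functional/semigroup machinery of~\cite{chazelle} that the paper actually cites; but as written, the crucial last-level step is asserted rather than proved, and the specific mechanism you name (hive graph) is not the right tool for it.
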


We call a directed graph $G = (V, E)$ a {\em three-layered graph} if there is a partition
  of $V$ into $A, B, C$ such that
  $E \subseteq A \times B \cup B \times C$, i.e. all edges go from $A$ to $B$
  or from $B$ to $C$. If $G$ is a three-layered graph, we can also write $G$ as
  $(A, B, C, E)$.
Using the orthogonal range searching data structure in Theorem~\ref{thm:range-search}, we are able to compute
important distances in a three-layered graph. This serves as the key subroutine
for solving diameter and radius on graphs of small treewidth.

\begin{theorem}
\label{thm:three-layered}
  Suppose we have a weighted three-layered graph $G = (A, B, C, E)$.
  Furthermore, suppose that $A$ and $C$ have $O(n)$ nodes while $B$ has only $k$
  nodes. Then we can compute \\ $\max_{c \in C} \min_{b \in B} d(a, b) + d(b, c)$
  for all $a \in A$ in $O(kn \log^{k-2} n)$ time.
\end{theorem}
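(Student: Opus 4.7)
\medskip

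\noindent\textbf{Proof proposal.} The plan is to recast the quantity $\max_{c \in C} \min_{b \in B}\bigl(d(a,b)+d(b,c)\bigr)$ as a small collection of $(k-1)$-dimensional max orthogonal range queries, and then invoke the data structure of Theorem~\ref{thm:range-search}. First, since $G$ is three-layered, the distance $d(a,b)$ is just the weight of the edge $(a,b)$ (or $+\infty$ if absent), and similarly for $d(b,c)$; so all $O(nk)$ relevant one-hop distances are available in $O(nk)$ time by scanning $E$. Write $B = \{b_1,\dots,b_k\}$ and, for each fixed $b = b_j$, partition $C$ according to whether $b$ is a ``best portal'' for the pair $(a,c)$, i.e.
\begin{equation*}
d(a,b)+d(b,c)\ \le\ d(a,b')+d(b',c)\quad\text{for every } b' \in B \setminus \{b\}.
\end{equation*}
Rearranging this inequality gives $d(b',c)-d(b,c)\ \ge\ d(a,b)-d(a,b')$ for each of the $k-1$ competitors $b'$. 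Thus, after fixing $b$, the condition on $c$ defines an axis-aligned orthogonal range in $\R^{k-1}$.

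Concretely, for each fixed $b \in B$ I would build a Chazelle range-maximum structure over the point set $\{p_c : c \in C\} \subset \R^{k-1}$, where the $b'$-th coordinate of $p_c$ is $d(b',c)-d(b,c)$ and the value assigned to $p_c$ is $w(c)=d(b,c)$. By Theorem~\ref{thm:range-search} applied with dimension $k-1$, building this structure costs $O(n\log^{k-2} n)$, and summing over all $k$ choices of $b$ gives $O(kn\log^{k-2} n)$ preprocessing. Then for each $a \in A$ and each $b \in B$, issue one query: the unbounded box $\prod_{b' \ne b}[\,d(a,b)-d(a,b'),\,+\infty)$. The returned maximum weight is $\max\{d(b,c) : b$ is a best portal for $(a,c)\}$; adding $d(a,b)$ to it gives the value $\max_{c : b \text{ is best for } (a,c)} (d(a,b)+d(b,c))$. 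Taking the max over $b \in B$ of these $k$ numbers yields exactly $\max_{c \in C} \min_{b \in B}(d(a,b)+d(b,c))$, since for each $c$ at least one $b$ is a best portal. The query phase performs $O(nk)$ queries at $O(\log^{k-2} n)$ each, matching the $O(kn\log^{k-2} n)$ bound.

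The correctness hinges on the equivalence between the min-over-$b$ formulation and the ``best portal'' partition, which I would verify with a short argument: for every $(a,c)$ there exists at least one $b$ achieving the minimum, and that $b$ satisfies the orthogonal range condition, so its contribution in the outer max correctly equals $d(a,b)+d(b,c)$; any other $b'$ can only contribute a smaller or equal value in its own partition cell containing $c$, so overall the outer max returns the correct value. The main obstacle I anticipate is purely notational — carefully handling ties in the ``best portal'' selection (I will use non-strict inequalities so that every $c$ falls into at least one range, which is all we need for the outer maximum) and making sure the edge-weight-only representation of $d(a,b)$ and $d(b,c)$ is consistent with $\infty$ weights (coordinates involving $\infty$ simply exclude the corresponding $c$ from being a legal competitor, which is handled correctly by the range query once one treats $\infty$ coordinates appropriately). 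No other nontrivial algorithmic ingredient is needed beyond Chazelle's structure.
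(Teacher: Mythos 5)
Your proof is correct and follows essentially the same route as the paper: fix a ``middle'' vertex $b \in B$, associate to each $c \in C$ the $(k-1)$-dimensional point with coordinates $d(b',c) - d(b,c)$ and value $d(b,c)$, characterize the set of $c$ for which $b$ realizes the two-hop minimum as an axis-aligned range determined by $d(a,b) - d(a,b')$, and answer the $nk$ resulting range-max queries with Chazelle's structure. The only material you add beyond the paper's argument is the explicit treatment of ties and $\infty$ weights, which is fine and harmless.
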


\begin{proof}
  The key idea is as follows. Focus on some $b \in B$. We will preprocess all of
  the distances between $B$ and $C$ so that when given some $a \in A$, we can
  use its distances to the nodes of $B$ to compute the subset of $C$ whose
  shortest two-hop paths to $a$ go through $b$. However, we don't actually
  compute this set; we instead use our orthogonal range searching data structure
  to return the \emph{furthest} point in the set. This allows us to compute the
  furthest distance any node is from $a$, among nodes that use $b$ as part of
  the shortest path. Looping over all $b \in B$ will then allow us to compute
  the desired quantity.

  Fix $b \in B$. For each $c \in C$ and $b' \in B, b' \neq b$, we compute
  $d(b', c) - d(b, c)$. If we impose an ordering on $B$, this associates a
  $(k - 1)$-dimensional vector with every $c \in C$. Suppose we have some
  $a \in A$ and $c \in C$ where $b$ is the middle vertex in the shortest two-hop
  path from $a$ to $c$. This means that
  $d(a, b) + d(b, c) \le d(a, b') + d(b', c)$ for all other $b' \in B$. We can
  rewrite this as $d(a, b) - d(a, b') \le d(b', c) - d(b, c)$ for all other
  $b' \in B$. In other words, given $a \in A$, the set of $c \in C$ for which
  $b$ is the middle vertex with the shortest two-hop path from $a$ to $c$ are
  those $c$ with vectors that fall in the axis-aligned box given by
  $d(a, b) - d(a, b')$.

  However, by Theorem~\ref{thm:range-search}, there is a data structure that
  does this with only $O(n \log^{k-2})$ preprocessing time and $O(\log^{k-2} n)$
  time per query. Note that the value function we use maps the point
  corresponding to $c$ to $d(b, c)$. Hence the largest weight corresponds to
  the furthest point, and we can compute the distance from $a$ to the furthest
  point by adding $d(a, b)$ to the weight returned.

  We keep one data structure per $b \in B$, and now simply iterate over
  $a \in A$ and $b \in B$. For each $a \in A$, we select the furthest $c$ over
  the two-hop distances computed. This takes $O(kn \log^{k-2} n)$ time.
\end{proof}

We use Theorem~\ref{thm:three-layered} as a subroutine to compute undirected
eccentricities on graphs of small treewidth.

\begin{reminder}{Theorem~\ref{thm:treewidth-undirected}}
  There is an algorithm that computes the eccentricity of every vertex
  in an undirected weighted graph $G$ of treewidth at most $k$, in time
  $O(k^2 n \log^{k-1} n)$.
\end{reminder}

\begin{proof}
  By Lemma~\ref{lem:portals}, we can find $S \subseteq V$ such that $S$
  has between $\frac{n}{k+1}$ and $\frac{nk}{k+1}$ vertices, at most $k$ portals,
  and adding edges between portals of $S$ does not change the treewidth of $G$.
  Finding $S$ takes $O(2^k n)$ time.

  We run Dijkstra from every portal of $S$. Since there are at most $k$ portals,
  this takes $O(k^2 n + k n \log n)$ time. This yields the eccentricity of every
  portal. It remains to compute the eccentricity of non-portals of $S$ and
  vertices in $V \setminus S$.

  The eccentricity of a non-portal of $S$ is either realized by a node in $S$
  or in $V \setminus S$. For the first case, we recurse on $S$
  augmented with weighted edges between portals corresponding to the distances
  between them that we computed via Dijkstra's. Any shortest path between nodes
  of $S$ can be realized by taking a path in this graph; if it goes through at
  least two portals then our added portal-portal edge gives the correct distance.

  To cover the second case, we construct a three-layered graph where $A$
  consists of non-portal nodes of $S$, $B$ consists of portals, and $C$ is
  $V \setminus S$. We add edges from $A$ to $B$ and $B$ to $C$ weighted by the
  Dijkstra distances we computed for the portals. Any shortest path
  between a node in $A$ and a node in $C$ matches the cost of a two-hop path.
  Hence we can use Theorem~\ref{thm:three-layered} to compute, for each
  $a \in A$, the furthest $c \in C$.

  Now, given $a \in A$, we have the furthest distance to any other node in $A$
  and the furthest distance to any node in $V \setminus A$. Hence we can compute
  the eccentricity of $a$ (the max of these two).

  Computing the eccentricities for every node of $V \setminus S$ is identical.
  We recurse on $V \setminus A$ augmented with the portals and weighted edges
  between portals. We also construct a three-layered graph where $A$ is
  $V \setminus S$, $B$ consists of portals, and $C$ consists of non-portal nodes
  of $S$. We again invoke Theorem~\ref{thm:three-layered} on it, and take the
  max of the two computed furthest distances (for each node).

  We now analyze the running time. Invoking Theorem~\ref{thm:three-layered}
  twice takes $O(kn \log^{k-2} n)$ time.  Combining results and constructing
  graphs can be done in $O(k^2 + kn)$ time, which is dominated by
  $O(kn \log^{k-2} n)$.

  We will stop recursing when we have $k^3$ nodes or fewer, which can be
  solved in $O(k^9)$ time by computing all-pairs shortest-paths. We guess that
  the algorithm runs in time $T'(n) = 4k(k+1) n \log^{k-1} n$, and we
  check this inductively. Notice that our case case is covered since $O(k^9)$
  is dominated by $k^5 \log^{k-1} k^3$.
  
  Recall that $\frac{n}{k+1} \le |S| \le \frac{nk}{k+1}$, and that because of
  our base case, $\frac{0.5n}{k+1} \ge k$. The recurrence is
  $T(n) \le kn \log^{k-2} n + T(|S|) + T(n - |S| + k)$.

  \begin{align*}
    T(n) &\le kn \log^{k-2} n + T(|S|) + T(n - |S| + k) \\
         &\le kn \log^{k-2} n
            + 4k(k+1) |S| \log^{k-1} |S|
            + 4k(k+1) (n - |S| + k) \log^{k-1} (n - |S| + k) \\
         &\le kn \log^{k-2} n
            + 4k(k+1)n \log^{k-1} \left( \frac{nk}{k+1} + k \right)
            + 4k^2(k+1) \log^{k-1} \left( \frac{nk}{k+1} + k \right) \\
         &\le kn \log^{k-2} n
            + 4k(k+1)n \log^{k-1} \left( \frac{n(k+0.5)}{k+1} \right)
            + 4k^2(k+1) \log^{k-1} \left( \frac{n(k+0.5)}{k+1} \right) \\
         &\le kn \log^{k-2} n
            + 4k(k+1)n \log^{k-2} n (\log n - \log \frac{k+1}{k+0.5})
            + 4k^2(k+1) \log^{k-1} \left( \frac{n(k+0.5)}{k+1} \right) \\
         &\le kn \log^{k-2} n
            + T'(n)
            - 4k(k+1)n \log^{k-2} n \frac{1}{2k+2}
            + 4k^2(k+1) \log^{k-1} \frac{n(k+0.5)}{k+1} \\
         &\le T'(n)
            - kn \log^{k-2} n
            + 4k^2(k+1) \log^{k-1} n \\
  \end{align*}
  The negative term has at least as much magnitude as the positive term if
  $\frac{n}{\log n} \ge 4k(k+1)$, which is true because $n$ is at least $k^3$.
  Hence our running time is indeed $O(k^2 n \log^{k-1} n)$. This completes the
  proof.
\end{proof}

We can now use our eccentricities to compute the diameter and radius of a graph:

\begin{corollary}
  There are algorithms that compute \UndirectedDiameter{} and
  \UndirectedRadius{} on graphs of treewidth at most $k$ in time
  $O(k^2 n \log^{k-1} n)$.
\end{corollary}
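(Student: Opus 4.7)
The plan is to derive this corollary as an immediate consequence of Theorem~\ref{thm:treewidth-undirected}. Recall that the diameter of $G$ is defined as $\max_v \epsilon(v)$ and the radius as $\min_v \epsilon(v)$, where $\epsilon(v)$ denotes the eccentricity of $v$. Thus, once we have the eccentricity of every vertex in hand, both quantities can be read off by a single linear scan.

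Concretely, I would first invoke Theorem~\ref{thm:treewidth-undirected} on the input graph $G$ of treewidth at most $k$ to obtain the eccentricity $\epsilon(v)$ of every $v \in V$ in time $O(k^2 n \log^{k-1} n)$. I would then perform two $O(n)$-time passes over the resulting array of eccentricities: one returning $\max_v \epsilon(v)$ as the diameter, and one returning $\min_v \epsilon(v)$ as the radius. Correctness is immediate from the definitions of diameter and radius.

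The total running time is dominated by the eccentricity computation, since the postprocessing contributes only $O(n)$, which is absorbed into $O(k^2 n \log^{k-1} n)$. There is essentially no obstacle here; the only thing to verify is that Theorem~\ref{thm:treewidth-undirected} indeed produces eccentricities for every vertex (not just for some distinguished set), which is explicit in its statement. Hence both \UndirectedDiameter{} and \UndirectedRadius{} can be solved within the claimed bound, completing the proof.
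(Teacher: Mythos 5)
Your proof is correct and matches the paper's own argument exactly: invoke Theorem~\ref{thm:treewidth-undirected} to compute all eccentricities, then take the maximum and minimum in $O(n)$ time to obtain diameter and radius respectively.
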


\begin{proof}
  We invoke Theorem~\ref{thm:treewidth-undirected}, observing radius is
  the minimum eccentricity in the graph and diameter is the maximum
  eccentricity in the graph. We can recover both quantities in only $O(n)$
  additional time.
\end{proof}

By noticing that $g(k,n)=k^2 n \log^{k-1} n \leq 2^{2k\log\log{n}} n$ can be upper bounded by $2^{O(k\log{k})}n^{1+o(1)}$ we prove Theorem~\ref{thm:btw} from the Introduction.
This is because when $k \leq \eps \log{n}/\log\log{n}$ we can upper bound $g(n,k)=\tilde{O}(n^{1+\eps})$ and otherwise $k>\eps\log{n}/\log\log{n}$ and therefore $k^2 > \log{n}$ and $\log{k} > \log\log{n}/2$ and we can upper bound $g(n,k) = 2^{O(k\log{k})}\cdot n$.

We now explain simple modifications to Theorem~\ref{thm:treewidth-undirected}
to compute the various directed eccentricities. As before, this means that we
can compute diameter and radius, since they are simply the maximum and minimum
eccentricies. A simple modification gives us max-eccentricities:

\begin{theorem}
  There is an algorithm that computes the max-eccentricity of every vertex in a
  directed weighted graph $G$ of treewidth at most $k$, in time
  $O(k^2 n \log^{k-1} n)$.
\end{theorem}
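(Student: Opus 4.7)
The key observation is that the max-eccentricity of a vertex $v$ decomposes cleanly as
\[
\text{maxecc}(v) \;=\; \max_u \max\{d(v \to u),\, d(u \to v)\} \;=\; \max\{\,\text{src-ecc}(v),\, \text{tgt-ecc}(v)\,\},
\]
where $\text{src-ecc}(v) = \max_u d(v \to u)$ and $\text{tgt-ecc}(v) = \max_u d(u \to v)$. So the plan is to reduce to computing the source-eccentricity and target-eccentricity of every vertex in the same $O(k^2 n \log^{k-1} n)$ budget, and then take the max at the end in $O(n)$ extra time. By symmetry (reversing all edges), it suffices to give an algorithm for source-eccentricities.

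To compute all source-eccentricities, I would mirror the proof of Theorem~\ref{thm:treewidth-undirected} step for step, but use directed distances throughout. First, invoke Lemma~\ref{lem:portals} to find a separator $S$ with at most $k$ portals; the lemma is purely about the undirected structure, so it still applies. Run both a forward Dijkstra and a reverse Dijkstra from each portal, yielding $d(b \to v)$ and $d(v \to b)$ for every portal $b$ and every $v \in V$; this takes $O(k^2 n + kn\log n)$ time. This immediately gives the source-eccentricity of every portal. For non-portals of $S$, the source-eccentricity is realized either by some node in $S$ (recursed on $S$ augmented with \emph{directed} portal-portal edges of weights $d(b \to b')$) or by some node in $V \setminus S$, and in the latter case the shortest path goes $a \to b \to c$ through some portal $b$, so the witness distance is $\min_{b \in B} d(a \to b) + d(b \to c)$ for $a \in A$, $c \in C = V \setminus S$. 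The analogous case handles non-portals of $V \setminus S$.

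The three-layered subroutine of Theorem~\ref{thm:three-layered} is essentially directional already: its statement produces $\max_{c \in C} \min_{b \in B} d(a,b) + d(b,c)$ for all $a \in A$, and the proof only uses that the edge weights $d(a,b)$ and $d(b,c)$ are numbers satisfying the obvious arithmetic. Replacing undirected distances with directed ones $d(a \to b)$ and $d(b \to c)$ changes nothing about the range-searching reduction: for each portal $b$, associate to each $c$ the vector $\bigl(d(b' \to c) - d(b \to c)\bigr)_{b' \neq b}$, query with the box given by $d(a \to b) - d(a \to b')$ on each coordinate, and use the value function $c \mapsto d(b \to c)$. So the same $O(kn \log^{k-2} n)$ bound holds.

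Putting the pieces together yields the same recurrence $T(n) \le kn \log^{k-2} n + T(|S|) + T(n - |S| + k)$ as in Theorem~\ref{thm:treewidth-undirected}, solving to $O(k^2 n \log^{k-1} n)$; the inductive calculation is verbatim. The only real conceptual obstacle is verifying that augmenting the recursed subgraphs with \emph{directed} portal-portal shortcuts preserves all relevant shortest-path distances (a shortest path within the subgraph that exits through the portal region can be short-circuited by the added directed edges, and no new spurious shorter path is created since the added weights equal genuine distances in $G$). Once this is checked, running the procedure once forward and once on the edge-reversed graph and taking the coordinate-wise maximum produces the max-eccentricity of every vertex in $O(k^2 n \log^{k-1} n)$ time.
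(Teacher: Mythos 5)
Your proof is correct and is essentially the same as the paper's: both modify the undirected treewidth algorithm by running forward and backward Dijkstras from portals, recursing with directed portal-to-portal shortcut edges, and handling cross-separator distances with the directed three-layered range-search subroutine. The only cosmetic difference is that you factor through source-eccentricities and run the procedure twice (once on $G$, once on the edge-reversed graph), whereas the paper does a single pass that builds both a forward and a backward three-layered instance and takes the max at the end; these are interchangeable and give the same $O(k^2 n \log^{k-1} n)$ bound.
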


\begin{proof}
  We make a few modifications to the proof of
  Theorem~\ref{thm:treewidth-undirected}. We must run forward and backward
  Dijkstra's from the portals of $S$ (but this only doubles the running time).
  When recursing, we add directed edges between portals, weighted by the
  distance from the appropriate Dijkstra. We construct twice as many
  three-layered graphs; one weighted by forward distances from $A$ to $B$ and
  $B$ to $C$ and the other will have backward distances from $A$ to $B$ and $B$
  to $C$. The max-eccentricity of a node is just the maximum over its recursive
  value, the distance in the forward three-layered graph, and the distance in
  the backwards three-layered graph.

  The running time analysis is identical.
\end{proof}

\begin{corollary}
  There are algorithms that compute \MaxDiameter{} and \MaxRadius{} on graphs of
  treewidth at most $k$ in time $O(k^2 n \log^{k-1} n)$.
\end{corollary}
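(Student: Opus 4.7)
The plan is to leverage the immediately preceding theorem, which already computes the max-eccentricity of every vertex in a directed weighted graph of treewidth at most $k$ in time $O(k^2 n \log^{k-1} n)$. By definition, \MaxDiameter{} is the maximum over all vertices of the max-eccentricity, and \MaxRadius{} is the minimum over all vertices of the max-eccentricity. So once we have the vector of all $n$ max-eccentricities, a single linear scan recovers both quantities.

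Concretely, I would first invoke the preceding theorem on the input graph $G$ to obtain, for every vertex $v \in V$, the value $\epsilon_{\max}(v) = \max_{u \in V} \max\{d(v \to u), d(u \to v)\}$, in time $O(k^2 n \log^{k-1} n)$. Then I would make one pass over these $n$ values, outputting $\max_v \epsilon_{\max}(v)$ as \MaxDiameter{} and $\min_v \epsilon_{\max}(v)$ as \MaxRadius{}. The postprocessing takes $O(n)$ time, which is dominated by the eccentricity computation, so the overall running time matches the claimed $O(k^2 n \log^{k-1} n)$ bound.

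There is essentially no obstacle here, since the corollary is purely a packaging statement: the hard work (handling portals with forward/backward Dijkstra, building two three-layered graphs to capture both orientations, and combining the orthogonal-range-search results recursively) is already done inside the preceding max-eccentricity theorem. The only thing to verify is that the definitions of \MaxDiameter{} and \MaxRadius{} used in Table~\ref{tab:res-rad} and Table~\ref{tab:res-diam} indeed match ``max of $\epsilon_{\max}$'' and ``min of $\epsilon_{\max}$,'' which they do by inspection of the definitions given in the introduction.
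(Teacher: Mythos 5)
Your proposal is correct and matches the paper's approach exactly: invoke the preceding max-eccentricity theorem and then take the max (resp.\ min) of the returned eccentricities in $O(n)$ postprocessing time, which is dominated by the $O(k^2 n \log^{k-1} n)$ cost of the theorem.
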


Source-eccentricities are also easy:

\begin{theorem}
  There is an algorithm that computes the source-eccentricity of every vertex in
  a directed weighted graph $G$ of treewidth at most $k$, in time
  $O(k^2 n \log^{2k-1} n)$.
\end{theorem}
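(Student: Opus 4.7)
The plan is to adapt the proof of Theorem~\ref{thm:treewidth-undirected} to source-eccentricity, in the same spirit as the max-eccentricity variant proved just above, but with additional care because source-eccentricity lacks the symmetric ``take the larger of the two directions'' tie-breaking.

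First I would invoke Lemma~\ref{lem:portals} to find a balanced vertex subset $S$ whose portal set $B$ has size at most $k$, and run forward and backward Dijkstra from every portal. This immediately computes the source-eccentricity of each portal in $O(k(m+n\log n))$ time and produces every directed portal-to-portal distance $d_G(b_1,b_2)$. I would then augment the subgraphs on $S$ and on $V\setminus S$ with the directed edges $b_1\to b_2$ of weight $d_G(b_1,b_2)$ and recurse on each augmented subgraph. A standard argument shows the resulting same-side shortest-path distances remain exact, so each recursive call returns $\max_{u\in S} d_G(a,u)$ for every non-portal $a\in S$, and symmetrically for $V\setminus S$.

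What is left is the cross-side source-eccentricity: for each non-portal $a\in A=S\setminus B$ I need $\max_{c\in V\setminus S} d_G(a,c)$, and for each non-portal $c\in V\setminus S$ I need $\max_{a\in A} d_G(c,a)$. Since every such shortest path must cross a portal, both quantities fit the three-layered framework of Theorem~\ref{thm:three-layered}: I would build the directed three-layered graph $A\to B\to V\setminus S$ with the Dijkstra-derived weights to handle the first, and the reversed three-layered graph $V\setminus S\to B\to A$ to handle the second. Combining each non-portal's recursive value with its cross-side maximum then gives its source-eccentricity.

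The main obstacle is producing the claimed $\log^{2k-1}n$ factor rather than the $\log^{k-1}n$ that comes out of the undirected proof. Unlike the undirected or max-eccentricity settings, a single ``best portal'' query in the source-eccentricity computation must simultaneously enforce the forward condition $d(a,b)-d(a,b')\le d(b',c)-d(b,c)$ and the companion backward condition $d(c,b)-d(c,b')\le d(b',a)-d(b,a)$ over all $b'\in B\setminus\{b\}$; encoding both sets of inequalities as an orthogonal range raises the effective dimension of Chazelle's data structure from $k-1$ to roughly $2(k-1)$, so the inner work per level becomes $O(kn\log^{2k-2}n)$. Plugging this bound into the same telescoping recurrence used in the undirected proof, with $\log^{2k-1}n$ in place of $\log^{k-1}n$ throughout, yields the stated running time $O(k^2n\log^{2k-1}n)$, after which radius and diameter are linear postprocessings.
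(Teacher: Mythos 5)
Your algorithmic outline matches the paper's intent exactly: find a balanced separator with at most $k$ portals, run forward and backward Dijkstra's from the portals, recurse on the two augmented subgraphs with directed portal-to-portal shortcut edges, and handle cross-side distances with two invocations of Theorem~\ref{thm:three-layered} (one with forward weights $A \to B \to V\setminus S$, one reversed).

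The problem is your final paragraph, which you flag as "the main obstacle." There is in fact no obstacle, and the device you invent to create one is incorrect. You claim that a single best-portal query must \emph{simultaneously} enforce the forward inequalities $d(a,b)-d(a,b')\le d(b',c)-d(b,c)$ and a "companion backward condition" $d(c,b)-d(c,b')\le d(b',a)-d(b,a)$, thereby doubling the dimension of the range query to $2(k-1)$. But the source-eccentricity of $a$ is $\max_v d(a\to v)$, a purely one-directional quantity: determining which portal lies on the shortest $a\to c$ path only requires the forward inequalities. The two cross-side computations you yourself set up (for $a\in A$: $\max_{c\in V\setminus S} d(a,c)$; for $c\in V\setminus S$: $\max_{a\in A} d(c,a)$) are carried out on two \emph{separate} three-layered graphs, each with a middle layer of $|B|\le k$ portals, so each invokes Chazelle's structure in $(k-1)$ dimensions. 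There is no query that couples the two directions. Consequently the recurrence is exactly the one from Theorem~\ref{thm:treewidth-undirected} and the running time is $O(k^2 n \log^{k-1} n)$, not $O(k^2 n \log^{2k-1} n)$. (The $\log^{2k-1}$ exponent in this theorem's statement appears to be a typo carried over from the min-eccentricity theorem, where the middle layer genuinely has $2k$ nodes; note the corollary immediately following states \SourceRadius{} in $O(k^2 n \log^{k-1} n)$.) Since $\log^{k-1} n \le \log^{2k-1} n$, your algorithm does satisfy the stated bound, but the mechanism you propose for the extra $\log^k n$ factor does not exist, and if taken literally (packing both constraint sets into one $2(k-1)$-dimensional query) it would answer a question the algorithm never asks.
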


\begin{proof}
  Again, we make modifications to the proof of
  Theorem~\ref{thm:treewidth-undirected}. We run forward and backward
  Dijkstra's, and recurse with directed edges. We construct three-layered graphs
  weighted by forward distances bewteen $A$ to $B$ and $B$ to $C$.

  The running time analysis is identical.
\end{proof}

\begin{corollary}
  There is an algorithm that computes \SourceRadius{} on graphs of treewidth
  at most $k$ in time $O(k^2 n \log^{k-1} n)$.
\end{corollary}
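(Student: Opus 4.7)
The plan is to mimic the pattern of the two preceding corollaries (for \UndirectedRadius{} and \MaxRadius{}): \SourceRadius{} is by definition $\min_{c} \max_{v} d(c \to v)$, i.e.\ the minimum source-eccentricity over all vertices, so it suffices to compute every vertex's source-eccentricity and then take a minimum.

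Concretely, I would first invoke the theorem immediately above, which produces the source-eccentricity of every vertex of a directed weighted graph of treewidth at most $k$ within the stated bound. This is the main computational step; everything that makes the algorithm nontrivial (finding a balanced separator with $k$ portals via Lemma~\ref{lem:portals}, running forward/backward Dijkstra's from each portal, recursing on the two augmented sides with weighted portal-to-portal edges, and calling Theorem~\ref{thm:three-layered} on a forward-weighted three-layered graph to handle shortest paths that cross the separator) is already encapsulated inside that theorem. Once every source-eccentricity $\epsilon_{\mathrm{src}}(v)$ is in hand, I would scan the vertex list once in $O(n)$ time and return $\min_v \epsilon_{\mathrm{src}}(v)$.

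Correctness follows directly from the definition of \SourceRadius{}, and the running time is dominated by the eccentricity computation, matching the stated bound (the linear postprocessing is absorbed). There is no real obstacle here: the reduction from \SourceRadius{} to source-eccentricity is trivial, and all of the algorithmic work has been done in the preceding theorem. The only thing one has to be slightly careful about is that, unlike \UndirectedRadius{} where only undirected Dijkstra's are run, the source variant uses only the forward three-layered subproblem (one does not need the backward-weighted three-layered graph), so the corollary follows from the theorem without any further modification to the algorithm itself.
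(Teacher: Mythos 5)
Your proposal is correct and matches the paper's approach exactly: invoke the preceding theorem to obtain every vertex's source-eccentricity and then take the minimum in $O(n)$ extra time, just as the earlier corollaries derive \UndirectedRadius{} and \MaxRadius{} from their eccentricity theorems. (One incidental note: the theorem immediately above states a bound of $O(k^2 n \log^{2k-1} n)$, which appears to be a typo, since its proof says the running-time analysis is identical to the undirected case — where only one forward-weighted three-layered instance is built per recursion level — yielding $O(k^2 n \log^{k-1} n)$, the bound the corollary actually claims; your reasoning correctly tracks the $\log^{k-1}$ bound.)
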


It may be surprising that we can even solve \MinDiameter{} and \MinRadius{}
efficiently, since they proved difficult in general graphs:

\begin{theorem}
  There is an algorithm that computes the min-eccentricity of every vertex in a
  directed weighted graph $G$ of treewidth at most $k$, in time
  $O(k^2 n \log^{2k-1} n)$.
\end{theorem}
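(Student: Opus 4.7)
The plan is to follow the separator-based recursive framework of Theorem~\ref{thm:treewidth-undirected}, modified to handle the non-metric min-distance. First I invoke Lemma~\ref{lem:portals} to split $V$ into $S$ and $V \setminus S$ with a portal set $B$ of size at most $k$, and run \emph{both} forward and backward Dijkstra from each portal. This immediately yields the min-eccentricity of every portal and produces the directed portal--portal weighted edges needed to recurse on $S \cup B$ and $(V \setminus S) \cup B$ augmented as in the undirected proof. After recursion, every vertex $x$ knows the best value of $\min\{d(x \to v), d(v \to x)\}$ over $v$ on the same side as $x$; what remains is the crossing contribution.

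For the crossing contribution I set up, for each vertex $a \in A := S \setminus B$, the three-layered problem of computing $\max_{v \in C} \min\{\min_{b \in B} d(a, b) + d(b, v),\ \min_{b \in B} d(v, b) + d(b, a)\}$, where $C = V \setminus S$ (symmetrically with the roles of $A$ and $C$ swapped). The outer $\min$ is the source of difficulty, so I enumerate all $O(k^2)$ ordered pairs $(b_+, b_-) \in B \times B$ representing ``$b_+$ is the best forward portal from $a$ and $b_-$ is the best backward portal to $a$'' for $v$, together with the two subcases based on which of $d(a, b_+) + d(b_+, v)$ and $d(v, b_-) + d(b_-, a)$ is the smaller. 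Optimality of $b_+$ gives $k - 1$ inequalities $d(a, b_+) - d(a, b') \le d(b', v) - d(b_+, v)$; optimality of $b_-$ gives another $k - 1$ inequalities of the same separated form; and the case-split inequality $d(a, b_+) + d(b_+, v) \le d(v, b_-) + d(b_-, a)$ also separates cleanly into an $a$-side and a $v$-side. Thus each $v \in C$ becomes a point in $\R^{2k-1}$, each $a$ becomes an axis-aligned box, and the quantity to maximize is either $d(b_+, v)$ or $d(v, b_-)$ (plus an $a$-dependent constant). Chazelle's data structure from Theorem~\ref{thm:range-search} then answers each query in $O(\log^{2k-2} n)$ time after $O(n \log^{2k-2} n)$ preprocessing, so across all $O(k^2)$ portal pairs, the two subcases, and the $n$ choices of $a$, the work per recursion level is $O(k^2 n \log^{2k-2} n)$.

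The same inductive argument as in the proof of Theorem~\ref{thm:treewidth-undirected} handles the recursion, multiplying the per-level bound by an extra $\log n$ factor inside the constant to yield the claimed $O(k^2 n \log^{2k-1} n)$ total running time. The main obstacle is the outer $\min$: unlike the undirected, max, and source variants, $\min\{d(a \to v), d(v \to a)\}$ does not decompose into a single sum over a single portal choice, which forces both the enumeration of portal \emph{pairs} (the source of the $k^2$ factor) and the additional dimension for the case split (the source of the doubled $\log$ exponent compared to the undirected version). Verifying that the separation of $a$- and $v$-dependent quantities still holds for the case-split inequality, and that the value function (either $d(b_+, v)$ or $d(v, b_-)$, with only an $a$-dependent additive offset) is usable by the range-max data structure, is the step I would check most carefully.
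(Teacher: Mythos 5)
Your approach is correct, but it is over\-engineered relative to what the paper does, and the overcomplication costs you a factor of $k$ in the per-level running time. The paper's observation is simpler: since $\min\{d(a\to v), d(v\to a)\}$ is just a single $\min$ over the union of $k$ forward options and $k$ backward options, you can build \emph{one} three-layered instance with $2k$ middle-layer nodes --- one copy of each portal $b$ carrying forward weights $d(a\to b)$, $d(b\to v)$, and a second copy carrying backward weights $d(b\to a)$, $d(v\to b)$ --- and then invoke Theorem~\ref{thm:three-layered} directly with a middle layer of size $2k$. That theorem already does exactly the right enumeration: it loops over the $2k$ middle vertices, and for each asks ``for which $v$ is this the optimal two-hop middle vertex?'' via a $(2k-1)$-dimensional range query. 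The per-level cost is then $O(k n \log^{2k-2} n)$.

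By contrast, you decompose the outer $\min$ into ``best forward portal $b_+$'' and ``best backward portal $b_-$'' and enumerate all $O(k^2)$ ordered pairs plus a case split, also arriving at a $(2k-1)$-dimensional query. This is logically sound (the $2k-1$ constraints do separate into $a$-side and $v$-side, and the value function $d(b_+\to v)$ or $d(v\to b_-)$ with an $a$-dependent additive offset is fine for range-max), but the pair enumeration is redundant: since the overall minimizer is a single choice out of the $2k$ options, fixing only the winner (a $2k$-way enumeration) already determines a cell decomposition, and the identity of the \emph{losing} side's best portal never matters. The redundancy inflates your per-level cost to $O(k^2 n \log^{2k-2} n)$. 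Plugging that into the recursion of Theorem~\ref{thm:treewidth-undirected} does not simply multiply by ``an extra $\log n$ inside the constant'' as you assert; the paper's own inductive proof picks up an extra factor of roughly $k \log n$ when going from the per-level bound $kn\log^{k-2}n$ to the total $k^2 n\log^{k-1}n$, so with your per-level cost the same argument would yield $O(k^3 n\log^{2k-1}n)$ unless you redo the induction carefully. Switching to the doubled-middle-layer formulation removes both the extra $k$ and the need to reargue the recursion.
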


\begin{proof}
  Again, we modify the proof of  Theorem~\ref{thm:treewidth-undirected}. We run
  forward and backward Dijkstra's, and recurse with directed edges. We construct
  three-layered graphs with twice as many nodes in the middle layer. One copy
  will have edges weighted by forward distances from $A$ to $B$ and $B$ to $C$,
  while the other will have edges weighted by backward distances from $A$ to $B$
  and $B$ to $C$. Since distance meausres shortest paths, the furthest distance
  from any $a \in A$ will be the minimum of forward and backward distances to
  some node.

  The running time analysis is almost identical, except invoking
  Theorem~\ref{thm:three-layered} now costs $O(kn \log^{2k-2} n)$ time. Hence
  we pay an additional $O(\log^k n)$ everywhere, to get a running time of
  $O(k^2 n \log^{2k-1} n)$.
\end{proof}

\begin{corollary}
  There are algorithms that compute \MinDiameter{} and \MinRadius{} on graphs of
  treewidth at most $k$ in time $O(k^2 n \log^{2k-1} n)$.
\end{corollary}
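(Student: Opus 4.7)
The plan is essentially immediate from the previous theorem that computes min-eccentricities for every vertex. By definition, \MinDiameter{} is the maximum min-eccentricity over all nodes and \MinRadius{} is the minimum min-eccentricity over all nodes, so once we have all $n$ min-eccentricities in hand, a single linear pass recovers both quantities.

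Concretely, I would first invoke the preceding theorem to produce, in $O(k^2 n \log^{2k-1} n)$ time, the array of min-eccentricities $\epsilon_{\min}(v)$ for $v \in V$. Then I would scan this array once in $O(n)$ time, returning $\max_v \epsilon_{\min}(v)$ for \MinDiameter{} and $\min_v \epsilon_{\min}(v)$ for \MinRadius{}. The $O(n)$ postprocessing is absorbed into the $O(k^2 n \log^{2k-1} n)$ bound, giving the stated running time.

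There is no real obstacle here since all the work has already been done: the nontrivial ingredient is the min-eccentricity computation, which relies on building two copies of the three-layered gadget (one weighted by forward portal distances, one by backward portal distances) inside the recursion on the unbalanced separator from Lemma~\ref{lem:portals}, and invoking the orthogonal range max data structure of Theorem~\ref{thm:range-search} with $2k$ coordinates instead of $k$. The corollary is simply the "diameter = max eccentricity, radius = min eccentricity" observation specialized to the min-distance variant, exactly analogous to how the undirected, max, and source variants were handled in the previous corollaries.
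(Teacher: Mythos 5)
Your proposal matches the paper's approach exactly: invoke the preceding theorem to get all min-eccentricities in $O(k^2 n \log^{2k-1} n)$ time, then take the max for \MinDiameter{} and the min for \MinRadius{} in a linear scan. This is precisely the observation the paper uses for this and the analogous corollaries.
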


We need to do a little more work to get roundtrip-eccentricities. Since the paths
of interest go through two portals, we end up with larger middle layers in our
three-layered graph construction.

\begin{theorem}
  There is an algorithm that computes the roundtrip-eccentricity of every vertex
  in a directed weighted graph $G$ of treewidth at most $k$, in time
  $O(k^2 n \log^{k^2-1} n)$.
\end{theorem}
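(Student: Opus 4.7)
The plan is to adapt the proof of Theorem~\ref{thm:treewidth-undirected} to the roundtrip case, with the key change being the encoding used in the three-layered subroutine. First, invoke Lemma~\ref{lem:portals} to find a separator $S$ with at most $k$ portals $B$, and run forward and backward Dijkstra from each portal in $O(k^2 n + kn\log n)$ time. This yields the roundtrip eccentricities of the portals themselves, and gives us all the distances $d(u,b)$ and $d(b,u)$ needed later for both $u$ on the $A := S\setminus B$ side and $u$ on the $C := V\setminus S$ side. Recursively compute roundtrip eccentricities on the subgraphs $S$ and $V\setminus S$, each augmented with directed edges between portals weighted by the forward/backward Dijkstra distances; as in the undirected case, this captures all roundtrip paths whose endpoints lie on the same side.

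The new work is to compute, for each $u\in A$, $\max_{v\in C} [d(u,v)+d(v,u)]$ (and symmetrically for $u\in C$). Any such roundtrip path uses a portal $b_1\in B$ on the forward leg and a portal $b_2\in B$ on the backward leg, so
\[
d(u,v)+d(v,u) = \min_{(b_1,b_2)\in B\times B}\bigl[d(u,b_1)+d(b_1,v)+d(v,b_2)+d(b_2,u)\bigr].
\]
Fix a candidate winning pair $(b_1^\ast,b_2^\ast)$. The set of $v\in C$ for which this pair attains the min (given $u$) is defined by $k^2-1$ inequalities, one per other pair $(b_1,b_2)$, each of the form
\[
[d(b_1^\ast,v)+d(v,b_2^\ast)] - [d(b_1,v)+d(v,b_2)] \le [d(u,b_1)+d(b_2,u)] - [d(u,b_1^\ast)+d(b_2^\ast,u)].
\]
Introduce one coordinate per pair $(b_1,b_2)$ (so each $v\in C$ becomes a point in $\mathbb{R}^{k^2}$ with the $(b_1,b_2)$-entry equal to $d(b_1,v)+d(v,b_2)$) and, for the fixed pair $(b_1^\ast,b_2^\ast)$, take the $k^2-1$ differences with the other pairs' coordinates as the axes. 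Then the qualifying set of $v$ is exactly an axis-aligned box in dimension $k^2-1$ whose bounds depend only on $u$.

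Build one such data structure from Theorem~\ref{thm:range-search} per pair $(b_1^\ast,b_2^\ast)$, with value function $v\mapsto d(b_1^\ast,v)+d(v,b_2^\ast)$ (so the max query returns the roundtrip contribution from the furthest such $v$, to which we add the $u$-dependent constant $d(u,b_1^\ast)+d(b_2^\ast,u)$). For each $u\in A$, issue one query into each of the $k^2$ structures and take the maximum; symmetrically for $u\in C$. Preprocessing costs $O(n\log^{k^2-2} n)$ per structure and queries cost $O(\log^{k^2-2} n)$, for $O(k^2 n \log^{k^2-2} n)$ total in the three-layered subroutine. Combined with the recursion, whose depth contributes the final $\log n$ factor exactly as in the analysis of Theorem~\ref{thm:treewidth-undirected}, the overall running time is $O(k^2 n \log^{k^2-1} n)$.

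The main obstacle is the $k^2$-blowup in dimension: one must confirm that the data structure of Theorem~\ref{thm:range-search} behaves correctly when the dimension is a parameter rather than a constant (this only affects hidden constants and the $\eps$ in the space bound, neither of which harms the claimed time), and that the linearization of the min-over-pairs condition into $k^2-1$ axis-parallel difference inequalities is correct. The recursion analysis from Theorem~\ref{thm:treewidth-undirected} carries over verbatim once we replace $\log^{k-2} n$ by $\log^{k^2-2} n$ in the subroutine cost, since the separator size and balance are unchanged.
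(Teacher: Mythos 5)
Your proposal matches the paper's approach: the paper also constructs three-layered graphs with $k^2$ middle nodes (one per ordered pair of portals), with edge weights encoding the forward-to-$b_1$ plus backward-from-$b_2$ distances on the $A$-to-$B$ side and the reverse on the $B$-to-$C$ side, and then invokes the orthogonal range searching subroutine (Theorem~\ref{thm:three-layered}) in dimension $k^2-1$. Your explicit linearization into $k^2-1$ axis-parallel inequalities is exactly what Theorem~\ref{thm:three-layered} does internally, so the argument is the same; the final bound $O(k^2 n \log^{k^2-1} n)$ agrees with the paper (both are somewhat loose about powers of $k$ outside the $\log^{k^2-1} n$ factor, which dominates).
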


\begin{proof}
  We again modify the proof of Theorem~\ref{thm:treewidth-undirected}. Run
  forward and backward Dijkstra's, and recurse with directed edges. We construct
  three-layered graphs with $k^2$ nodes in the middle layer, one per \emph{pair}
  of portals. The weights from $A$ to $B$ will correspond to the sum of distance
  to the first portal and distance from the second portal, and weights from $B$
  to $C$ will correspond to the sum of distance from the first portal and
  distance from the second portal.

  Notice that two-hop paths in the three-layered graph now actually correspond
  to roundtrip distances between nodes in $A$ and nodes in $C$, since these
  roundtrips must go through a portal each way.

  The running time analysis is almost identical, except invoking
  Theorem~\ref{thm:three-layered} now costs $O(kn \log^{k^2-2} n)$ time. Hence
  we pay an additional $O(\log^{k^2-k} n)$ everywhere, to get a running time of
  $O(k^2 n \log^{k^2-1} n)$.
\end{proof}

\begin{corollary}
  There are algorithms that compute \RoundtripDiameter{} and \RoundtripRadius{}
  on graphs of treewidth at most $k$ in time $O(k^2 n \log^{k^2-1} n)$.
\end{corollary}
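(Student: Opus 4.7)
The plan is to derive this corollary as an immediate consequence of the preceding theorem on computing roundtrip-eccentricities. Recall that \RoundtripDiameter{} is defined as $\max_{u,v} (d(u \to v) + d(v \to u))$, which is exactly the maximum roundtrip-eccentricity over all vertices, and \RoundtripRadius{} is defined as $\min_c \max_v (d(c \to v) + d(v \to c))$, which is exactly the minimum roundtrip-eccentricity.

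First, I would invoke the algorithm from the preceding theorem to compute the roundtrip-eccentricity of every vertex in $G$ in time $O(k^2 n \log^{k^2 - 1} n)$. The correctness of this step is already established, so there is nothing new to verify here. Then, in a single linear-time postprocessing pass, I would scan the $n$ computed eccentricities and return their maximum (for \RoundtripDiameter{}) and their minimum (for \RoundtripRadius{}).

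The running time is dominated by the call to the roundtrip-eccentricity algorithm, since the postprocessing adds only $O(n)$ work. Thus the total runtime is $O(k^2 n \log^{k^2 - 1} n)$, matching the claim. There is no substantive obstacle, as both diameter and radius are order statistics of the eccentricity list; the only thing to verify is that max/min of roundtrip-eccentricities coincide with the definitions of \RoundtripDiameter{} and \RoundtripRadius{}, which follows directly from rewriting the quantifiers.
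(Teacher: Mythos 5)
Your proposal is correct and matches the paper's implicit argument: compute all roundtrip-eccentricities via the preceding theorem, then take the max for \RoundtripDiameter{} and the min for \RoundtripRadius{} in $O(n)$ additional time. This is the same one-line postprocessing argument the paper uses for the analogous undirected corollary.
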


\section{Missing Reductions}
\label{app:lb}

\subsection{Hitting Set and Radius }

Recall the definition of the HSE-graph from Section~\ref{sec:lb}. 
The reductions in this section will be based on adding gadgets to it.

\paragraph{Undirected Radius.} We are now ready to prove the conditional lower bounds for undirected Radius by simple modifications of the HS-graph.

\begin{reminder}{Theorem~\ref{thm:radius}}
If for some $\eps>0$, there is an algorithm that can determine if a given undirected, unweighted graph with $n$ nodes and $m=O(n)$ edges has radius $2$ or $3$ in $O(n^{2-\eps})$ time, then the Hitting Set Conjecture is false.
\end{reminder}

\begin{proof}
Given an instance $A,B,U$ of the HSE problem, we construct its HSE-graph $G$ as described above.
We will construct an undirected graph $G'$ as follows.
Take $G$ with all its nodes and edges (ignoring the direction of these edges) and add three nodes $x,y,z$ to it.
For each node $a \in A$ add edges $\{a,x\}$ and $\{a,y\}$ to $G'$.
For each node $u \in U$ add an edge $\{u,x\}$ to $G'$.
Finally, add an edge $\{y,z\}$ to $G'$.

We now claim that the radius of $G'$ is $2$ if $G$ is a ``yes" HSE-instance and the radius is at least $3$ otherwise.

First, assume that $G$ is a ``yes" HSE-instance and therefore there is a node $a^*\in A$ such that for every $b \in B$ there is a node $u \in U$ such that both edges $(a^*,u)$ and $(u,b)$ are in $E(G)$ and therefore the edges $\{a^*,u\}$ and $\{u,a^*\}$ are in $E(G')$.
In this case, the distance from $a^*$ to every other node $v$ in $G'$ is at most $2$:
If $v \in B$ then the distance is $2$.
If either $v \in A$ or $v\in U$, then the distance is $2$, via $x$.
If $v=y$ then the distance is $1$ and if $v=z$ then the distance is $2$.

Now, assume that $G$ is a ``no" HSE-instance, which implies that for any node $a\in A$, there is a node $b \in B$ be such that there is no $u \in U$ for which the edges $(a,u)$ and $(u,b)$ are in $E(G)$. 
In this case, there is no path of length $2$ from $a$ to $b$ in $G'$: if the path goes through part $U$ and has length $2$ then it must be of the form $\{a,u\}, \{u,b\}$ for some $u \in U$ which is a contradiction, while if the path goes through $x$ it will have length at least $3$ since the distance from $x$ to any node in $B$ is exactly $2$.
Therefore, if the center of the graph is in $A$, the radius is at least $3$.
On the other hand, if the center of the graph is in $B \cup U \cup \{x\}$ then its distance to $z$ is at least $3$. 
And finally, if the center is $y$ or $z$ then its distance to any node $b \in B$ is at least $3$.
Therefore, the radius of $G'$ is at least $3$.

To complete the proof, note that our new graph $G'$ has $O(n)$ nodes and $O(n |U|)$ edges, and therefore it can be easily turned into a \emph{sparse} graph on $O(n|U|)$ nodes without changing the radius (add $n|U|$ dummy nodes, connect them to a node $d$ and connect $d$ to every node in $A$). This implies that a subquadratic algorithm will solve the HSE problem in $O(n^{2-\eps} \cdot |U|^{2-\eps})$ time, for some $\eps>0$, which refutes the HS conjecture. 

\end{proof}

The following observation about the treewidth (in fact, pathwidth) of the graph in the proof of Theorem~\ref{thm:radius} proves the Radius part of Theorem~\ref{thm:radius}.

\begin{claim}
The Radius instance constructed in the proof of Theorem~\ref{thm:radius} has pathwidth (and therefore treewidth) $O(|U|)$.
\end{claim}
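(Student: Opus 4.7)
The plan is to exhibit an explicit path decomposition of $G'$ whose every bag has size $|U|+O(1)$, which gives pathwidth $O(|U|)$ and therefore treewidth $O(|U|)$. The guiding observation is that in the construction of $G'$, the only ``global'' vertices are the three nodes $x,y,z$ (and the auxiliary node $d$ introduced in the sparsification step), together with the entire set $U$. Every node $a\in A$ has neighbors only in $\{x,y,d\}\cup U$, every node $b\in B$ has neighbors only in $U$, and every dummy node has $d$ as its sole neighbor. Thus, if we keep the set $K := U\cup\{x,y,z,d\}$ inside every bag, then any single vertex of $A$, of $B$, or among the dummies, can be added to the current bag and its incident edges will already be covered.

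Concretely, I would arrange the path decomposition as a sequence of bags ordered along a path as follows. First a bag equal to $K$; then for each $a\in A$ a bag $K\cup\{a\}$; then for each $b\in B$ a bag $K\cup\{b\}$; and finally for each dummy node $v$ a bag $K\cup\{v\}$. I would verify the three path decomposition properties: (i) vertex coverage, since every vertex appears in at least one bag; (ii) edge coverage, which follows from the fact that $N_{G'}(a)\subseteq K$ for $a\in A$, $N_{G'}(b)\subseteq K$ for $b\in B$, $N_{G'}(v)=\{d\}\subseteq K$ for a dummy $v$, and the edges among $\{x,y,z,d\}\cup U$ are all contained in the initial bag $K$; and (iii) the running intersection property, which holds trivially because each vertex outside $K$ appears in exactly one bag, while each vertex of $K$ appears in every bag and hence in a contiguous subpath.

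Since every bag has size $|K|+1 = |U|+5$, the decomposition witnesses pathwidth at most $|U|+4 = O(|U|)$, and therefore also treewidth $O(|U|)$. There is no real obstacle beyond being careful about which auxiliary vertex ($d$ in particular) comes from the sparsification padding; once $d$ is also placed in the common core $K$, the verification of the three properties is routine.
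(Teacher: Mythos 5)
Your decomposition is correct and takes essentially the same approach as the paper: keep a common core consisting of $U$ together with the constant number of hub vertices in every bag, and add each remaining vertex to exactly one bag. The only difference is that you are more careful to explicitly account for the sparsification step (the node $d$ and the dummy nodes), which the paper's version glosses over; this is a harmless refinement that does not change the argument.
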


\begin{proof}
Consider the path decomposition $P$ in which there is a bag $B_v$ for every node $v \in A\cup B$ that contains the nodes $B_v = U \cup \{x,y,z\} \cup \{v\}$, and the bags are ordered arbitrarily in a path.
Every edge appears in a bag and all the bags containing a node of $G'$ are connected (they are either a single bag or the whole path).
The sizes of the largest bag is $|U|+4$.
\end{proof}

Thus, an algorithm that can compute Radius on treewidth (or pathwidth) $k$ graphs in $2^{o(k)}\cdot n^{2-\eps}$ can be used to solve the HSE problem where $|U|=\omega(\log{n})$ in $O(n^{2-\eps})$ time, refuting the HS conjecture.

\paragraph{Source Radius.} We now present the reduction to Source Radius which allows us to prove Theorem~\ref{thm:SourceRad}.

\begin{reminder}{Theorem~\ref{thm:SourceRad}}
A $(2-\delta)$-approximation algorithm for Source Radius in sparse graphs that runs in subquadratic time refutes the Hitting Set Conjecture.
\end{reminder}

\begin{proof}

We show how an algorithm that distinguishes between radius $t+1$ and $2t$ on a graph with $O(tn)$ nodes and $O(n |U| + tn)$ edges allows us to solve an HSE instance on two lists of $n$ sets in $U$.
Given an algorithm for Source Radius as in the statement of the theorem, we can
set $t=|U|=\omega(\log{n})$ which allows us to solve HSE in subquadratic time since $(2-\delta)(t+1) <2t$.

Given an instance $A,B,U$ of HSE, we construct the corresponding HSE graph $G$ and then use it to construct our Source Radius instance $G'$ as follows.
Take $G$ with all its nodes and edges and add the following nodes and paths to it to get $G'$.
For each node $b \in B$ add $t-1$ nodes $b_1,\ldots,b_{t-1}$ and add edges so that there is a path $b \to b_1 \to b_2 \to \cdots \to b_{t-1}$.
Similarly, for each node $a \in A$ add $t-2$ nodes $a_1,\ldots,a_{t-2}$ and edges so that there is a path $a_1 \to a_2 \to \cdots \to a_{t-2} \to a$.
Add a node $x$ and connect every node $a \in A$ with an $(a,x)$ edge to $x$, and connect $x$ to the beginning of $a$'s path by adding the edge $(x,a_1)$.

\begin{claim}
The radius of $G'$ is $t+1$ if $G$ is a ``yes" HSE-instance and is at least $2t$ otherwise.
\end{claim}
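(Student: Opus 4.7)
The plan is to analyze the source-eccentricity of every node type in $G'$ and show that in the yes case the hitting-set witness $a^* \in A$ achieves source-eccentricity exactly $t+1$, while in the no case every vertex has source-eccentricity at least $2t$. The construction equips each $b \in B$ with a long outgoing tail $b \to b_1 \to \cdots \to b_{t-1}$ and each $a \in A$ with a long incoming chain $a_1 \to \cdots \to a_{t-2} \to a$, glued at a central hub $x$ via edges $(a, x)$ and $(x, a_1)$. The key intuition is that reaching a far terminal $b_{t-1}$ from any candidate center requires traversing almost the whole graph, unless a direct two-hop shortcut $a \to u \to b$ is available, which exists only when $a$ hits $b$.

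For the yes case, I would fix a hitting set $a^* \in A$ and compute $d(a^*, \cdot)$ on each node type. The direct edge gives $d(a^*, x) = 1$, and routing through $x$ gives $d(a^*, a_i^{(a')}) = 1 + i$ and $d(a^*, a') \le t$ for every $a' \in A$. For $u \in U$, $d(a^*, u) = 1$ if $u \in a^*$ and otherwise $d(a^*, u) \le t+1$ via $x$ and some $a'$ with $u \in a'$, which exists by the standard preprocessing of the HSE-graph. Crucially, because $a^*$ is a hitting set, every $b \in B$ admits some $u \in a^* \cap b$, yielding $d(a^*, b) = 2$ and $d(a^*, b_{t-1}) = t+1$. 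Therefore $a^*$ has source-eccentricity exactly $t+1$, so the radius is at most $t+1$.

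For the no case, I would argue separately about each candidate center. Any vertex in $B \cup \bigcup_b \{b_1, \ldots, b_{t-1}\} \cup U$ has outgoing edges only ``downstream'' into a single $b$-tail and cannot reach $A$ at all, hence is infinitely eccentric. For $x$, each $b$ is hit by some $a'$ (vertices not hit by any $a$ may be removed in preprocessing), giving $d(x, b) = t+1$ and $d(x, b_{t-1}) = 2t$. For $a \in A$, since $a$ is not a hitting set there is some $b^* \in B$ with $a \cap b^* = \emptyset$; the only route from $a$ to $b^*$ must go through $x$ and a different $a'$ that hits $b^*$, so $d(a, b^*) \ge t+2$ and $d(a, b^*_{t-1}) \ge 2t+1$. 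The internal chain nodes $a_i^{(a')}$ reach the bulk of the graph only by first traversing to $a'$ and then to $x$, so their source-eccentricities exceed that of $x$. Hence every vertex has source-eccentricity at least $2t$, as claimed.

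The main obstacle I expect is making the lower bound for $a \in A$ in the no case airtight, specifically ruling out any shortcut from $a$ to a non-hit $b^*$. The structural observation to invoke is that every edge into the ``upstream'' portion of the graph (the chain nodes $a_i^{(a')}$ and the $A$ vertices) originates at $x$, so any walk that leaves $A$ via an edge $a \to u$ is funneled into a $b$-tail dead end and must return to $x$ before trying a different $a'$; this costs the full $a \to x \to a_1^{(a')} \to \cdots \to a'$ of $t$ steps before the alternate $a'$ is reached. Combining this with the preprocessing guarantee that $N(a) \setminus N(a')$ is nonempty for all $a \neq a'$, needed both to route $a^*$ to $u \notin N(a^*)$ in the yes case and to furnish the routing $a'$ hitting $b^*$ in the no case, should close the proof.
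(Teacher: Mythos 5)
Your proposal is correct and follows essentially the same route as the paper's own proof: analyze the source-eccentricity of each vertex type, show $a^*$ achieves $t+1$ in the yes case (using the hitting-set witness for the $B$-tails and routing through $x$ for everything else), and in the no case argue that $A$, $B \cup U$, the $a_i$ chains, and $x$ each have eccentricity at least $2t$, with the core observation that a non-hitting $a$ must detour $a \to x \to a_1' \to \cdots \to a' \to u \to b^* \to \cdots \to b^*_{t-1}$ costing at least $2t+1$. The only imprecision, shared with the paper's write-up, is that reachability of arbitrary $u \in U$ from $a^*$ and of arbitrary $b_{t-1}$ from $x$ tacitly assumes no $u$ or $b$ is entirely missed by $A$ (harmless, as such instances can be detected and discarded in linear time, and missing edges only increase distances, which does not hurt the lower bound direction).
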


\begin{proof}
First, assume that $G$ is a ``yes" HSE-instance and therefore there is a node $a^*\in A$ such that for every $b \in B$ there is a node $u \in U$ such that both edges $(a^*,u)$ and $(u,b)$ are in $E(G)$ and therefore these edges are also in $E(G')$.
In this case, the distance from $a^*$ to every other node $v$ in $G'$ is at most $t+1$:
\begin{enumerate} 
\item If $v = b_i$ for some $i \leq t-1$ or $v=b$, then the distance is $2+i \leq t+1$ or $2$.
\item If $v = x$ then the distance is $1$. If $v= a_i$ for some $i \leq t-2$ then the distance is $i+1 \leq t-1$ via $x$. If $v = a \neq a^*$ then the distance is $t$ via $x$.
\item If $v \in U$, then the distance is either $1$ via the edge $(a^*,u)$ or $t+1$ via a path through $x$ to some $a \neq a^*$ and then to $v$.
\end{enumerate}
On the other hand, assume that $G$ is a ``no" HSE-instance, which implies that for any node $a\in A$, there is a node $b \in B$ be such that there is no $u \in U$ for which the edges $(a,u)$ and $(u,b)$ are in $E(G')$. 
In this case, there does not exist a center node $w$ that can reach every other node $v$ within less than $2t$ distance:
\begin{itemize}
\item $w$ cannot be in $A$ since there is no path of length $2t-1$ from $a$ to $b_{t-1}$ in $G'$: the only such paths go through the node $x$ and spend $t$ edges to reach some node $a' \neq a$ and then take a path of length $t+1$ from $a'$ to $b_{t-1}$.
\item Moreover, for any $i \leq t-2$, $w$ cannot be the node $a_i$ since it will have an even larger distance to $b_{t-1}$: it will have to reach $a$ first and then take the path of length $2t+1$ to $b_{t-1}$.
\item $w$ cannot be any node in $U \cup B$ or any $b_i$, since those nodes cannot reach $x$.
\item Finally, $w$ cannot be $x$ since its distance to $b_{t-1}$ is $t-1+t+1=2t$.
\end{itemize}

Therefore, the radius of $G'$ is at least $2t$.
\end{proof}

Our new graph $G'$ has $N=O(tn)$ nodes and $O(n|U| + tn)=O(N)$ edges, which implies that a subquadratic $O(N^{2-\delta})$ algorithm gives a subquadratic algorithm for HSE even when $|U|=\omega(\log{n})$ and refutes the HS conjecture.
\end{proof}

\paragraph{Max Radius.} Now we prove the lower bound for Max Radius. Together with Lemma~\ref{lem:roundtrip}, this proves Theorem~\ref{thm:roundtrip}.

\begin{lemma}
\label{lem:max}
A $(2-\delta)$-approximation algorithm for Max Radius in sparse graphs that runs in subquadratic time refutes the Hitting Set Conjecture.
\end{lemma}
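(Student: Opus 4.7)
The plan is to reduce the Hitting Set Existence problem to distinguishing Max Radius $R$ from $(2-o(1))R$ on sparse directed graphs, by adapting the construction of Lemma~\ref{lem:roundtrip}. Directly applying the roundtrip construction yields only a factor $4/3$ gap for max radius (the YES max eccentricity is $3$ while the NO max eccentricity is $4$, because max distance $\le$ roundtrip), so the gadgets must be stretched. The idea is to use a length parameter $t = \omega(\log|U|)$ and replace constant-length gadgets with $\Theta(t)$-length ones, so that both $d(c \to b^*)$ and $d(b^* \to c)$ are forced to be $\approx 2t$ in the NO case while the YES max eccentricity remains $\approx t$.

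Concretely, starting from the HSE-graph on $A' \cup B' \cup U$, I would duplicate $U$ into $C$ and $D$ and, for each pair $(a, u)$, install a directed cycle between $a$ and $u_C$ (and symmetrically for $u_D$) whose orientation depends on whether $(a, u) \in E$: one segment is a short edge and the other is a length-$t$ path, arranged so that $u_C \to a$ is always a length-$t$ path (to block the shortcut that would otherwise make the NO distance too small), while $a^* \to u_C$ remains a short edge whenever $u \in a^*$. For each $(u, b) \in E$, install length-$t$ directed paths $u_C \to \cdots \to b$ and $b \to \cdots \to u_D$, possibly with a bidirectional structure that keeps internal nodes from having unbounded reverse distance. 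Two copies of this gadget are glued at $A$, as in Lemma~\ref{lem:roundtrip}, producing a sparse graph with $O(tn|U|)$ vertices and edges.

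In the YES case, the HS witness $a^* \in A$ reaches every $b \in B$ (and vice versa) in $t + 1$ steps via the efficient direction of the gadgets (short edge $a^* \to u_C$ followed by the length-$t$ path $u_C \to \cdots \to b$ for the hitting $u$); all other vertices are shown to have max eccentricity $O(t)$ from $a^*$ by structural arguments on the gadget. In the NO case, for any candidate center $c \in A$, the absence of a hitting $u$ for some $b^* \in B_1$ forces both $d(c \to b^*)$ and $d(b^* \to c)$ to traverse at least two length-$t$ segments (one through the stretched $u_C \to a'$ and one through $u'_C \to b^*$), giving max eccentricity $\geq 2t - O(1)$. Centers outside $A$ are ruled out by the doubled $B_1/B_2$ construction exactly as in Lemma~\ref{lem:roundtrip}.

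The main obstacle is ensuring that the gap factor actually approaches $2$: this requires simultaneously (a) keeping the YES max eccentricity $O(t)$ by handling the internal path nodes so that their reverse distances to $a^*$ do not explode (my preliminary analysis shows that a bidirectional $u_C \leftrightarrow b$ path with only forward $u_C \to a$ causes an internal node at position $\approx t/2$ to blow up to $\approx 3t/2$, which is the delicate point to manage), and (b) ruling out the naive shortcut $c \to u_C \to a' \to u'_C \to b^*$, which with short $u_C \to a'$ edges would have length only $t + O(1)$. Making every $u_C \to a'$ traversal require a length-$t$ path while keeping the YES center's $a^* \to u_C$ edges short is the key design choice, and verifying that no residual shortcut reduces the NO distance requires a detailed case analysis exploiting the HSE preprocessing (no set contains another). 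Once calibrated, with $t = \omega(\log|U|)$ and the sparsity of the resulting graph, a subquadratic $(2 - \delta)$-approximation algorithm for Max Radius would solve HSE in $O(n^{2-\epsilon} \cdot |U|^{O(1)})$ time, refuting the Hitting Set Conjecture.
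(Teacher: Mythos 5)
Your proposal takes a genuinely different route from the paper, and it is not a complete proof.

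The paper proves Lemma~\ref{lem:max} by modifying the \textsc{SourceRadius} construction of Theorem~\ref{thm:SourceRad}, not the \textsc{RoundtripRadius} construction of Lemma~\ref{lem:roundtrip}. The Source-Radius gadget is already ``stretched'': every $b \in B$ hangs a length-$(t-1)$ outgoing path, every $a \in A$ hangs a length-$(t-2)$ incoming path, and there is a hub $x$ with $(a,x)$ and $(x,a_1)$ edges, yielding the $(t+1)$-vs-$2t$ gap for the one-way radius. For Max Radius the only obstacle is that some vertices cannot reach the center at all, so the paper simply adds edges from every node of $U \cup B$ and every $b_i$ node to $x$; these give every such vertex a $t$-length path to any $a \in A$ via $x \to a_1 \to \cdots \to a$, and then the same YES/NO analysis is rerun under max-distance, with a short additional case analysis to re-establish that no vertex outside $A$ can be a good center. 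This is a one-line modification of an existing hard instance, rather than a fresh gadget design.

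Your approach instead starts from the Roundtrip-Radius gadget, observes (correctly) that the max-distance gap there is only $4/3$, and tries to stretch the constant-length cycles into $\Theta(t)$-length ones. That idea is not unreasonable, but as written it does not close: you yourself flag two unresolved problems --- (a) internal path nodes on the bidirectional $u_C \leftrightarrow b$ segments blow the YES max-eccentricity up to roughly $3t/2$, and (b) a short $u_C \to a'$ leg enables a $t+O(1)$ shortcut in the NO case --- and you propose ``the key design choice'' without exhibiting an orientation that simultaneously kills both. Until those two are resolved concretely, there is no verified $(t+1)$-vs-$(2t-O(1))$ gap, so the proposal is a plan, not a proof. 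The comparison with the paper is instructive: the Source-Radius base already has the right gap built in because the $a_i$ and $b_i$ chains are one-directional and the paths cannot be shortcut, so the only thing left to arrange is reverse reachability. Starting from the symmetric roundtrip cycles forces you to re-derive the asymmetry by hand, which is exactly where the delicate calibration problems you identified come from. If you want to salvage your plan, the cleanest fix is to abandon the roundtrip gadget and instead audit whether the paper's hub-edge trick applies verbatim (you will find that the $a_i$ chain nodes also need a short escape to $x$, a detail the paper glosses over); otherwise, adopt the paper's construction directly.
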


\begin{proof}
The proof proceeds exactly as in the proof of Theorem~\ref{thm:SourceRad}, except that we add the following edges to $G'$:
For every node in $U \cup B$ or any $b_i$ node, add an edge to $x$.

The new edges makes sure that there is a path of length $t+1$ from any node in $G'$ to any node in $A$ and now the same claims hold when we replace one-way distance with max-distance. 
We will outline the differences in the arguments.

If $G$ is a ``yes" HSE-instance, then $G'$ has max-radius at most $t+1$. 
As before, we show a node $a^*$ in $A$ that reaches the other nodes within $t+1$ distance.
Now, however, we also have that any node will reach $a^*$ within distance $t+1$ via $x$, and therefore the max-distance between $a^*$ and the other nodes is $t+1$.

If $G$ is a ``no" HSE-instance, we show that no node of $G'$ can have distance less than $2t$ to the other nodes (this is stronger than having max-eccentricity at least $2t$, since we are ignoring the distances from the other nodes).
Assume for contradiction that there is a node with max-eccentricity $<2t$.
Consider the argument we gave in the proof of Theorem~\ref{thm:SourceRad} and note that it still implies that the center cannot be in $A$ nor $x$ nor any $a_i$ node.
Here, however, we need a different argument for why the center cannot be in $U \cup B$ or any $b_i$:
any such node will have a node in $B$ that is at distance at least $2t$ from it. 
Here we assume that there is no node in $U$ that has edges to every node in $B$ (if such node exists, we check if it has any edges coming from $A$ - if there are we output ``yes" and otherwise we remove the node.)
\end{proof}

\paragraph{Min-Radius.} Finally, we present the lower bounds for Min-Radius. Here, we will have to work harder to make the graph in our construction a DAG.

\begin{lemma}
\label{lem:MinRad}
A $(2-\delta)$-approximation algorithm for Min-Radius on sparse DAGs that runs in subquadratic time refutes the Hitting Set Conjecture.
\end{lemma}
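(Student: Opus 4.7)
The plan is to reduce HSE to $(2-\delta)$-approximating Min-Radius on a sparse DAG, paralleling the HSE-to-Source-Radius reduction of Theorem~\ref{thm:SourceRad} but made symmetric and acyclic. Starting from the HSE-graph with parts $A$, $B$, $U$, I construct a DAG $G'$ layered in topological order as $s$, $F_1, \ldots, F_{t-1}$, $B_L$, $U_L$, $A$, $U_R$, $B_R$, $T_1, \ldots, T_{t-1}$, $t_{\text{end}}$, where $F$ and $T$ are front/tail paths of length $t-1$ feeding into $B_L$ and collecting from $B_R$ respectively, and the two HSE edge sets orient as $B_L \to U_L \to A$ and $A \to U_R \to B_R$. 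Thus a hitting-set $a^* \in A$ is reached by every $b \in B_L$ in distance $2$ and reaches every $b \in B_R$ in distance $2$, by the usual HSE-graph two-hop argument.

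The new obstacle is that on a DAG the ``min'' of the two one-way distances is $\infty$ for topologically incomparable pairs, so I must ensure every vertex is comparable to the candidate center and its Min-distance is bounded. I handle this with two additional gadgets: $(i)$ a shared bypass chain of length $t-1$ from each $a \in A$ through $\alpha_1 \to \cdots \to \alpha_{t-1}$ to every $u \in U_R$ (so any $a$ reaches any $u \in U_R$ in distance exactly $t$, regardless of HSE membership), and symmetrically on the left; and $(ii)$ duplicating $A$ into an ``in'' layer and an ``out'' layer joined by an identity edge, so two $A$-vertices become comparable. Critically, the bypass terminates at $U_R$ rather than $B_R$, so the hitting-set test is not short-circuited: a non-hitting-set $a$ must still pay one extra hop through some $u \in b^* \setminus a$ to reach a witness $b^*$, and routing this through the tail chain will be arranged to cost at least $2t$ total.

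For the analysis, in the yes case a hitting-set $a^*$ has Min-eccentricity at most $t+1$ by combining the direct HSE $2$-hop paths to $B_L \cup B_R$, the length-$t$ bypass to non-matching $u$-vertices, and the length-$(\leq t+1)$ reach along the front and tail chains. In the no case, for every candidate center $c$ I exhibit a witness $v$ with $\min(d(c,v), d(v,c)) \geq 2t$: if $c = a \in A$ is not a hitting set, then a $b^* \in B_R$ with $a \cap b^* = \emptyset$ has no direct $A \to U_R \to B_R$ route and the bypass-plus-tail detour totals at least $2t$; if $c$ lies outside $A$, the front/tail geometry witnesses a large Min-eccentricity (source and sink are separated by more than $2t$; vertices in $F$, $T$, $U$ or $B$ are either near one extreme or have parallel siblings unreachable within $2t$ in either direction).

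The principal difficulty will be length-calibrating the bypass and tail-chain gadgets so that both bounds hold simultaneously: the bypass must be short enough to keep the yes-case Min-eccentricity at $t+1$, yet long enough that when combined with the forced one-hop detour through $U_R \to B_R$ and $t-1$ tail-chain hops, the no-case witness is at Min-distance at least $2t$. Once calibrated, the graph has $O(n|U|)$ vertices and $O(n|U|)$ edges; a subquadratic $(2-\delta)$-approximation for Min-Radius on sparse DAGs would distinguish Min-Radius $t+1$ from $2t$ and hence solve HSE in $O((n|U|)^{2-\eps})$ time, refuting the Hitting Set Conjecture whenever $|U|=\omega(\log n)$.
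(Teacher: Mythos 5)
Your proposal has a genuine gap at exactly the place the paper has to do its hardest work: ensuring that any two distinct vertices $a,a'\in A$ are comparable in the DAG order and have bounded min-distance. The gadget you propose --- splitting $A$ into an ``in'' layer and an ``out'' layer with identity edges $a^{\mathrm{in}}\to a^{\mathrm{out}}$ --- does not do this. After the split, $a^{\mathrm{in}}$ and $(a')^{\mathrm{in}}$ are still topologically incomparable (the only out-edge of $a^{\mathrm{in}}$ is to $a^{\mathrm{out}}$, and nothing routes from $a^{\mathrm{out}}$ back into $(a')^{\mathrm{in}}$), so in a ``yes'' instance the min-eccentricity of the hitting-set vertex $a^*$ would be $\infty$, not $t+1$. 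The bypass chains you describe go from $A$ to $U_R$ (and symmetrically into $A$ on the left) and therefore give no $A$-to-$A$ connectivity either. Note that you also cannot fix this with a single shared ``hub'' that $A$ both feeds into and receives from, since that immediately creates a cycle. The paper resolves this with a nontrivial binary-tree gadget $DG_t(A)$: for each node $a$ it adds $O(\log n)$ edges to/from tree nodes along the root path, so that any two $a\neq a'$ meet at their LCA with min-distance exactly $t+1$, and the result is still a DAG with only $O(tn\log n)$ edges. That gadget, plus the trick of using two independent copies of $DG_t(A)$ (so that tree-internal nodes have an incomparable twin and hence cannot be centers), is the heart of the construction and is entirely absent from your plan.

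A second, related gap: the shared bypass chain $\alpha_1\to\cdots\to\alpha_{t-1}$ reachable from every $a$ and reaching every $u\in U_R$ can itself be a dangerously good center --- a mid-chain node $\alpha_i$ is within roughly $t/2$ of all of $A$ (backward) and all of $U_R$ (forward). You gesture at ``parallel siblings unreachable within $2t$'' but give no concrete blocker for the bypass. The paper handles the analogous issue for its $x_1\to\cdots\to x_t$ chain by adding a node $y$ with $a\to y$ for all $a\in A$ and no edges to or from the chain, so every $x_i$ has infinite min-distance to $y$; your proposal has no such witness. Finally, you frame the ``principal difficulty'' as length-calibrating the gadgets, but that calibration is routine once the structural issues above are solved; the real difficulty is the DAG-comparability problem, which your construction does not address.
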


Because we want the input graph to be a DAG, and simultaneously we want the min-distance between any two nodes to be a small constant, we require a special construction.
Given a set $X$ of $n$ nodes $v_1,\ldots,v_n$, it creates a DAG $DG(X)$ with at most $O(n)$ nodes and $O(n\log n)$ edges such that in the topological order of $DG(X)$, $v_i<v_{i+1}$, and for any two nodes of $DG(X)$ $x,y$ where $x<y$ in the topological order, $d(x,y)\leq 2$.

We define $DG(X)$ as follows. WLOG $n$ is a power of $2$, otherwise add enough $(<n)$ new nodes after $v_n$ and grow $n$ until $n$ is a power of $2$.
Using $n-1$ extra nodes, create a complete balanced binary tree $T$ on top of $X$ where the leaves of $T$ are $X$ in the order $v_1,\ldots, v_n$ from left to right. 
Let $r$ be the root of $T$. The edges of $T$ are not added to $DG(X)$ but we will add equivalent directed edges for them; $T$ and $DG(X)$ share the same node set. 

Now, for every node $x$ in $T$, consider the root to $x$ path, and call any node $p$ on the path a $0$ node if the path branches left out of it and a $1$ node otherwise. For every $0$-node $p$ on the $r$-$x$ path, add a directed edge $(x,p)$ to $DG(X)$, and for every $1$-node, add $(p,x)$. Note we have only added $\log n$ edges per node $x$, so the number of edges is $O(n\log n)$. $DG(X)$ is a DAG by construction. Moreover, for any two nodes $x$ and $y$, let their LCA in the tree be $u$. If $u$ is not $x$ and not $y$, then it is a $0$-node for one of them, say $x$, and a $1$-node for the other, $y$. 
Hence, there is a path of length $2$ in $DG(X)$ between any pair of nodes where neither is a descendent of the other in $T$. If $y$ is a descendent of $x$, then the min-distance between then is $1$.

We can generalize the construction above slightly for any integer $t\geq 1$, in a construction $DG_t(X)$ so that for any two nodes $x$ and $y$ with $x<y$ in the topological 
order, where $x$ is not a descendent of $y$ in $T$, their min-distance is $t+1$. To do this, take $DG(X)$ and replace every non-leaf node $\ell$ by a directed path $\ell_1,\ell_2,\ldots,\ell_t$, and replace every in-edge $(x,\ell)$ by $(x,\ell_1)$ and every out-edge by $(\ell,y)$ by $(\ell_t,y)$. We say that $\ell_j$ is a copy of $\ell$ in $T$.
For competeness, call the leaves (the nodes of $X$) copies of themselves and for any leaf $x$, let $x_1=x_t=x$.
For any directed edge $(x,y)$ of $DG(X)$, if $(x,y)$ goes towards the root in $T$, then connect all copies $x_i$ of $x$ to $y_1$, and if $(x,y)$ goes down in $T$, then connect $x_t$ to all copies of $y$. The number of nodes of $DG_t(X)$ is $O(tn)$ and the number of edges is $O(tn\log n)$.


Now, for any nodes of $DG_t(X)$, $x$ and $y$, not descendents of one another, their min-distance is $t+1$ and is attained by taking their LCA, $\ell$ in $T$, taking the edge $(x,\ell_1)$ followed by the path to $\ell_t$ and then the edge $(\ell_t,y)$. If $x$ is a descendent of $y$, then the min-distance is at most $t$.

Here is another way to describe $DG_t(X)$ using binary numbers:

Let us identify each node in $X$ with a number in $[n]$.
We add nodes $v_{i,j}$ for each $i \in [\log{n}]$ and $j \in [2^{i-1}]$ in the form of a binary tree.
Each such node will be connected with a path of length $t-1$: $v_{i,j} \to v_{i,j}^{(1)} \to \cdots \to v_{i,j}^{(t-2)} \to v'_{i,j}$.
These new nodes will be connected to and from $X$ as follows:
for each node $a \in X$, if $a \in [n/2^i (2j-2)+1,\ldots,n/2^i (2j-1)]$, we add an edge $a \to v_{i,j}$, and if $a \in [n/2^i (2j-1)+1,\ldots,n/2^i (2j)]$, we add an edge $v'_{i,j} \to a$.
Note that we only added $O(n \log{n})$ edges, and now there is a path from $a$ to $a'$ for all $a<a'$ in $A$ of length $t+1$, via some $v_{i,j}$.
Then, we also connect the tree nodes with themselves: Let $i,i' \in [\log{n}]$ and $j \in [2^{i-1}], j' \in [2^{i'-1}]$. If $j' \in [2^{i'-i} (2j-2)+1, \ldots,  2^{i'-i} (2j-1)]$, then add edges $v_{i',j'}\to v_{i,j}$, $v'_{i',j'} \to v_{i,j}$ and $v^{(h)}_{i',j'} \to v_{i,j}$ for all  $h \in [t-1]$, and if $j' \in [2^{i'-i} (2j-1)+1, \ldots,  2^{i'-i} (2j)]$, we add $v'_{i,j}\to v_{i',j'}$, $v'_{i,j} \to v'_{i',j'}$ and $v'_{i,j}\to v^{(h)}_{i',j'}$ for all  $h \in [t-1]$.
This completes the construction of this tree structure. 

\begin{proof}


Given an instance $A,B,U$ of HSE, we construct the corresponding HSE graph $G$ and then use it to construct our Min Radius instance $G'$ as follows.

Take $G$ with all its nodes and edges and add the following nodes and paths to it to get $G'$.

For each node $b \in B$ add $t-1$ nodes $b_1,\ldots,b_{t-1}$ and add edges so that there is a path $b \to b_1 \to b_2 \to \cdots \to b_{t-1}$.

Create {\bf two} copies of construction $DG_t(A)$ (sharing $A$). Having two copies rather than one serves to enforce that the center of the graph must be in $A$, and not some extra node of $DG_t(A)$ we added.



Finally, add $t$ nodes $x_1,\ldots,x_t$, connect every node $a \in A$ with an edge $a \to x_1$, then connect every node in $U$ with an edge $x_t \to u$, then add a path $x_1\to \cdots \to x_t$.
Also, add a node $y$ and edges $a \to y$ for every node $a \in A$.

\begin{claim}
The min-radius of $G'$ is $t+1$ if $G$ is a ``yes" HSE-instance and is at least $2t$ otherwise.
\end{claim}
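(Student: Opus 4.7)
The plan is to establish the claimed gap by proving the two directions of the min-radius bound separately.

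For the yes-instance direction, I would take the hitting-set vertex $a^* \in A$ and prove $d(a^* \to v) \le t+1$ for every $v$, which bounds the min-eccentricity. The case split is routine and follows the construction of $G'$: for $b \in B$ the hitting path $a^* \to u \to b$ has length $2$, extended to each $b_i$ along the attached path at total length $2+i \le t+1$; for $u \in U$ either a direct edge exists or the detour $a^* \to x_1 \to \cdots \to x_t \to u$ has length $t+1$; for any $a' \in A$ or any internal node of either copy of $DG_t(A)$ the construction's $t+1$ forward-reachability bound applies; and the bounds for $y$ and $x_i$ are immediate.

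For the no-instance direction, I would rule out each candidate center $c$ in turn. If $c \in U \cup B \cup \{b_i\} \cup \{x_i\} \cup \{y\}$, observe that $y$ and $b_{t-1}$ are sinks and that no vertex in this set can reach $A$; taking $w \in \{y, b_{t-1}\}$ then witnesses $d(c,w) = d(w,c) = \infty$, so the min-eccentricity is infinite. If $c \in A$, the no-instance property supplies some $b \in B$ not hit by $c$; take $w = b_{t-1}$. The backward distance $d(w \to c)$ is infinite ($b_{t-1}$ is a sink), and any forward path from $c$ to $b_{t-1}$ must reach $b$. The length-$2$ shortcut $c \to u \to b$ is unavailable, so the shortest options are the $x$-detour $c \to x_1 \to \cdots \to x_t \to u \to b \to \cdots \to b_{t-1}$ (total length $2t+1$) or a route through another $a' \in A$ via $DG_t(A)$ (total length at least $2t+2$). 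Either way $d(c \to b_{t-1}) \ge 2t$.

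The main obstacle is the case where $c$ is internal to one of the two copies of $DG_t(A)$; this is precisely the reason the construction uses two copies. For such a $c$ with path-position $h$ in copy $1$, I would again take $w = b_{t-1}$: the backward distance is infinite, and any forward path must first exit $DG_t^{(1)}$ through some $a \in A$ at cost at least $t-h$, and then face the no-instance barrier from $A$ to $b_{t-1}$ analyzed above. Careful bookkeeping of which $a \in A$ are cheaply reachable from $c$ will be needed, and the second copy of $DG_t(A)$ is precisely what restricts this reachable set so that the no-instance property can still be invoked for a suitable $b$: without the second copy, an internal vertex could conceivably attain min-eccentricity only $t+1$ by exploiting shortcuts along its own substructure, but the presence of the second copy forces any would-be shortcut to pay extra exit-and-re-entry hops. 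Summing the exit cost with the $A$-to-$b_{t-1}$ cost then yields $d(c \to b_{t-1}) \ge 2t$, completing the no-instance direction and hence the proof.
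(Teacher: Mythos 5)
The "yes" direction and the first cases of the "no" direction are fine, but the handling of the case where the candidate center $c$ is an internal node of one of the $DG_t(A)$ copies does not work, and this is the crux of the lemma.

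Your plan is to use $w=b_{t-1}$ as the witness and argue $d(c\to b_{t-1})\ge 2t$ by summing an exit cost of at least $t-h$ with an $A$-to-$b_{t-1}$ cost. But the exit cost $t-h$ can be as small as $2$ (take $c=v^{(t-2)}_{i,j}$, which reaches $v'_{i,j}$ in one hop and then a leaf $a$ in one more), and from that leaf $a$ the distance to $b_{t-1}$ is only $t+1$ whenever $a$ hits $b$. To make the sum large you would need a \emph{single} $b\in B$ that is missed by \emph{every} cheaply-reachable leaf $a$; the no-instance property only gives, for each individual $a$, some $b_a$ missed by $a$, and these can all be different. For small $i$ the set of cheaply-reachable leaves is an entire half of $A$, so no such common $b$ need exist, and then $d(c\to b_{t-1})$ can be as small as $t+3$ for every $b$. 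So $b_{t-1}$ is simply not a valid witness for these $c$.

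The accompanying intuition is also off: adding a second copy of $DG_t(A)$ cannot ``restrict the reachable set'' or ``force extra exit-and-re-entry hops'' — adding vertices and edges only creates more paths, never fewer. The actual role of the second copy is to supply the correct witness: take $w=c'$, the isomorphic copy of $c$ in the other tree. Any path between the two trees must pass through $A$, and by the threshold property of the $DG_t$ construction the leaves reachable from $c$ all lie strictly above the threshold $T=n\,(2j-1)/2^i$, while the leaves that can reach $c'$ all lie at or below $T$; moreover, once you are at a leaf $a$, both copies of $DG_t(A)$ only move you forward in the topological order, and the rest of $G'$ ($U$, $B$, the $x$-path, $y$) never returns to $A$. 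Hence $d(c\to c')=d(c'\to c)=\infty$, which is the statement you need.
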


\begin{proof}
First, assume that $G$ is a ``yes" HSE-instance and therefore there is a node $a^*\in A$ such that for every $b \in B$ there is a node $u \in U$ such that both edges $(a^*,u)$ and $(u,b)$ are in $E(G)$ and therefore these edges are also in $E(G')$.
In this case, the min-distance from $a^*$ to every other node $v$ in $G'$ is at most $t+1$:
\begin{enumerate} 
\item If $v = b_i$ for some $i \leq t-1$ or $v=b$, then the distance is $2+i \leq t+1$ or $2$.
\item If $v = x_i$ for some $i \leq t$ then the distance is $i$. If $v=y$ the distance is $1$.
\item If $v \in U$, then the distance is at most $t+1$ via the path $x_1 \to \cdots \to x_{t} \to v$.
\item If $v=a \in A$, then let $i$ be the most significant bit in which the integers $a,a^*$ differ, and note that there must be some $j \in [2^{i-1}]$ such that $a \in [n/2^i (2j-1)+1,\ldots,n/2^i (2j)]$ and $a^* \in [n/2^i (2j-2)+1,\ldots,n/2^i (2j-1)]$, or vice versa. Either way, there is a path of length $t+1$ via $v_{i,j}\to \cdots \to v'_{i,j}$ between the two nodes.

\item Finally, if $v= v^{(h)}_{i,j}$ for some $i \in [\log{n}], j \in[2^{i-1}]$ and $h \in [t-1]$, then let $i^*$ be the largest integer so that the first $i^*$ most significant bits of $a$ and $i$ are the same. As before, this implies that for some $j^*$ there is a path $a \to v_{i^*,j^*} \to \cdots \to v^{(h)}_{i^*,j^*} \to v^{(h)}_{i,j}$, or a path $v^{(h)}_{i,j} \to v^{(h)}_{i^*,j^*} \to \cdots \to v'_{i^*,j^*} \to a$. Thus, the min-distance is at most $t+1$.

\end{enumerate}

On the other hand, assume that $G$ is a ``no" HSE-instance, which implies that for any node $a\in A$, there is a node $b \in B$ be such that there is no $u \in U$ for which the edges $(a,u)$ and $(u,b)$ are in $E(G')$. 
In this case, there does not exist a center node $w$ that can reach or be reached from every other node $v$ within less than $2t$ distance:
\begin{itemize}
\item $w$ cannot be in $A$ since there is no path of length $2t-1$ from $w=a$ to some node $b_{t-1}$ in $G'$ (the one that $a$ cannot reach in $G$): the only such paths either go through the $x_i$-path and spend $t$ edges to reach some node $u \in U$ for which $(a,u)\notin E(G)$, or go through the tree structure, incurring an extra cost of $t$ edges, to reach some node $a' \neq a$, and then take a path of length $t+1$ from $u$ or $a'$ to $b_{t-1}$.
\item $w$ cannot be a node $u \in U$: let $b \in B$ be such that $(u,b)$ is not an edge, then the min-distance between $u$ and $b$ is infinite. Similarly, $w$ cannot be a node $b$ or $b_i$ for a node $b \in B$, since it will have infinite min-distance to the node $b'\neq b$.
\item $w$ cannot be on the $x_i$ path, since it will have infinite min-distance to the node $y$. Thus, $w$ cannot be $y$ as well.
\item  The final case is when $w$ is a node in the tree structure $v^{(h)}_{i,j}$ for some $i,j,h$ in the right range.
This is the more tricky case.
We claim that the min-distance to the copy of this node, in the isomorphic copy of the tree that we added, is infinite. 
To see this, first note that by construction, a path between the two trees must pass through $A$.
Then, note that there is certain threshold $T=n/2^i (2j-1)$, such that for any node $a \in A$ that $v^{(h)}_{i,j}$ can reach, $a> T$, while for any node $a' \in A$ that can reach $v^{(h)}_{i,j}$, $a' \leq T$. 
Since this is also true for the copy of $v^{(h)}_{i,j}$, we conclude that there is no path between these two nodes, and the min-distance is infinite. (In other words, any path from $A$ to $A$ goes through the two copies of $DG(A)$, and since these are copies of the same DAG, it can't be that in one DAG there is a path from $x$ to some $a$, and the other, a path from $a$ to $x$.)
\end{itemize}

Therefore, the radius of $G'$ is at least $2t$.
\end{proof}

Our new graph $G'$ has $O(tn+n)$ nodes and $O(n|U| + tn\log{n})$ edges.
And note that, by construction, $G'$ is a DAG since we do not create any cycles.
It can be turned into a \emph{sparse} graph on $N=O(nt|U|)$ nodes, without changing the min-radius, by adding dummy nodes, all connected to a new node $w$ and connecting $w$ to every node in $A$.
We will choose $t$ large enough such that $(2-\delta)(t+1)<2t$.
Thus, a subquadratic $O(N^{2-\delta})$ time $(2-\delta)$-approximation algorithm for Min-Radius gives a subquadratic algorithm for HSE even when $|U|=\omega(\log{n})$ and refutes the HS conjecture.
\end{proof}

\subsection{Orthogonal Vectors and Diameter}
Similarly to the HSE graph, we define an OV graph $G$ from an OV instance $(A,B)$ where the vectors in $A$ and $B$ have dimension $d=O(\log n)$. $G$ has partitions named $A,B,C$ where $A$ and $B$ (abusing notation slightly) correspond exactly to the sets of vectors $A$ and $B$ of the OV instance, and $C$ is $[d]$. For every $a\in A$, create an edge $(a,c)$ for all $c\in C$ for which $a[c]=1$ and for each $b\in B$ and $c\in C$ for which $b[c]=1$, add $(c,b)$. The OV problem is now to find some $a\in A, b\in B$ such that $b$ is not reachable from $a$.

\paragraph{Min-Diameter.}
\begin{lemma}
\label{lem:MinDiamDAG}
If there is a $(3/2-\eps)$-approximation algorithm for Min-Diameter on a sparse DAG that runs in subquadratic time, then the OV conjecture is false.
\end{lemma}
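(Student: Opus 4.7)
I reduce OV to Min-Diameter on a sparse DAG with a gap of $2$ vs.\ $3$. Given an OV instance $A,B \subseteq \{0,1\}^d$ with $d = \omega(\log n)$, I will build a DAG $G'$ with $\tilde O(n)$ nodes and edges such that its min-diameter is $2$ if no orthogonal pair exists and exactly $3$ if one does. Because distances are integral and $3/(3/2-\eps) > 2$, a $(3/2-\eps)$-approximation algorithm on $G'$ must output a value strictly greater than $2$ whenever the true min-diameter is $3$, and can output $2$ whenever it is $2$; hence it distinguishes the two cases in subquadratic time and refutes the OV Conjecture.

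\paragraph{Construction.} Start from the OV graph on $A \cup C \cup B$ with directed edges $a \to c$ iff $a[c]=1$ and $c \to b$ iff $b[c]=1$. In this graph the forward distance satisfies $d(a,b)=2$ iff $(a,b)$ is non-orthogonal, and $d(a,b)=\infty$ otherwise. I augment it in two steps. First, I apply the tree-DAG construction $DG(X)$ used in the proof of Lemma~\ref{lem:MinRad} separately to each of $X \in \{A,B,C\}$, using $O(|X|\log|X|)$ internal auxiliary nodes and edges that stay entirely within each part; this guarantees that any two nodes of the same part have min-distance at most $2$ through internals that never leave the part. Second, I add two bypass vertices $w_1, w_2$ placed in the topological order between $A$ and $C$ and between $C$ and $B$ respectively, together with edges: $a \to w_1$ for every $a \in A$ and every $DG(A)$-internal $p$, and $p \to w_2$ for every $DG(A)$-internal $p$; the edge $w_1 \to w_2$; $w_1 \to c$ and $c \to w_2$ for each $c \in C$ (and analogously to/from $DG(C)$-internals); and $w_1 \to q$ and $w_2 \to q$ for every $DG(B)$-internal $q$, plus $w_2 \to b$ for each $b \in B$. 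Crucially, I do \emph{not} add $w_1 \to b$ or $a \to w_2$ edges, so any $A$-to-$B$ path that uses the bypass must traverse both $w_1$ and $w_2$ and therefore has length at least $3$.

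\paragraph{Gap verification and main obstacle.} Every orthogonal pair $(a,b)$ has a length-$3$ path $a \to w_1 \to w_2 \to b$, so its min-distance is at most $3$. To see it is at least $3$, consider a hypothetical length-$2$ path $a \to X \to b$: the middle vertex $X$ must be (i) a coordinate $c$, forcing $a[c]=b[c]=1$ and violating orthogonality; (ii) $w_1$ or $w_2$, but $w_1$ has no direct edge to $B$ and $a$ has no direct edge to $w_2$; or (iii) a $DG$-internal node, whose edges by construction stay inside its own part of $A$, $B$, or $C$. For the no-witness case, a routine case analysis shows every pair has min-distance $\le 2$, using either the intra-part $DG$-structure, a direct OV edge, a single bypass vertex, or a $DG$-internal-to-bypass shortcut (for instance, $(a, q)$ with $q$ a $DG(B)$-internal is handled by $a \to w_1 \to q$). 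The resulting graph has $O(n + d + n\log n)$ nodes and $O(nd + n\log n)$ edges, which is $\tilde O(n)$ when $d = \polylog n$, giving the desired sparse instance. The main obstacle is the delicate balancing of the bypass edges: dense enough to push every non-witness pair's min-distance down to $2$, yet sparse enough to forbid any length-$2$ $A$-to-$B$ shortcut. Splitting the bypass into two vertices $w_1, w_2$ with asymmetric neighborhoods (one not reaching $B$ directly, the other not reached from $A$ directly) is precisely what enforces the length-$3$ floor for the orthogonality witness while still allowing one-step traversal of the bypass for all other cross-part pairs.
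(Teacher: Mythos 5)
Your proof takes essentially the same route as the paper's: build the OV graph on $A \cup C \cup B$, superimpose the binary-tree DAG construction $DG(\cdot)$ within each part so that intra-part min-distances drop to at most $2$, and introduce two bypass vertices ($x,y$ in the paper, $w_1,w_2$ here) placed between the parts in topological order, with edges chosen so that every pair can be routed through the bypass in two steps \emph{except} an $A$-to-$B$ pair, whose only sub-$3$ route would have to pass directly through a shared coordinate $c$. The asymmetry you highlight---$w_1$ having no direct edge into $B$ and $a\in A$ having no direct edge into $w_2$---is exactly the mechanism the paper uses (the paper gives $y$ an edge to all of $DG(B)$ but gives $x$ none, and gives only $DG(A)\setminus A$, not $A$, a direct edge to $y$). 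Your approximation-to-decision argument ($3/(3/2-\eps) > 2$, so the approximation distinguishes min-diameter $2$ from $\ge 3$) is the standard one and is correct.

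There is one place where your construction deliberately adds edges the paper does not, and it is worth flagging because it appears to \emph{correct} a gap in the paper's own case analysis rather than being a superfluous addition. You add $w_1 \to q$ for every $DG(B)$-internal node $q$; the paper has no $x \to q$ edges into $DG(B)$ at all. Consider a NO instance and a pair $a\in A$, $q\in DG(B)\setminus B$. In the paper's graph, $a$'s out-neighbors are contained in $\{x\}\cup C \cup DG(A)$, while $q$'s only in-neighbors from outside $DG(B)$ is $y$, and $d(a,y)=2$ (since only $DG(A)\setminus A$ reaches $y$ in one step). So $d(a,q)\ge 3$, and since $q$ lies after $a$ in topological order, the min-distance is $\ge 3$---contradicting the paper's claim that the min-diameter is $2$ on a NO instance. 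The paper's enumerated cases $1$--$6$ cover $u\in A, v\in B$ (case 1) and $u\in DG(A)\setminus A, v\in DG(B)$ (case 6), but not $u\in A, v\in DG(B)\setminus B$. Your edge $w_1\to q$ makes $a\to w_1\to q$ a length-$2$ path, closing the gap; and one can check it does not reintroduce a length-$2$ $A$-to-$B$ path in the YES case, since $w_1\to q$ is only for internals, not for $b\in B$. So your version is strictly more careful. I would only ask that you spell out the ``routine case analysis'' for the NO instance rather than waving at it, precisely because this is where the subtlety lies.
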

\begin{proof}

We show how an algorithm that distinguishes between diameter $2$ and $3$ on a DAG allows us to solve the OV problem.
Let $G$ be the OV graph.
We construct a directed graph $G'$ as follows.

Take $G$ (with all its nodes and edges) and add two nodes $x,y$ to it.
Order the nodes in $A$ and add construction $DG(A)$ from the Min-Radius section above.
Similarly, order the nodes in $B$ and add $DG(B)$ and order $C$ and add $DG(C)$.
Finally, add edge $(x,y)$, edges $(a,x),(x,c),(c,y),(y,b)$ for all nodes $a \in DG(A), b \in DG(B), c \in DG(C)$, and for every $a\in DG(A)\setminus A$, add $(a,y)$.

\begin{claim}
The min-diameter of $G'$ is $2$ if $G$ is a NO instance of the OV problem and is at least $3$ otherwise.
\end{claim}

\begin{proof}
First, assume that $G$ is a NO instance and therefore for every pair $a\in A,b \in B$ there is a node $c \in C$ such that both edges $(a,c),(c,b)$ are in $E(G)$ and therefore are also in $E(G')$.
In this case, for all pairs of nodes $u,v \in V(G')$, the min-distance in $G'$ is no more than $2$:
\begin{enumerate} 
\item If $u \in A, v \in B$ then the distance is $2$.
\item If either $u,v \in DG(A)$, or $u,v\in DG(B)$, or $u,v\in DG(C)$, then the distance is at most $2$, by the $DG(\cdot)$ construction.
\item If $u \in DG(A)$ and $v \in DG(C)$ (or vice versa) then the distance is at most $2$ via the path through $x$.
\item If $u \in DG(B)$ and $v \in DG(C)$ (or vice versa) then the distance is at most $2$ via the path through $y$.
\item If $u=x$ then it has distance $1$ to every node in $DG(A) \cup DG(C) \cup \{y\}$ and distance $2$ to every node in $DG(B)$. The $u=y$ case is symmetric, except that the distance to nodes of $DG(A)\setminus A$ is also $1$.
\item If $u\in DG(A)\setminus A$, $v\in DG(B)$ (or vice versa), then the min-distance is $2$ through $y$.
\end{enumerate}
On the other hand, assume that $G$ is a YES instance and let $a\in A, b \in B$ be such that there is no $c \in C$ for which the edges $(a,c)$ and $(c,b)$ are in $E(G)$. 
In this case, there is no path of length $2$ from $a$ to $b$ in $G'$: if the path goes through part $C$ and has length $2$ then it must be of the form $a \to c \to b$ for some $c \in C$ which is a contradiction. If the path goes through $x$ it will have distance at least $3$ since the distance from $x$ to any node in $B$ is exactly $2$.
Finally, if the path goes through $DG(A)\setminus A$, through $y$, then it also has length at least $3$.
Therefore, the min-diameter of $G'$ is at least $3$.
\end{proof}
Our new graph $G'$ has $O(n)$ nodes and $O(n\log{n})$ edges, which implies that a subquadratic algorithm refutes the OV conjecture. 
$G'$ is a DAG by construction: $DG(A), DG(B),DG(C)$ are DAGs and the rest of the comparisons in the topological order are $DG(A)<x<DG(C)<y<DG(B)$.
\end{proof}
\begin{lemma}
\label{lem:MinDiam}
If there is a $(2-\eps)$-approximation algorithm for Min-Diameter on a sparse weighted graph that runs in subquadratic time, then the OV Conjecture is false.
\end{lemma}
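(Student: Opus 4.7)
The plan is to modify the construction of Lemma~\ref{lem:MinDiamDAG} from a DAG to a general weighted directed graph, using edge weights to push the hardness gap from $3/2$ up to $2 - \eps$. Given an OV instance $(A, B)$ with vectors in $\{0,1\}^d$, I will build a sparse weighted graph $G'$ such that in the NO case the Min-Diameter is at most some value $L$, while in the YES case it is at least $2L - O(1)$; taking $L = 2$ yields the claimed factor of $2$ in the limit.

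Concretely, I would start from the tripartite OV graph on $A \cup C \cup B$ and add the $DG(\cdot)$ structures on $A$, $B$, and $C$ from Lemma~\ref{lem:MinDiamDAG} together with their edge-reversed copies, ensuring that within each part every pair of nodes has Min-distance at most $2$. I would then add a small constant number of bridge nodes $x_1, y_1$ (for the $a$-to-$b$ side) and $x_2, y_2$ (for the $b$-to-$a$ side) in the spirit of the $x, y$ nodes of Lemma~\ref{lem:MinDiamDAG}. All OV and $DG$ edges have weight $1$, while the bridge edges $a \to x_1$, $y_1 \to b$, $b \to x_2$, $y_2 \to a$ are assigned weight $2$. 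A non-orthogonal pair $(a,b)$ then has a direct length-$2$ OV path in each direction, giving Min $=2$. For an orthogonal $(a, b)$, the direct paths do not exist, and every alternative path from $a$ to $b$ (and symmetrically from $b$ to $a$) must either go through a bridge, picking up a weight-$2$ edge, or route through a $DG$-helper whose edge into $y_1$ (respectively $y_2$) has weight $2$; in either case the total weight is at least $4$.

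The main obstacle is the tension between ``innocent'' pairs that want small Min-distance and the orthogonal pair that must have large Min-distance. I would resolve it by making the bridges asymmetric --- for instance, $A$ has only outgoing bridge edges toward $C$ while $B$ has only incoming bridge edges from $C$ on the forward side (and symmetrically on the reverse side) --- and by relying on the $DG$ and reversed-$DG$ gadgets inside each part for cheap within-part connections. The proof then proceeds by case analysis over pair types such as $(a_1, a_2)$, $(b_1, b_2)$, $(c_1, c_2)$, $(a, c)$, $(b, c)$, $(a, x_i)$, $(y_i, b)$, verifying in each case that the Min-distance is at most $2$ in both NO and YES cases, and separately ruling out any length-$3$ path of total weight less than $4$ between an orthogonal $(a, b)$ in either direction. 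Since $G'$ has $O(n)$ nodes and $O(n d + n \log n)$ edges with $d = \omega(\log n)$, a subquadratic $(2-\eps)$-approximation for Min-Diameter on $G'$ would yield a subquadratic algorithm for OV, refuting the OV conjecture.
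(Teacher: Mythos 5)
Your plan is genuinely different from the paper's: the paper does \emph{not} reuse the $DG(\cdot)$ gadgets from Lemma~\ref{lem:MinDiamDAG}. Instead it keeps the bare tripartite OV graph on $A \cup C \cup B$, gives the OV edges weight $t/2$, and adds only three hub nodes $x,y,z$ with carefully asymmetric weights ($a\to x$ weight $1$, $x\to a$ weight $t$; $b\to y$ weight $t$, $y\to b$ weight $1$; $c\to z$, $z\to c$ weight $t/2$; $c\to x$ and $y\to c$ weight $1$; $y\to x$ weight $1$). The gap obtained is $t+1$ versus $2t$, which tends to $2$ as the free parameter $t\to\infty$. Crucially, the within-part min-distances in the NO case are $\Theta(t)$, not $O(1)$.

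This is not an incidental stylistic difference: it is forced, and your proposal has a gap exactly where the two approaches diverge. If you keep $DG(A)$ (or $DG(B)$), its internal tree nodes $p$ become cheap shortcuts. In the NO case you must have $\min\bigl(d(p,b),\,d(b,p)\bigr)\le L$ for every tree node $p$ of $DG(A)$ and every $b\in B$. Since $B$-side nodes have no path back into $DG(A)$, this forces $d(p,b)\le L$, which can only be achieved by giving $p$ an outgoing bridge to $B$ of total weight $\le L$. But every leaf $a^*\in A$ has a weight-$1$ edge $a^*\to p$ to some tree node $p$ on its root-to-leaf path (the $0$-nodes in the $DG$ construction), so in the YES case $d(a^*,b^*)\le 1 + d(p,b^*) \le 1 + L$. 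With $L=2$ this gives $d(a^*,b^*)\le 3$, not $\ge 4$, and more generally the gap is capped at roughly $(L+1)/L \to 3/2$ as you scale weights. Your list of cases to check ($(a_1,a_2),(b_1,b_2),\ldots$) omits exactly the $(p_A,b)$ and $(p_B,a)$ pairs where this tension lives, which is why the problem is invisible in your sketch. Conversely, dropping the $p\to(\mathrm{bridge})\to B$ edges breaks the NO case, since then $\min\bigl(d(p,b),d(b,p)\bigr)\ge 3 > 2$.

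This $3/2$ barrier is precisely why Lemma~\ref{lem:MinDiamDAG} (which does use $DG$ gadgets) achieves only a $3/2$ gap, and why the paper's Lemma~\ref{lem:MinDiam} switches to an entirely different construction. To get a factor approaching $2$ you need the \emph{within-part} distances to be comparable to the orthogonal-pair lower bound, i.e.\ $\Theta(t)$, which is what the paper's asymmetric hub weights ($1$ one way, $t$ the other) accomplish; there is then no node simultaneously ``$\le 1$ away from $a^*$'' and ``$\le L$ away from $b^*$''. I'd suggest dropping the $DG$ structures entirely and building the within-part connectivity through weighted hubs in the paper's style.
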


\begin{proof}

We show how an algorithm that distinguishes between diameter $t+1$ and $2t$ on a sparse graph allows us to solve OV.
Let $G$ be the OV graph. We construct a directed graph $G'$ as follows.
Take $G$ (with all its nodes and edges) and add three nodes $x,y,z$ to it.
The edges that were present in $G$ will have weight $t/2$ in $G'$.
We connect $x$ to and from $A$ with edges $(a,x),(x,a)$ for every node $a \in A$, with weights $w(a,x)=1,w(x,a)=t$, and then we add edges $(c,x)$ for every node $c \in C$, with weight $w(c,x)=1$.
We connect $y$ to and from $B$ with edges $(b,y),(y,b)$ for every node $b \in B$, with weights $w(b,y)=t, w(y,b)=1$,  and then we add edges $(y,c)$ for every node $c \in C$, with weight $w(y,c)=1$.
Finally, we connect $z$ to and from $C$ with edges $(c,z),(z,c)$ for every node $c \in C$, with weights $w(c,z)=w(z,c)=t/2$, and we add an edge $(y,x)$ with weight $w(y,x)=1$.

\begin{claim}
The min-diameter of $G'$ is $t+1$ if $G$ is a NO instance of OV and is at least $2t$ otherwise.
\end{claim}

\begin{proof}
First, assume that $G$ is a NO instance and therefore for every pair $a\in A,b \in B$ there is a node $c \in C$ such that both edges $(a,c),(c,b)$ are in $E(G)$ and therefore are also in $E(G')$.
In this case, for all pairs of nodes $u,v \in V(G')$, the min-distance in $G'$ is no more than $2$:
\begin{enumerate} 
\item If $u \in A, v \in B$ then the distance from $u$ to $v$ is $2 t/2 = t$.
\item If either $u,v \in A$, or $u,v\in B$, or $u,v\in C$, then the distance is $t+1$, via $x,y, or z$.
\item If $u \in A$ and $v \in C$ (or vice versa) then the distance from $v$ to $u$ (or vice versa) is at most $t+1$ via the path through $x$.
\item If $u \in B$ and $v \in C$ (or vice versa) then the distance from $u$ to $v$ (or vice versa) is at most $t+1$ via the path through $y$.
\item If $u=x$ then it has min-distance $1$ to every node in $A \cup C \cup \{y\}$, distance $t+1$ from every node $b\in B$ via the path $b \to y \to x$, and distance $t/2+1$ from $z$ via the path $z \to c \to x$ for any $c \in C$. The $u=y$ case is symmetric.
\item If $u=z$ then it has min-distance $t/2$ to every node in $C$ and min-distance $t/2+t/2$ to every node in $A \cup B$.
\end{enumerate}
On the other hand, assume that $G$ is a YES instance and let $a\in A, b \in B$ be such that there is no $c \in C$ for which the edges $(a,c)$ and $(c,b)$ are in $E(G)$. 
In this case, there is no path of length less than $2t$ from $a$ to $b$ in $G'$: if the path does not any of the nodes $x,y,z$ it must be of the form $a \to c \to b$ which is a contradiction, while if it uses $x$ it must be of the form $a \to \cdots \to x \to a' \to c \to \cdots \to \cdots b$ which will have length at least $2t$, the case in which the path goes from $b$ to $a$ and uses $y$ is symmetric, and finally, if we use $z$ we also incur an addition weight of $t$ resulting in length at least $2t$.
Therefore, the min-diameter of $G'$ is at least $2t$.
\end{proof}

\end{proof}

%
%
%
\paragraph{Roundtrip Diameter.}

\begin{lemma}
\label{lem:RoundDiam}
If there is a $(3/2-\eps)$-approximation algorithm for Roundtrip-Diameter on a sparse graph that runs in subquadratic time, then the OV conjecture is false.
\end{lemma}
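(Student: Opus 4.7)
The approach is to reduce Orthogonal Vectors to distinguishing Roundtrip-Diameter $4$ from $6$, lifting the Roditty--Vassilevska W.\ reduction from OV to undirected diameter $2$-vs-$3$~\cite{RV13} by bidirecting every edge so that roundtrip distance equals exactly twice the undirected distance. Given an OV instance $(A,B)$ on vectors in $\{0,1\}^d$ with $d=\omega(\log n)$, I would start with the OV-graph on $A \cup U \cup B$ (where $U=[d]$) with edges $(a,u)$ iff $a[u]=1$ and $(u,b)$ iff $b[u]=1$, add the reverse of every edge, and then adjoin two hub vertices $x$ and $y$: $x$ is joined (in both directions) to every vertex of $A \cup U$, and $y$ is joined to every vertex of $B \cup U$. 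I can assume no all-zero vector occurs in $A \cup B$, since such a vector makes the OV instance trivial.

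Because every edge is bidirected, the roundtrip diameter is exactly twice the undirected diameter, so it suffices to analyze the undirected distances. Within-part pairs ($A$-$A$, $B$-$B$, $U$-$U$) reach each other in $2$ hops via the appropriate hub; cross-part pairs $A$-$U$ and $B$-$U$ are either directly adjacent or go through a hub in $2$ hops; $x$-$y$ has distance $2$ through any $u \in U$; and $a$-$y$, $b$-$x$ pairs have distance $2$ through any coordinate on which the vector is nonzero. Hence the only potentially long distances are across $A \times B$: when $(a,b)$ share a coordinate $u$, the path $a$-$u$-$b$ gives distance $2$, whereas when $(a,b)$ is orthogonal, no such length-$2$ path exists, and every length-$2$ hub route is blocked because $x$ is not adjacent to any vertex of $B$ and $y$ is not adjacent to any vertex of $A$. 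The shortest route is then of the form $a$-$x$-$u$-$b$ (for any $u$ with $b[u]=1$), which has length $3$.

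It follows that if no orthogonal pair exists the roundtrip diameter is exactly $4$, while if an orthogonal pair exists it is at least $6$. A subquadratic $(3/2-\eps)$-approximation algorithm for Roundtrip Diameter would therefore distinguish these two values and thereby solve OV in subquadratic time on a graph with $N=O(nd)$ vertices and edges, giving an $O(n^{2-\eps'})$ algorithm for OV with $d=\omega(\log n)$ and refuting the OV Conjecture. The graph is easily made formally sparse (by subdividing hub edges if desired) without affecting the distance analysis.

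The main task is the short case analysis outlined above, and the only delicate point---rather than a real obstacle---is the deliberate asymmetry in the hub connections: $x$ must be kept disjoint from $B$ and $y$ from $A$, and the two hubs must not be made directly adjacent, because it is exactly this restriction that creates the length-$3$ gap for orthogonal pairs and prevents the diameter from collapsing to $4$ in the YES case.
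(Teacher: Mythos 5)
Your proof is correct and rests on the same key observation as the paper's: bidirecting every edge of an undirected graph makes the roundtrip diameter exactly twice the undirected diameter, so roundtrip-diameter $4$-vs-$6$ inherits the hardness of undirected diameter $2$-vs-$3$ from~\cite{RV13}. The paper's proof simply invokes~\cite{RV13} as a black box, whereas you unroll that reduction explicitly and re-verify the distances, but the underlying argument is the same.
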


\begin{proof}
Any algorithm distinguishing between roundtrip diameter $4$ and $6$ can distinguish between undirected diameter $2$ and $3$ - take the undirected graph, bi-direct the edges and run the algorithm for roundtrip-diameter.
The latter task cannot be done in subquadratic time \cite{RV13}.
\end{proof}

\subsection{Lower Bound for Estimating All eccenricities}

The following construction shows that the $3/2$ factor that we have for Radius and Diameter (in undirected graphs) is unlikely to be achievable if we want to estimate all the eccentricities in the graph. 
Only a $5/3$ approximation is known for this problem in subquadratic time \cite{diametersoda14}.

\begin{reminder}{Theorem~\ref{thm:allecc}}
 A $(5/3-\delta)$ approximation algorithm for the eccentricities of all nodes in undirected sparse graphs that runs in subquadratic time refutes the Orthogonal Vectors Conjecture.
\end{reminder}

\begin{proof}
Let $A,B \subseteq \{0,1\}^d, |A|=|B|=n$ be an instance of OV, we will use it to construct an undirected sparse graph $G$ as follows.
Abusing the notation, construct a set of nodes $A$ that contains a node $a$ for every vector in $A$, and similarly construct a set of nodes $B$ from the vectors $B$.
Add a set of nodes $C$ corresponding to the coordinates $j \in [d]$ and for every vector $v \in A \cup B$ we add an edge $\{v,j\}$ iff $v[j]=0$.
Note that now, there is a $2$-path from a node $a \in A$ to a node $b\in B$ iff the vectors $a,b$ are not orthogonal.
Then, we also add a set $B'$ that contains a copy $b'$ of every node $b \in B$, such that $b'$ is only connected with one edge to $b$.
Finally, add two nodes $x,y$, connect $x$ to every node in $A$, connect $y$ to every node in $C$, and add the edge $\{x,y\}$.

We now claim that for every node $a\in A$, the eccentricity is $5$ if the vector $a$ is orthogonal to some vector $b \in B$, and it is $3$ otherwise.

To prove our claim we first show that any node $a \in A$ will have distance $\leq 3$ to all the nodes in $A \cup C \{x,y\}$:
there is a $2$-path via $x$ to every other node in $C$ and there is a $3$-path to every node in $C$ via $x,y$.
Now, on the one hand, if $a \in A$ is not orthogonal to any $b\in B$, then in our graph, there will be a $2$-path via $C$ from $a$ to every node in $B$ and therefore there will be a $3$-path to every node in $B'$.
Thus, in this case, the eccentricity is $3$.
On the other hand, if $a \in A$ is orthogonal to some $b \in B$, then the only paths from $a$ to $b$ have length $4$, and therefore the distance from $a$ to $b'$ is $5$, and the eccentricity of $a$ is at least $5$.

This claim shows that an algorithm estimating all the eccentricities within $(5/3-\delta)$ allows us to solve OV by this reduction.
It is important to note that the radius and diameter will not be determined by nodes in $A$ in our graph, and therefore a better than $5/3$ approximation for those parameter (which is achievable in subquadratic time \cite{RV13}) is not enough to solve OV.
Indeed, the node $x$ will have eccentricity $4$ regardless of the OV instance.

The number of nodes in the graph we constructed is $O(n)$ and the number of edges is $O(nd)$.
It can be easily turned into a sparse graph on $O(nd)$ nodes.
Thus, a subquadratic $(5/3-\delta)$ approximation allows us to solve OV in time $(n^{2-\eps} \cdot d^{2-\eps})$, for some $\eps>0$, which is enough to refute the OV conjecture.

\end{proof}

Our lower bounds for Diameter are summarized in Table~\ref{tab:res-diam}.

\begin{table*}\centering\small
\begin{tabular}{|c | c | c | c|}
\hline 
\multicolumn{4}{|c|}{Diameter Variants} \\
\hline
Problem & Definition & Upper Bound & OV Conjecture \\
\hline
  \UndirectedDiameter{} &
  $\max\limits_{u,v} d(u,v)$ &
  $3/2$ in $\tO(m\sqrt{n})$ [\cite{RV13}] &
  $3/2$  [\cite{RV13}]\\
\hline
  \MaxDiameter{} &
  $\max\limits_{u,v} d(u \to v)$ &
  $3/2$ in $\tO(m\sqrt{n})$ [\cite{RV13}] &
  $3/2$ [\cite{RV13}] \\
\hline
  \MinDiameter{} &
  $\max\limits_{u,v} \min\{d(u \to v), d(v \to u)\}$ &
  $n^\epsilon$ in $\tO(mn^{1-\epsilon})$ [Lem~\ref{lem:min-diameter-upper}] &
  $2$ on weighted  [Lem~\ref{lem:MinDiam}] \\
\hline
  \MinDiameter{} on DAGs &
  $\max\limits_{u<v} d(u \to v)$ &
  $2$ in $\tO(m)$ [Thm~\ref{thm:min-diameter-dag-upper}] &
  $3/2$  [Lem~\ref{lem:MinDiamDAG}] \\
\hline
  \RoundtripDiameter{} &
  $\max\limits_{u,v} \{d(u \to v) + d(v \to u)\}$ &
  $2$ in $\tO(m)$ [metric]&
  $3/2$ [Lem~\ref{lem:RoundDiam}] \\
\hline
\end{tabular}
\caption{Our Bounds for Various Diameter Problems}
\label{tab:res-diam}
\end{table*}

\end{document}